\documentclass[a4paper,11pt]{article}


\usepackage{a4wide}
\usepackage[english]{babel}
\usepackage{amssymb,amsmath,amscd,amsfonts,amsthm,bbm,color}
\usepackage{hyperref}
\usepackage{graphicx}
\usepackage{tikz,pgfplots}

\renewcommand{\cal}{\mathcal}

\newcommand{\toaslong}{\xrightarrow[n\to\infty]{\text{a.s.}}}

\DeclareMathOperator{\var}{\mathbb Var}

\DeclareMathOperator{\dom}{{\cal D}}
\DeclareMathOperator{\ran}{{\cal R}}
\DeclareMathOperator{\tr}{Tr}

\DeclareMathOperator{\support}{supp}
\DeclareMathOperator{\rank}{rank}

\newcommand{\1}{\mathbbm 1}

\newcommand{\NN}{\mathbb N}
\newcommand{\ZZ}{\mathbb Z}

\newcommand{\RR}{\mathbb R}
\newcommand{\CC}{\mathbb C}
\newcommand{\PP}{\mathbb P}
\newcommand{\EE}{\mathbb E}

\newcommand{\bs}{\boldsymbol}

\theoremstyle{plain} 

\newtheorem{theorem}{Theorem}[section]
\newtheorem{lemma}{Lemma}[section]

\newtheorem{proposition}{Proposition}[section]

\newcommand{\tQ}{{\widetilde Q}}
\newcommand{\tS}{{\widetilde S}}
\newcommand{\td}{{\tilde\varphi}}

\newcommand{\bd}{{\bs\varphi}}
\newcommand{\tbd}{{\tilde{\bs\varphi}}}

\newcommand{\bz}{{\bs\zeta}}
\newcommand{\tbz}{{\tilde{\bs\zeta}}}

\renewcommand{\SS}{{\bs S}}
\newcommand{\tSS}{\widetilde{\bs S}}

\newcommand{\lf}{{\|}}
\newcommand{\rf}{\|_{\text{fro}}}

\begin{document}

\title
{The Shannon's mutual information of a multiple antenna time 
and frequency dependent channel: an ergodic operator approach}

\author{W. Hachem\thanks{CNRS LTCI; Telecom ParisTech, France 
(\texttt{walid.hachem@telecom-paristech.fr}),}, 
A. Moustakas\thanks{National \& Capodistrian University of Athens, 
Department of Physics, Greece (\texttt{arislm@phys.uoa.gr}),},  
and L. Pastur\thanks{Institute of Low Temperature Physics, Kharkiv, Ukraine
(\texttt{pastur@ilt.kharkov.ua}).}}

\date{\today}
\maketitle

\begin{abstract} 
Consider a random non-centered multiple antenna radio transmission channel.
Assume that the deterministic part of the channel is itself frequency
selective, and that the random multipath part is represented by an ergodic
stationary vector process.  In the Hilbert space $l^2(\ZZ)$, one can associate
to this channel a random ergodic self-adjoint operator having a so-called
Integrated Density of States (IDS). Shannon's mutual information per receive
antenna of this channel coincides then with the integral of a $\log$ function
with respect to the IDS.  In this paper, it is shown that when the numbers of
antennas at the transmitter and at the receiver tend to infinity at the same
rate, the mutual information per receive antenna tends to a quantity that can
be identified and, in fact, is closely related to that obtained within the
random matrix approach~\cite{tel-95}.  This result can be obtained by analyzing
the behavior of the Stieltjes transform of the IDS in the regime of the large
numbers of antennas.
\end{abstract}

\section{Introduction and problem statement}
\label{intro}

In the landmark papers by Foschini and Gans \cite{Foschini1998_BLAST1} and by
Telatar \cite{tel-95} the great promise of the use of multiple
transmit and receive antennas was presented and established. The importance of
such Multiple Input Multiple Output (MIMO) links is based on the fact that
parallel data streams emanating from different transmit antennas can be decoded
simultaneously from the receive array, thus making the throughput scale
linearly with the number of transmit antennas. This scaling property is
important in the quest to meet the expected thousand-fold increase of wireless
network capacity in the coming years\cite{Qualcomm} . At the same time, the
necessary transmit power per data stream is reduced by the same factor, thus
also addressing energy-related pollution issues, which are becoming major
societal and economical concerns \cite{EARTH_D23,
Chen2011_FundamentalTradeoffs_Green}. In fact, so-called {\em massive} MIMO
systems of a few hundred antenna arrays \cite{Larsson2014, Rusek2013} have been
identified as a key enabling technology for the next generation 5G wireless
networks \cite{Andrews2014_WhatWill5GBe}, promising unprecedented data transfer
increases.

In this paper, we address the calculation of Shannon's mutual information of
MIMO channels under some general assumptions on the channel statistics who have
been considered only partially in the literature. To start with, we shall 
assume that
the channel is subjected to time correlations that are more realistic than the
commonly assumed multi-block-fading
model~\cite{Caire1999_PowerControlFadingChannels, Biglieri1998_FadingChannels}.
A second assumption on the model is related to the so-called frequency
selectivity of the channel due to the delay spread induced by the
reverberations of the signal from buildings, walls, etc.  Typically, this
effect is modelled using a tap-delay line system with different tap
variances~\cite{Biglieri1998_FadingChannels, Biglieri2004_MIMO_review}. A third
component of the channel model is related with the so-called Ricean (or
``deterministic'') part, due to \emph{e.g.}~a line of sight or a specular
component. In this work, we consider in most generality that the deterministic
part of the channel is also frequency selective~\cite{almers2007survey}.


To be more specific, let us denote by $T$ the number of antennas at the 
transmitter and by $N$ the number of antennas at the receiver.
Then the $\CC^N$-valued signal received at time $k$ according to our model is
\[
Y(k) = \sum_{\ell=-L}^L H(k,\ell) S(\ell) + V(k)
\]
where the processes $\{S(k)\}_{k \in\ZZ}$ and $\{V(k)\}_{k \in\ZZ}$ represent
respectively the $\CC^T$-valued input signal fed to the channel and the
$\CC^N$-valued additive noise.
The channel with $2L+1$ matrix coefficients is represented
$\CC^{N\times (2L+1)T}$-valued random process
$\{{\bs H}(k) = [ H(k,k-L), \ldots, H(k,k+L) ]\}_{k\in\ZZ}$, assumed to be
Gaussian, stationary, ergodic, and generally non-centered.
It is assumed that the processes $\{S(k)\}$, $\{V(k)\}$ and
$\{{\bs H}(k)\}$ are mutually independent, and that $\{S(k)\}$
(respectively $\{V(k)\}$) is an independent process such that
$S(k) \sim \cal{CN}(0, I_{T})$ (resp.~$V(k) \sim \cal{CN}(0, I_{N}))$ for
any $k\in\ZZ$.

As is well-known, the Doppler effect induced by the mobility of the
communicating terminals determines the form of the covariance function of the
process $\{{\bs H}(k)\}$.
The multipath effect that induce the frequency selectivity is captured by 
the $2L+1$ matrix channel coefficients
which are subjected in the practical situations to a certain power profile.
In addition, the fact that $\EE {\bs H}(0)$ can be non zero and can be also  
frequency selective is due \emph{e.g.} a line of sight or a specular component. 
We shall formulate our assumptions on the channel more precisely in 
Section~\ref{results} below.  \\

The signal $Y^n = [Y(-n)^T, \ldots, Y(n)^T]^T$ observed
during the time window $(-n,\ldots,n)$ satisfies the equation
$Y^n = H^n S^n + V^n$ where $V^n = [V(-n)^T, \ldots, V(n)^T]^T$,
$S^n = [S(-n-L)^T, \ldots, S(n+L)]^T$, and
\begin{equation}
\label{HnHn}
H^n = \begin{bmatrix}
H(-n,-n-L) & \cdots & H(-n,-n+L) & & 0\\
            & \ddots &        & \ddots \\
   0        &        & H(n,n-L) & \cdots & H(n,n+L)
\end{bmatrix}
\end{equation}
is a $(2n+1)N \times (2n+2L+1)T$ matrix.
Our goal is to study the mutual information of this channel assuming that it is
perfectly known at the receiver.  More precisely, we study $I(S; (Y,H))$, the
mutual information between $\{ S(k) \}$ and the couple 
$( \{ Y(k) \}, \{ {\bs H}(k) \})$. It is given by the equation
\[
I(S; (Y,H)) = \limsup_n \frac{1}{2n+1} I( S^n; (Y^n, H^n))
\]
where 
\[
I( S^n; (Y^n, H^n)) = \EE \log\det( H^n H^{n*} + I) 
\]
is the mutual information between $S^n$ and $(Y^n, H^n)$ 
(see~\cite[Chap. 8]{gray-entropy11}, \cite{tel-95} and 
Section~\ref{ergo-op} below). \\ 

We shall see that the natural way to tackle this problem is to consider 
the MIMO channel as a \emph{random operator} represented
by the doubly infinite matrix $H = [ H(k,\ell) ]_{k,\ell\in\ZZ}$ and acting
on the Hilbert space $l^2(\ZZ)$ of the doubly infinite square summable 
sequences. Due to the ergodicity of $({\bs H}(k))$, this operator
that we also denote as $H$ is \emph{ergodic} in the sense of \cite{pas-fig}.
By the ergodicity, it turns out that the self-adjoint operator $HH^*$ where
$H^*$ is the adjoint of $H$ has a so-called \emph{Integrated Density of
States} (IDS). Specifically, there exists a positive deterministic measure
$\mu$ of total mass one such that on a probability one set,
\[
\frac{1}{(2n+1)N} \tr g(H^n H^{n*}) \xrightarrow[n\to\infty]{}
\int g(\lambda) \, \mu(d\lambda)
\]
for any continuous and bounded real function $g$.
The IDS is the distribution
function of $\mu$. The crucial observation is that this convergence leads to
the convergence of $I( S^n; (Y^n, H^n)) / (2n+1)$, and the limit is
\[
I(S; (Y,H)) = N \int \log(1+\lambda) \, \mu(d\lambda)
\]
as it will be shown below. Our goal is then to study the behavior of the
integral at the right hand side. Unfortunately, a few can be said about this
behavior in the general situation. To circumvent this problem, one needs to
resort to a certain asymptotic regime.

In this paper, consistently with an established practice in the evaluation
of the mutual information of MIMO channels, we consider an asymptotic regime 
where the numbers of antennas $N$ and $T$ tend to infinity at the same 
rate\footnote{We note that other asymptotic regimes can also be studied by
the techniques of this paper, for instance
the one where $N$ and $T$ are fixed and where $L$ tends to infinity.}.
A central object of study will be the Stieltjes transform of the 
measure $\mu$, that is the complex analytical function
\[
\bs m(z) = \int \frac{1}{\lambda - z} \mu(d\lambda)
\]
on the upper half-plane $\CC_+ = \{ z \,: \, \Im z > 0 \}$. This function
completely characterizes $\mu$ and is intimately connected with the resolvent
\[
Q(z) = (HH^* - z I)^{-1}
\]
of the operator $HH^*$. Specifically, considering the matrix block
representation $Q(z) = [ Q(k,\ell)(z) ]_{k,\ell\in\ZZ}$ of the resolvent where
the blocks $Q(k,\ell)(z)$ are $N\times N$ complex matrices, it holds that
\[
\bs m(z) = \frac{\tr\EE Q(0,0)(z)}{N} .
\]
The core of our analysis consists in studying the behavior of the right hand
side of this relation in the large $N,T$ regime.
Re-denoting $\mu$ and $\bs m(z)$ as $\mu_T$ and $\bs m_T(z)$ to stress the
dependency on $T$, it turns out that there exists a sequence of probability
measures $\bs \pi_T$ which approximates $\mu_T$ in the sense that
\[
\bs m_T(z) - \int \frac{1}{\lambda - z} \bs\pi_T(d\lambda)
\xrightarrow[T\to\infty]{} 0 .
\]
In addition, we can associate to each $\bs\pi_T$ a positive
number $\cal I_T$ of order one for large $T$, and such that
\[
\frac 1N I(S; (Y,H)) - \cal I_T \xrightarrow[T\to\infty]{} 0.
\]
Thus, the mutual information increases at a linear scale with the number
of antennas at one side of the transmission. 
The Stieltjes transform of $\bs\pi_T$ as well as $\cal I_T$ satisfy
a system of equations that can be easily implemented on a computer.
Solving for $\cal I_T$, one can evaluate the impact of the various statistical
parameters of the channel model on the mutual information. \\

In the literature dealing with the ergodic mutual information of MIMO channels,
it has been frequently assumed that the channel was a
\emph{frequency non selective}
channel which satisfies the simplifying \emph{block-fading} assumption
(see \emph{e.g.}~\cite{tel-95}).
In the language of the ergodic operator theory, the operator $HH^*$
corresponding
to such channels is a block diagonal operator with independent and identically
distributed diagonal blocks, and the existence of the IDS is immediate. In
order to
approximate the Stieltjes transform of $\mu_T$ in the regime of the large
number of antennas, one only needs to study the behavior of
\[
\frac 1N \tr \EE Q(0,0)(z) = \frac 1N \tr \EE (H(0,0) H(0,0)^* - z I_N)^{-1}.
\]
The large $N,T$ behavior of this object can in fact be studied with the help of
random matrix theory techniques, without making any explicit reference to
ergodic operators.  In this framework, a
number of works studied $N^{-1} \tr \EE Q(0,0)(z)$ with more and more
sophisticated statistical model for $H(0,0)$. Among these, centered models
with the so-called
single-sided or double-sided correlations were studied in
\cite{chua-etal-it02, mes-fon-pag-jsac03, mou-sim-sen-it03, hac-etal-it08}.
A more general centered model was considered in \cite{tul-loz-ver-it05}.
Non-centered models were considered in \cite{mou-sim-jphys05,
tar-it08, dum-etal-it10} among others.
One contribution of our paper is that we consider frequency selective channels
and a model for the Doppler effect more realistic than the block-fading
model. \\

Ergodic operator theory (see \emph{e.g.}~\cite{pas-fig}) has aroused a
considerable interest in the fields of operator theory, quantum physics, and
statistical mechanics where these operators are frequently used to model the
Hamiltonians of randomly disordered systems.
In this framework, the contribution~\cite{kho-pas-cmp93} studied the
asymptotic behavior of the IDS of certain random Hermitian operators when
some design parameter is made converge to infinity. For the models considered
in~\cite{kho-pas-cmp93}, it is shown that the limiting IDS coincides with
that of a deformed large Wigner matrix, well known in random matrix theory.
The connections between ergodic operator theory and random matrix theory
are further explored in~\cite{pas-aa11}.
In this paper, we adopt the general approach of~\cite{kho-pas-cmp93,pas-aa11},
dealing with random operator models \`a la Marchenko-Pastur. In order to
perform
our asymptotic analysis, we rely on two tools that belong to the arsenal of
random matrix theory, as shown in~\cite{pas-livre} and the references therein:
an integration by parts formula for the expectation of functions of Gaussian
vectors (already used in~\cite{kho-pas-cmp93}), and the Poincar\'e-Nash
inequality to control our variances. \\

In the communication theory literature, the ergodic operator formalism for
studying the mutual information of multi-antenna channels was used in
\cite{kaf-isit02,kaf-ssp05}. These contributions brought to the fore the
relation between the mutual information and the IDS without elaborating on
the properties of the latter. \\

In the next section of this paper, we formulate precisely our assumptions and
we state our results. These results are illustrated in 
Section~\ref{sec:Numerics} by some computer simulations. An overview of 
the ergodic operator theory is provided in Section~\ref{ergo-op}. 
The proofs of the main results are provided in Sections~\ref{prf-det-eq} 
and~\ref{prf-approx}.

\section{Assumptions and the results}
\label{results}

We start by introducing the exact statistical model of our channel.
Since the number of antennas will tend to infinity, the channel will
be described by a sequence of random operators on $l^2(\ZZ)$ indexed by the
parameter $T$.

Given a probability space $(\Omega, \cal F, \PP)$ and two sequences
of positive integers $( N(T) )_{T\in\NN}$ and $(L(T))_{T\in\NN}$, consider
the sequence of random operators
\[
H_T = A_T + X_T
\]
with domain $\dom(H_T) = \cal K$, the manifold of vectors of $l^2(\ZZ)$
with finite support.
The operator $A_T$ is a deterministic operator whose matrix represented
in the canonical basis of $l^2(\ZZ)$ by the matrix
$A_T = [ A_T(k-\ell) ]_{k,\ell\in\ZZ}$ where the block $A_T(k)$ is a
$N(T) \times T$ complex matrix, and where the sequence
$(\ldots, A_T(-1), A_T(0), A_T(1),\ldots$ is supported by
$\{ -L(T), \ldots, L(T) \}$. Since the blocks of $A_T$ are identical along
the diagonals, $A_T$ is a convolution operator~\cite{bot-sil-toep}. \\
The operator $X_T$ is a 
$\cal K$-defined random operator on
$l^2(\ZZ)$ (in the sense of~\cite[Sec.~I.1.B]{pas-fig}, see
Section~\ref{ergo-op} below) whose matrix representation in the canonical
basis of $l^2(\ZZ)$ is the band matrix
\begin{align*}
X_T &=
\left[\begin{array}{lllllll}
\ddots  &         & \ddots             & \ddots & \ddots
& \!\!\!\!\!\!\!\!\!\!\!\!\!\ddots & 0 \\
        & \ddots  & X_T(0,-1) & X_T(0,0) & X_T(0,1) &
\ddots & \\
  0     &                    & \ddots & \!\!\!\!\!\!\ddots
       & \!\!\!\!\!\!\ddots & \!\!\!\!\!\!\ddots &  \ddots
\end{array}\right] \\
&= \begin{bmatrix} X_T(k,\ell),
k,\ell \in \ZZ, |k-\ell| \leq L(T) \end{bmatrix} .
\end{align*}
The $N(T) \times T$ random complex matrix $X_T(k,\ell)$ is written as
\[
X_T(k,\ell) = [ X_{T,n,t}(k,\ell), n=0:(N(T)-1), t =0:(T-1) ]
\]
and satisfies
\[
X_T(k,\ell) = \frac{1}{\sqrt{T}} \phi_T(k-\ell) W_T(k,\ell) .
\]
Here $\phi_T : \ZZ \to [0,\infty)$ is a real valued function supported by the
set $\{ -L(T),\ldots, L(T) \}$, and
$( W_T(k,\ell) =
[W_{T,n,t}(k,\ell), n=0:(N(T)-1), t =0:(T-1)])_{k,\ell \in\ZZ}$ is a complex
Gaussian circularly symmetric centered random field such that
\[
\EE [ W_{T,n_1, t_1}(k_1, \ell_1)
 \bar W_{T,n_2, t_2}(k_2, \ell_2) ] =
\delta_{n_1, n_2} \delta_{t_1 , t_2} \delta_{k_1 - \ell_1 , k_2 - \ell_2}
\gamma_T(k_1 - k_2)
\]
where the covariance function $\gamma_T(k)$ satisfies $\gamma_T(0) = 1$ without
generality loss.

The operator $A_T$ models the deterministic part of the channel. The
function $\phi_T$ models the multipath amplitude profile, while the
covariance function $\gamma_T$ is related to the Doppler effect. As is well
known, the function $\gamma_T(k)$ converges quickly to zero when the mobile
terminal velocity is high.

The following set of assumptions will be needed:
\begin{enumerate}
\item\label{as-reg}
$\displaystyle{ 0 < \liminf_{T\to\infty} N(T)/T \leq
\sup_{T} N(T)/T < \infty}$. We write $\bs c = \sup_T N(T)/T$.

\item\label{power}
$\displaystyle{\sigma_T^2 = \sum_{\ell} \phi_T(\ell)^2}$ satisfies
$0 < \liminf_T \sigma^2_T \leq \sup_T \sigma^2_T < \infty$.
We let $\bs\sigma^2 = \sup_T \sigma^2_T$.

\item\label{Tcoh}
$\displaystyle{\bs g = \sup_T \sum_{\ell} | \gamma_T(\ell) |}$
is finite. 

\item\label{bound-A}
$\displaystyle{\bs a = \sup_T \sum_{\ell=-L(T)}^{L(T)} \| A_T(\ell) \|}$
is finite, where $\|\cdot\|$ is the spectral norm.

\end{enumerate}

The asymptotic regime described by the above assumption will be concisely
denoted ``$T\to\infty$''.
Notice that there is no bound on the number of channel coefficients
$2L(T)+1$.

Let us comment these assumptions.  The parameter $\sigma_T^2$ in
Assumption~\ref{power} is the part of the received power due to the centered
part of the channel.  The practical interpretation of Assumption~\ref{Tcoh} is
that the so-called coherence time of the
channel~\cite{Biglieri1998_FadingChannels} does not grow with $T$. At this
point, we stress the importance of the fact that our results on the mutual
information assume the channel to be perfectly known at the receiver. In
practice, this channel needs to be estimated, and this task will be getting
harder as $T$ grows if the coherence time is kept fixed (see however the
discussion around Fig.~\ref{fig:MutualInfo_rho_num} below).  We believe that
relaxing Assumption~\ref{Tcoh} requires other mathematical tools than those
used in this paper.  Finally, Assumption~\ref{bound-A} could be certainly
weakened and replaced with \emph{e.g.}~a bound on the Euclidean norms of the
columns and the rows of the matrices $A(\ell)$ at the expense of a more
involved proof for Theorem~\ref{approx} below. We also point out that the
convergence rates obtained in the proof of this theorem can be improved.  We
chose to keep Assumption~\ref{bound-A} and the present proof for simplicity.
Note also that we assume the Gaussian distribution of various random variables
processes above largely for the simplicity of the corresponding proofs. By
using a version of the so-called interpolation trick (see \cite[Sections
18.3--18.4]{pas-livre}), our results can be extended to the case where the
corresponding ergodic random processes are not necessarily Gaussian just having
the same covariance and with a certain number finite moments and sufficiently
fast decaying correlations, although the respective proofs of the results
become rather involved.  \\

As mentioned in the introduction, the Stieltjes transform of a probability
measure $\nu$ on $\RR$ plays a fundamental role in this paper. This is the
function
\[
\bs s(z) = \int \frac{1}{\lambda-z} \nu(d\lambda)
\]
defined on $\CC_+$. This function is \emph{i)}~holomorphic on
$\CC_+$, \emph{ii)}~it satisfies $\bs s(z) \in \CC_+$
for any $z \in \CC_+$, and
\emph{iii)}~$\lim_{y\to\infty} (-\imath y) \bs s(\imath y) = 1$.
In addition, if $\nu$ is supported by $\RR_+ = [0,\infty)$, then
\emph{iv)}~$\Im(z \bs s(z)) \geq 0$ for any $z \in \CC_+$.
Conversely, any function $\bs s(z)$ satisfying
\emph{i)}--\emph{iv)} is the Stieltjes Transform of a probability measure
supported by $\RR_+$ \cite{Krein77}.
Observe that the Stieltjes Transform of $\nu$ can be trivially extended
from $\CC_+$ to $\CC - \support(\nu)$ where $\support(\nu)$ is the support of
$\nu$.
Finally, ``probability measure'' can be replaced with ``positive measure
$\nu$ such that $0 < \nu(\RR) < \infty$'' in the preceding statements if
we replace \emph{iii)} with
$\lim_{y\to\infty} (-\imath y) \bs s(\imath y) = \nu(\RR)$.

We shall now state our results.
Due to the fact that $\gamma_T$ is summable, the Gaussian stationary
random process
\[
\{{\bs H}_T(k) = [ H_T(k,k-L(T)), \ldots, H_T(k, k+L(T)) ]\}_{k\in\ZZ}
\]
is ergodic. This makes the operator $H_T$ ergodic in a sense made precise
below. The characterization of the mutual information in the framework of
ergodic operators is given by the two following results.

\begin{proposition}
\label{ids}
On a probability one set, the operator $H_T$ is closable. Still denoting
by $H_T$ the closure of this operator and by $H_T^*$ its adjoint, the positive
self-adjoint operator $H_T H_T^*$ is ergodic and has an IDS defining a
probability measure $\mu_T$. Denoting by $Q_T(z) = (H_T H_T^* - z I)^{-1} =
[ Q_T(k,\ell)(z) ]_{k,\ell\in\ZZ}$ the resolvent of $H_T H_T^*$ for
$z\in\CC_+$, where the blocks $Q_T(k,\ell)(z)$ are $N\times N$ matrices,
the Stieltjes transform of $\mu_T$ is $\bs m_T(z) = N^{-1} \EE\tr Q_T(0,0)(z)$.
\end{proposition}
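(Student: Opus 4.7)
My plan is to verify the three claims in sequence by combining the band structure of $H_T$, the stationarity of $\{{\bs H}_T(k)\}$, and the general theory of ergodic self-adjoint operators recalled in Section~\ref{ergo-op} from \cite{pas-fig}.

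\textbf{Closability.} Assumption~\ref{bound-A} gives $\|A_T\| \leq \sum_\ell \|A_T(\ell)\| \leq \bs a$, so $A_T$ already extends to a bounded operator between the appropriate $T$-block and $N$-block copies of $l^2(\ZZ)$. On the probability-one event that all the Gaussian entries $W_{T,n,t}(k,\ell)$ are finite (a countable intersection of full-measure events), the bandwidth-$(2L(T)+1)$ matrix $X_T$ sends $\cal K$ into itself, and so does its formal adjoint. Thus $\cal K \subseteq \dom(X_T^*)$, this domain is dense, and $X_T$ (hence $H_T = A_T + X_T$) is closable. Denoting its closure again by $H_T$, the product $H_T H_T^*$ is then a positive self-adjoint operator by von Neumann's theorem.

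\textbf{Ergodicity and existence of the IDS.} Let $U_N$ and $U_T$ be the unit block-shifts on the $N$-block and $T$-block copies of $l^2(\ZZ)$, respectively, and let $\theta$ be the induced shift on $(\Omega,\cal F,\PP)$. Since $A_T(k-\ell)$ depends only on $k-\ell$ and the Gaussian field satisfies $X_T(k,\ell)(\omega) = X_T(k-1,\ell-1)(\theta\omega)$ by stationarity, we obtain the covariance $U_N\, H_T(\omega)\, U_T^* = H_T(\theta\omega)$, and therefore
\[
U_N\, H_T(\omega) H_T(\omega)^*\, U_N^* \;=\; H_T(\theta\omega) H_T(\theta\omega)^*.
\]
The Gaussian shift $\theta$ is mixing and thus ergodic because $\gamma_T$ is summable (Assumption~\ref{Tcoh}), so $H_T H_T^*$ fits the ergodic operator framework of \cite[Sec.~I.1.B]{pas-fig}. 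The general existence theorem in that framework produces a deterministic probability measure $\mu_T$ on $[0,\infty)$ characterized by
\[
\int g(\lambda)\, \mu_T(d\lambda) \;=\; \frac{1}{N}\sum_{n=0}^{N-1}\EE\bigl\langle e_{0,n},\, g(H_T H_T^*)\, e_{0,n}\bigr\rangle \;=\; \frac{1}{N}\,\tr\, \EE\bigl[g(H_T H_T^*)\bigr](0,0)
\]
for every bounded continuous $g$, where $e_{0,n}$ denotes the $n$-th basis vector inside the $0$-th $N$-block; this is the IDS.

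\textbf{Stieltjes transform and main obstacle.} Inserting the bounded continuous test function $g_z(\lambda) = (\lambda-z)^{-1}$ (bounded on $\RR_+$ by $|\Im z|^{-1}$ since $z\in\CC_+$) into the above identity yields the claimed formula $\bs m_T(z) = N^{-1}\tr\EE Q_T(0,0)(z)$. The only real obstacle I foresee is organizing the probability-one events so that closability, the covariance relation, and the spectral calculus of $H_T H_T^*$ hold simultaneously, and then verifying that the band operator $H_T$ qualifies as a $\cal K$-defined random operator in the sense of \cite{pas-fig} so that their IDS theorem applies verbatim; once these measurability and bookkeeping issues are dispatched, no additional work is required.
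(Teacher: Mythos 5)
Your proposal is correct and follows essentially the same route as the paper: closability via showing $\cal K \subset \dom(H_T^*)$ (the paper does this by writing out the domain of $H^*$ explicitly, you by noting the band operator and its formal adjoint map $\cal K$ into itself), positivity and self-adjointness of $H_T H_T^*$ from von Neumann's theorem, ergodicity via the covariance relation $U_N H_T(\omega) U_N^* = H_T(\theta\omega)$ combined with the ergodicity/mixing of the Gaussian shift forced by summability of $\gamma_T$, existence of the IDS from the Pastur–Figotin theorem for band-structured ergodic operators, and the Stieltjes-transform formula by plugging the bounded continuous resolvent kernel $g_z(\lambda)=(\lambda-z)^{-1}$ into the IDS identity. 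The only cosmetic difference is that the paper first works out the $N=T=1$ scalar case explicitly and then lifts to blocks, while you state the block version directly.
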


\begin{theorem}
\label{I-ids}
The sequence $\{I( S^n; (Y^n, H^n))/(2n+1)\}$ converges as $n\to\infty$,
and its limit is the mutual information
$\displaystyle{I_T(S; (Y,H)) =
N \int \log(1+\lambda) \, \mu_T(d\lambda) < \infty}$.
\end{theorem}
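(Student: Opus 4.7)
The plan is to reduce Theorem~\ref{I-ids} to Proposition~\ref{ids} via the spectral identity
\[
I(S^n;(Y^n,H^n)) = \EE \log\det(I + H^n H^{n*}) = \EE \tr g(H^n H^{n*}), \qquad g(\lambda):=\log(1+\lambda),
\]
so that $\frac{1}{2n+1}I(S^n;(Y^n,H^n)) = \frac{N}{(2n+1)N}\,\EE\tr g(H^n H^{n*})$. Proposition~\ref{ids} already provides, on a probability one set, $\frac{1}{(2n+1)N}\tr h(H^n H^{n*}) \to \int h\, d\mu_T$ for every bounded continuous $h$. What remains is therefore to lift this convergence from bounded $h$ to the unbounded $g$, and from almost sure convergence to convergence of expectations; this is essentially a uniform integrability argument.

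I would introduce the truncation $g_M:=g\wedge M$, which is bounded and continuous. For the truncated part, Proposition~\ref{ids} combined with dominated convergence immediately gives $\frac{1}{(2n+1)N}\EE\tr g_M(H^n H^{n*}) \to \int g_M\, d\mu_T$. For the tail $g-g_M$, the basic asymptotic $\log(1+\lambda)/\lambda\to 0$ as $\lambda\to\infty$ lets me choose $M=M(\varepsilon)$ so that $0\leq (g-g_M)(\lambda)\leq \varepsilon\lambda$ for every $\lambda\geq 0$. Hence
\[
\EE\frac{1}{(2n+1)N}\tr(g-g_M)(H^n H^{n*}) \;\leq\; \varepsilon\cdot \EE\frac{\tr H^n H^{n*}}{(2n+1)N}, \qquad \int(g-g_M)\, d\mu_T\leq \varepsilon \int\lambda\, d\mu_T.
\]

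The key auxiliary input is a uniform-in-$n$ bound on the first moment $\EE\tr(H^n H^{n*})/((2n+1)N)$. Writing $H(k,\ell)=A_T(k-\ell)+X_T(k,\ell)$, centering of $X_T$ kills the cross terms, and
\[
\EE\tr H^n H^{n*} = \sum_{k=-n}^{n}\sum_{\ell : |k-\ell|\leq L}\bigl(\tr A_T(k-\ell)A_T(k-\ell)^* + \EE\tr X_T(k,\ell)X_T(k,\ell)^*\bigr).
\]
A direct computation with the covariance of $W_T$ gives $\EE\tr X_T(k,\ell)X_T(k,\ell)^* = N\phi_T(k-\ell)^2$, summing in $\ell$ to $N\sigma_T^2\leq N\bs\sigma^2$ per $k$, and $\tr A_T(\ell)A_T(\ell)^*\leq N\|A_T(\ell)\|^2$ summed over $\ell$ is bounded by $N\bs a^2$ via Assumption~\ref{bound-A}. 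Thus $\EE\tr H^n H^{n*}\leq C(2n+1)N$ uniformly in $n$, with $C=\bs\sigma^2+\bs a^2$. By Fatou applied to the almost sure convergence of Proposition~\ref{ids}, this also yields $\int\lambda\, d\mu_T\leq C$, so $\int g\, d\mu_T\leq \int\lambda\, d\mu_T<\infty$, and monotone convergence gives $\int g_M\, d\mu_T \nearrow \int g\, d\mu_T$ as $M\to\infty$.

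Putting the three pieces together—the convergence of the bounded truncation from Proposition~\ref{ids}, the tail estimate of order $\varepsilon$ valid uniformly in $n$, and the finite limiting first moment—by letting $n\to\infty$ first and then $\varepsilon\to 0$ I obtain honest convergence (not merely a $\limsup$) of $(2n+1)^{-1}I(S^n;(Y^n,H^n))$ to $N\int\log(1+\lambda)\, d\mu_T<\infty$, as asserted. The only nontrivial step is precisely this truncation: ruling out that spectral mass of $H^n H^{n*}$ escapes to infinity in the large-$n$ limit. This is guaranteed by the uniform first-moment bound above, for which the summability conditions in Assumptions~\ref{power} and \ref{bound-A} on $\phi_T$ and $A_T$ are essential.
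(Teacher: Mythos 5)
Your proof is correct and reaches the stated conclusion, but the route differs from the paper's in two noticeable ways. First, to bound the first moment you compute $\EE\tr(H^n H^{n*})$ exactly, noting that centering of $X_T$ kills the cross terms and that the covariance of $W_T$ gives $\EE\tr X_T(k,\ell)X_T(k,\ell)^*=N\phi_T(k-\ell)^2$; this yields the clean constant $C=\bs\sigma^2+\bs a^2$. The paper instead uses the inequality $\tr HH^*\leq 2(\tr AA^* + \tr XX^*)$ together with the ergodic theorem applied to the process $\{W(m,m-\ell)W(m,m-\ell)^*\}_m$, obtaining an almost-sure bound on $\limsup_n\int\lambda\,\nu_T^n(d\lambda)$ with a factor of $2$. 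Second, to pass from bounded test functions to $\log(1+\lambda)$ and from almost-sure convergence to convergence of expectations, you truncate $g$ to $g_M$, control the tail uniformly in $n$ via $(g-g_M)(\lambda)\leq\varepsilon\lambda$ and the first-moment bound, and finish with a three-$\varepsilon$ argument entirely at the level of expectations. The paper instead first establishes almost-sure convergence of $\int\log(1+\lambda)\,\nu_T^n(d\lambda)$ by uniform integrability (using the a.s. bound on $\sup_n\int\lambda\,\nu^n_T(d\lambda)$), and then lifts this to convergence of expectations by showing $\sup_n\EE\bigl(\int\log(1+\lambda)\,\nu_T^n(d\lambda)\bigr)^2<\infty$, i.e.\ uniform integrability of the random functional itself. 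Your truncation route is more elementary: it avoids the almost-sure uniform bound (and hence the ergodic-theorem step for the $WW^*$ process), avoids the second-moment estimate, and requires only the elementary observation that $\log(1+\lambda)/\lambda\to 0$. The two approaches yield the same theorem; yours is arguably leaner.
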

Let
\[
\bs\gamma_T(f) = \sum_k \exp(2\imath\pi k f) \gamma_T(k)
\]
be the Fourier transform of the sequence $\{\gamma_T(k)\}$, and
\[
{\bs A}_T(f) = \sum_{k} \exp(2\imath\pi k f) A_T(k)
\]
be the $N\times T$ Fourier transform of the sequence $\{A_T(k)\}$. Write
$({\bs A}_T{\bs A}_T^*)(f) = {\bs A}_T(f){\bs A}_T^*(f)$ and
$({\bs A}_T^*{\bs A}_T)(f) = {\bs A}_T^*(f){\bs A}_T(f)$ for compactness.

\begin{theorem}
\label{det-eq}
Let $\SS_T(f,z)$ and $\tSS_T(f,z)$ be respectively the $N\times N$ and
$T\times T$ matrices
\begin{align}
\SS_T(f,z) &= \left[
-z \bigl( 1 + \sigma_T^2 {\bs\gamma}_T(f) \star \tbd_T(f,z)\bigr) I_N
+ \bigl( 1 + \sigma_T^2{\bs\gamma}_T(-f) \star \bd_T(f,z) \bigr)^{-1}
({\bs A}_T{\bs A}^*_T)(f) \right]^{-1}, \label{eq-S} \\
\tSS_T(f,z) &= \left[
-z \bigl( 1 + \sigma_T^2 {\bs\gamma}_T(-f) \star \bd_T(f,z)\bigr) I_T
+ \bigl( 1 + \sigma_T^2{\bs\gamma}_T(f) \star \tbd_T(f,z) \bigr)^{-1}
({\bs A}_T^*{\bs A}_T)(f) \right]^{-1}, \label{eq-tS}
\end{align}
and
\begin{align*}
{\bs\gamma}_T(f) \star \tbd_T(f,z) &=
   \int_0^1 {\bs\gamma}_T(f-u) \tbd_T(u,z) \, du, \ \text{and} \\
{\bs\gamma}_T(-f) \star \bd_T(f,z) &=
   \int_0^1 {\bs\gamma}_T(u-f) \bd_T(u,z) \, du .
\end{align*}
Then for any $z \in \CC_+$, the system of equations
\[
\bd_T(f,z) = \frac{\tr \SS_T(f,z)}{T}
\quad \text{and} \quad
\tbd_T(f,z) = \frac{\tr \tSS_T(f,z)}{T}
\]
admits a unique solution $( \bd_T(\cdot,z), \tbd_T(\cdot,z) )$ such that
$ \bd_T(\cdot,z), \tbd_T(\cdot,z) : [0,1] \to \CC$ are both measurable and
Lebesgue-integrable on $[0,1]$ and such that
$\Im \bd(f,z)$, $\Im \tbd(f,z)$, $\Im (z\bd(f,z))$ and $\Im (z\tbd(f,z))$ are
nonnegative for any $f\in[0,1]$. \\
The solutions $\bd_T(\cdot,z)$ and $\tbd_T(\cdot,z)$ are continuous on
$[0,1]$, and $\Im \bd(f,z)$, $\Im \tbd(f,z)$, $\Im (z\bd(f,z))$ and
$\Im(z\tbd(f,z))$ are positive for any $f\in[0,1]$. Furthermore,
for any $f\in[0,1]$, the functions $(T/N)\bd(f,z)$ and $\tbd(f,z)$ are
defined on $\CC-\RR_+$ and are the Stieltjes transforms of probability
measures supported by $[0,\infty)$.
\end{theorem}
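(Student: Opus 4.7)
The plan is to recognize the system as a convolution-type generalization of the Marchenko--Pastur fixed-point equation, and to solve it by a fixed-point argument adapted to the $f$-dependence. For $z \in \CC_+$, I would introduce the closed convex set $\mathcal{C}_z$ of pairs $(\bd,\tbd) \in L^\infty([0,1])^2$ satisfying the four sign constraints $\Im \bd, \Im \tbd, \Im(z\bd), \Im(z\tbd) \geq 0$ almost everywhere together with the pointwise bounds $|\bd(f)|,|\tbd(f)| \leq K/\Im z$, where $K$ is a constant depending only on $\bs c$. The Hermitian symmetry $\gamma_T(-k)=\overline{\gamma_T(k)}$ combined with Bochner's theorem forces $\bs\gamma_T$ to be real-valued and nonnegative on $[0,1]$. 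Consequently, for any $(\bd,\tbd)\in\mathcal{C}_z$ the scalars $\alpha(f):=1+\sigma_T^2\bs\gamma_T(f)\star\tbd(f)$ and $\alpha'(f):=1+\sigma_T^2\bs\gamma_T(-f)\star\bd(f)$ satisfy $\Im(z\alpha),\Im(z\alpha')\geq\Im z>0$, so $\beta(f):=1/\alpha'(f)$ is well-defined with $\Im\beta\leq 0$. Writing $\SS_T(f,z)^{-1}=-z\alpha(f)I_N+\beta(f)(\bs A_T\bs A_T^*)(f)$, its imaginary part is bounded above by $-\Im z\,I_N$; this yields $\|\SS_T(f,z)\|\leq(\Im z)^{-1}$, $\Im\SS_T\geq 0$, and, via the rewriting $z\SS_T=-(\alpha I-(\beta/z)\bs A_T\bs A_T^*)^{-1}$ together with $\Im(\beta/z)<0$, also $\Im(z\SS_T)\geq 0$. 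The analogous computation for $\tSS_T$ then shows that $\Psi:(\bd,\tbd)\mapsto(\tr\SS_T/T,\tr\tSS_T/T)$ sends $\mathcal{C}_z$ into itself.

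I would obtain existence and uniqueness jointly by a contraction-plus-analytic-continuation argument. Restrict first to $z=\imath y$ with $y$ large. Given two inputs in $\mathcal{C}_z$, the resolvent identity $\SS_{T,1}-\SS_{T,2}=\SS_{T,1}(\SS_{T,2}^{-1}-\SS_{T,1}^{-1})\SS_{T,2}$, together with $\|\SS_{T,i}\|,\|\tSS_{T,i}\|\leq 1/y$, $\|(\bs A_T\bs A_T^*)(f)\|\leq \bs a^2$ (Assumption~\ref{bound-A}), $\|\bs\gamma_T\|_\infty\leq \bs g$ (Assumption~\ref{Tcoh}), and $\sigma_T^2\leq\bs\sigma^2$, produces a $2\times 2$ linear estimate on $(\|\bd_1-\bd_2\|_\infty,\|\tbd_1-\tbd_2\|_\infty)$ whose coefficient matrix has operator norm $O(1/y^2)$ with a constant depending only on $\bs a,\bs g,\bs\sigma^2,\bs c$. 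For $y$ beyond a threshold $y_0$, Banach's fixed-point theorem yields a unique solution, holomorphic in $z$ as the limit of holomorphic iterates. The bound $|\bd|,|\tbd|\leq K/\Im z$ then makes the family $\{(\bd(f,\cdot),\tbd(f,\cdot))\}_{f\in[0,1]}$ locally normal on $\CC_+$, so the solution extends to all of $\CC_+$ by Montel's theorem and identity of holomorphic functions. Uniqueness on $\CC_+$ inside the measurable-integrable class of the theorem reduces to uniqueness on $\{\Im z>y_0\}$, since any such candidate automatically satisfies the pointwise bound through the defining equation.

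The remaining claims are comparatively direct. Continuity of $\bd_T(\cdot,z),\tbd_T(\cdot,z)$ on $[0,1]$ follows from continuity of $\bs\gamma_T$ and $\bs A_T$ (Fourier series of absolutely summable sequences under Assumptions~\ref{Tcoh} and~\ref{bound-A}), from continuity of convolutions against $L^1$ kernels, and from continuity of matrix inversion and trace, all bootstrapped through the fixed-point equation. Strictness of the four positivity inequalities follows from $\Im\SS_T=\SS_T(-\Im\SS_T^{-1})\SS_T^*\geq\Im z\,\SS_T\SS_T^*>0$ and the analogue for $\tSS_T$. Finally, to identify $(T/N)\bd_T=\tr\SS_T/N$ and $\tbd_T=\tr\tSS_T/T$ as Stieltjes transforms of probability measures on $[0,\infty)$, I would verify properties i)--iv) of Section~\ref{results}: holomorphy in $z$ and the two imaginary-part positivities have just been established, while the normalizations $\lim_{y\to\infty}(-\imath y)(T/N)\bd_T(f,\imath y)=1$ and $\lim_{y\to\infty}(-\imath y)\tbd_T(f,\imath y)=1$ follow from the expansions $\SS_T(f,\imath y),\tSS_T(f,\imath y)=(\imath y)^{-1}I+O(y^{-2})$ as $y\to\infty$.

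The principal obstacle will be the $O(1/y^2)$ contraction estimate: naively bounding a normalized trace by the operator norm $1/y$ and the convolution by $\bs g$ in $L^\infty$ gives only $O(1/y)$, which does not yield contraction. The necessary improvement comes from bounding differences of normalized traces $|\tr((\SS_1-\SS_2)M)/T|$ through an additional factor of $\SS$, which extracts the second power of $1/y$. Keeping all constants uniform in $f$ and $T$ is precisely where Assumptions~\ref{Tcoh} and~\ref{bound-A} are critically used.
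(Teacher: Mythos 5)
Your existence argument matches the paper's in spirit (iterate, contract for $\Im z$ large, then extend by normal families and analytic continuation from the region where the fixed point is attained), so that part is sound; the paper's region $\cal R$ in Lemma~\ref{recursion}-\emph{ii)} formalizes exactly your ``$y$ beyond a threshold $y_0$'' set. Your observation that $\bs\gamma_T \ge 0$ (Bochner) and the sign computations for $\Im \SS_T$, $\Im(z\SS_T)$ are also correct and are implicitly used by the paper (see the proof of Lemma~\ref{bornes}). Incidentally, your worry about needing an $O(1/y^2)$ contraction rate is a red herring: the dominant coefficient in the paper's $2\times 2$ system is $\bs\sigma^2|z|/(\Im z)^2 = O(1/y)$ on the imaginary axis, and $O(1/y)$ is a perfectly good contraction constant once $y$ exceeds a fixed threshold.

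The genuine gap is in your uniqueness claim. The theorem asserts \emph{pointwise} uniqueness: for each fixed $z\in\CC_+$, there is exactly one measurable, integrable pair with the four sign constraints satisfying the system. The paper even emphasizes (right after the theorem statement) that this is strictly stronger than the ``holomorphic-family'' uniqueness of~\cite{hachem-loubaton-najim07}. Your reduction --- ``uniqueness on $\CC_+$ \dots reduces to uniqueness on $\{\Im z>y_0\}$, since any such candidate automatically satisfies the pointwise bound through the defining equation'' --- is a non sequitur. Suppose two distinct admissible solutions exist at some fixed $z_1$ with $\Im z_1\le y_0$. The pointwise bound $|\bd(f,z_1)|\le K/\Im z_1$ tells you nothing about whether either of them embeds into a $z$-holomorphic family, so there is no mechanism to transport the disagreement at $z_1$ up to the region $\{\Im z > y_0\}$ where your contraction lives. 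Analytic continuation only rules out two holomorphic families that agree on an open set, not two isolated rogue solutions at a single $z_1$. The paper closes this gap by a completely different argument (Section~\ref{unique}): it fixes $z$, derives the linear integral system of Lemma~\ref{sys>0} satisfied by the imaginary parts, establishes the strict bounds $\int J_1^{\bd,\tbd}(f,u)\,du <1$ and $\int J_2^{\bd,\tbd}(f,u)\,du \le 1 - C\tr(\bs A^*\bs A)(f)$ from Lemma~\ref{bornes}, and then runs a Cauchy--Schwarz contradiction on the normalized differences $\varepsilon(f)$, $\tilde\varepsilon(f)$ between two putative solutions at the same $z$. None of this appears in your proposal, and without it the theorem's pointwise uniqueness is unproved.
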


This system of equations turns out to be formally close to that described
in \cite[Th.~2.4]{hachem-loubaton-najim07}. The uniqueness established in
the present paper is a point-wise uniqueness (\emph{i.e.}, for any $z\in\CC_+$)
which is stronger than the type of uniqueness shown
in~\cite{hachem-loubaton-najim07}. The proof of Theorem~\ref{det-eq} is
given in Section~\ref{prf-det-eq}. \\

The discussion preceding Proposition~\ref{ids} shows that the function
\[
\bs p_T(z) = \frac 1N \int_0^1 \tr \SS_T(f,z) \, df
\]
is the Stieltjes transform of a probability measure carried by $[0,\infty)$. We
denote it $\bs\pi_T$.

\begin{theorem}
\label{approx}
For any $z \in \CC_+$,
\begin{equation}
\label{cvg-st}
\bs m_T(z) - \bs p_T(z)
\xrightarrow[T\to\infty]{} 0 .
\end{equation}
Moreover, the sequences $\{\mu_T\}$ and $\{\bs\pi_T\}$ are tight, and
\[
\int g(\lambda) \, \mu_T(d\lambda) \ - \
\int g(\lambda) \, \bs\pi_T(d\lambda)
\xrightarrow[T\to\infty]{} 0
\]
for any continuous and bounded real function $g$.
\end{theorem}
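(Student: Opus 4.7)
The plan is to work in the Fourier domain and reduce the problem to showing that the deterministic equivalent $\SS_T(f,z)$ provided by Theorem~\ref{det-eq} approximates the symbol of the block Toeplitz matrix $\EE Q_T(z)$. Stationarity of $\{{\bs H}_T(k)\}$ under integer shifts forces $\EE Q_T(k,\ell)(z)$ to depend only on $k-\ell$, so there is an $N\times N$ matrix-valued symbol $\bs Q_T(\cdot,z)$ with
\[
\EE Q_T(k,\ell)(z)=\int_0^1 e^{-2\imath\pi(k-\ell)f}\,\bs Q_T(f,z)\,df,
\]
and thus $\bs m_T(z)=N^{-1}\int_0^1 \tr\bs Q_T(f,z)\,df$. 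The target then becomes $\bs Q_T(f,z)-\SS_T(f,z)\to 0$ in a sense strong enough to survive the trace integral over $f$.

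To produce an approximate equation for $\bs Q_T$, I would start from $H_T H_T^* Q_T = I + zQ_T$, expand $H_T H_T^*$ using $H_T=A_T+X_T$ into its four pieces, and apply the Gaussian integration-by-parts formula to each occurrence of $X_T$. The prescribed covariance of $W_T$ produces factors $\gamma_T(k_1-k_2)\phi_T(\cdot)^2$, which after Fourier transformation in the block index become exactly the convolutions with $\bs\gamma_T$ appearing in \eqref{eq-S}--\eqref{eq-tS}. The variances produced by the decoupling are controlled by the Poincar\'e--Nash inequality, using Assumptions~\ref{as-reg}--\ref{bound-A} and the deterministic bound $\|Q_T(z)\|\le(\Im z)^{-1}$ to produce errors of order $T^{-1/2}$ uniform in $f\in[0,1]$. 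The upshot is a pair of identities of the same form as \eqref{eq-S}--\eqref{eq-tS} but with an additive remainder $\varepsilon_T(f,z)$ tending to zero.

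The third step is to close the argument via a stable version of the uniqueness part of Theorem~\ref{det-eq}. The natural route is to show that the map $(\bd,\tbd)\mapsto(\tr\SS/T,\tr\tSS/T)$ built from \eqref{eq-S}--\eqref{eq-tS} is a strict contraction on Stieltjes-like symbols for $|\Im z|$ sufficiently large, so that any pair satisfying the equations up to an $o(1)$ perturbation is $o(1)$-close to the exact solution. The uniform bounds $|\bs m_T(z)|,|\bs p_T(z)|\le(\Im z)^{-1}$ provide Montel-type normal-family compactness, which together with analytic continuation in $z$ propagates \eqref{cvg-st} from a half-plane to every $z\in\CC_+$.

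For the second half of the theorem, tightness of $\{\mu_T\}$ and $\{\bs\pi_T\}$ follows from a uniform first-moment bound: a direct computation using $\EE\tr[X_T(0,\ell)X_T(0,\ell)^*]=N\phi_T(-\ell)^2$ together with Assumptions~\ref{power} and~\ref{bound-A} gives
\[
\int\lambda\,\mu_T(d\lambda)=\frac{1}{N}\EE\tr (H_T H_T^*)(0,0)\le\bs\sigma^2+\bs a^2,
\]
and the analogous bound for $\bs\pi_T$ is read off from \eqref{eq-S} at $z=\imath y$, $y\to 0$. With both sequences supported on $[0,\infty)$ and uniformly integrable at one moment, Markov's inequality yields tightness, and the pointwise Stieltjes convergence \eqref{cvg-st} upgrades to $\int g\,d\mu_T-\int g\,d\bs\pi_T\to 0$ for every bounded continuous $g$ by the standard subsequence-and-identification argument. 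The principal obstacle will be the first step: the doubly-infinite operator $H_T$ must be handled by a truncation-and-limit procedure before Gaussian IBP can be legitimately invoked, and every remainder must be tracked with enough uniformity in the frequency variable $f$ that the trace integral over $f$ can be closed.
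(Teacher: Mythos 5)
Your roadmap matches the paper's proof almost step for step: exploit stationarity to make $\EE Q_T$ a convolution (block-Toeplitz) operator, apply Gaussian integration by parts after truncating $H_T$ to $H^M$, control variances with Poincar\'e--Nash (Lemma~\ref{varQ}), derive a perturbed version of the system \eqref{eq-S}--\eqref{eq-tS}, establish stability for $\Im z$ large, and extend by a normal-family argument. One remark on the stability step: rather than exhibiting a genuine contraction on Stieltjes-like symbols, the paper linearizes by forming the differences $\Delta_T=\sup_k\lf\EE Q_T(k)-S_T(k)\rf/\sqrt{T}$ (and three analogues) and showing $\bs\Delta_T\le\bs M(z)\bs\Delta_T+T^{-1/2}\bs E(z)$ with $\|\bs M(z)\|<1$ for $\Im z$ large; this four-component system is needed because the non-centered part forces you to track $\EE[AH^*Q]$ and $\EE[A^*H\tQ]$ separately, a coupling your sketch does not make explicit but which would surface as soon as you carried out the IP step on $\chi_2,\chi_3$.

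The one genuine error is in the tightness step: you propose to read off a first moment for $\bs\pi_T$ from \eqref{eq-S} at $z=\imath y$ with $y\to 0$. This is the wrong limit; the first-moment (or, as the paper does, the unit-mass) information lives at $y\to\infty$. Concretely, the paper estimates $|\imath y\,\bs p_T(-\imath y)+1|<\varepsilon$ uniformly in $T$ once $y>2(\bs\sigma^2\bs g+\bs a^2)/\varepsilon$, which forces every subsequential weak limit of $\{\bs\pi_T\}$ to have total mass $1$ and hence gives tightness; sending $y\to 0$ drives $z$ toward the support of $\bs\pi_T$ and yields nothing. Once tightness of $\{\bs\pi_T\}$ is in hand, tightness of $\{\mu_T\}$ follows from \eqref{cvg-st} without a separate moment computation, so your first-moment bound for $\mu_T$, while correct as stated (the cross terms $\EE\tr A_T(\ell)X_T(0,\ell)^*$ vanish, so no factor of two is needed), is not actually required in this part of the argument.
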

The approximation of the mutual information for large $T$ is finally provided
by the following theorem.
\begin{theorem}
\label{capa}
We have
\[
N^{-1} I_T(S; (Y,H)) - {\cal I}_T \xrightarrow[T\to\infty]{} 0 ,
\]
where
\[
{\cal I}_T = \int \log(1+\lambda) \, \bs\pi_T(d\lambda) .
\]
and the integral is given by
\begin{align*}
{\cal I}_T &=
\frac 1N \int_0^1 \log\det\Bigl(
(1 + \sigma_T^2 {\bs\gamma}_T(f) \star \tbd_T(f,-1) ) I_N
+ \frac{({\bs A}_T{\bs A}^*_T)(f)}
{1 + \sigma_T^2{\bs\gamma}_T(-f) \star \bd_T(f,-1)} \Bigr) \, df  \\
&\phantom{=}
+ \frac TN \int_0^1 \log( 1 + \sigma_T^2{\bs\gamma}_T(-f) \star \bd_T(f,-1) )
\, df \\
&\phantom{=}
- \frac TN \int_0^1\int_0^1 \sigma_T^2 {\bs\gamma}_T(f-v)
   \tbd_T(v,-1) \bs\varphi_T(f,-1) \, dv \, df .
\end{align*}
\end{theorem}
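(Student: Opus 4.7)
The theorem has two assertions: the convergence $N^{-1}I_T(S;(Y,H)) - \mathcal{I}_T \to 0$ and the explicit integral representation of $\mathcal{I}_T$. I would treat them separately. The convergence follows from Theorem~\ref{approx}, augmented by a tail estimate handling the unboundedness of $\log(1+\lambda)$; the formula comes from a derivative-in-$z$ argument exploiting a stationarity property of the right-hand side at the fixed point from Theorem~\ref{det-eq}.

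For the convergence, I would start from the Frullani-type identity
$$\int \log(1+\lambda)\,\nu(d\lambda) = \int_1^\infty \Bigl(\frac{1}{u} - \bs s_\nu(-u)\Bigr)\,du,$$
valid for any probability measure $\nu$ on $[0,\infty)$ with finite first moment, where $\bs s_\nu(-u) = \int (\lambda+u)^{-1}\nu(d\lambda)$. Applied to $\mu_T$ and to $\bs\pi_T$ this gives
$$N^{-1}I_T(S;(Y,H)) - \mathcal{I}_T = \int_1^\infty \bigl(\bs p_T(-u) - \bs m_T(-u)\bigr)\,du.$$
Theorem~\ref{approx} yields pointwise convergence to zero of the integrand for every $u \geq 1$, while the elementary bound $|\bs s_\nu(-u)-1/u| \leq u^{-2}\int \lambda\,\nu(d\lambda)$ gives a dominating function $C u^{-2}$ on $[1,\infty)$ provided the first moments of $\mu_T$ and $\bs\pi_T$ are bounded uniformly in $T$. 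For $\mu_T$ the first moment equals $N^{-1}\EE\tr[(H_TH_T^*)(0,0)] = N^{-1}\sum_\ell \EE\|H_T(0,\ell)\|_{\text{fro}}^2$, which Assumptions~\ref{power} and~\ref{bound-A} bound by $\bs a^2 + \bs\sigma^2$ after a short calculation; the analogous bound for $\bs\pi_T$ follows from the large-$u$ expansion of $\bs p_T(-u)$. Dominated convergence then delivers the convergence statement.

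For the integral formula, I would extend $\mathcal{I}_T$ to a $z$-dependent function. Writing $\alpha(f,z) = 1 + \sigma_T^2\,\bs\gamma_T(-f)\star\bd_T(f,z)$ and $\tilde\alpha(f,z) = 1 + \sigma_T^2\,\bs\gamma_T(f)\star\tbd_T(f,z)$, define
\begin{align*}
\mathcal{J}_T(z)
&= \frac{1}{N}\int_0^1 \log\det\!\bigl((-z)\tilde\alpha(f,z) I_N + \alpha(f,z)^{-1}(\bs A_T\bs A_T^*)(f)\bigr)\,df \\
&\quad + \frac{T}{N}\int_0^1 \log\bigl((-z)\alpha(f,z)\bigr)\,df \\
&\quad + \frac{Tz}{N}\int_0^1\!\!\int_0^1 \sigma_T^2\,\bs\gamma_T(f-v)\,\tbd_T(v,z)\,\bd_T(f,z)\,dv\,df
\end{align*}
for $z<0$; at $z=-1$ this reduces to the expression in the theorem. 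The plan is to show $\mathcal{J}_T(z) = \int \log(-z+\lambda)\,\bs\pi_T(d\lambda) + (T/N)\log(-z)$ for every $z<0$, whence $\mathcal{I}_T = \mathcal{J}_T(-1) = \int\log(1+\lambda)\bs\pi_T(d\lambda)$ upon setting $z=-1$.

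The crux is stationarity of $\mathcal{J}_T$ at $(\bd_T(\cdot,z),\tbd_T(\cdot,z))$. The variation $\delta\mathcal{J}_T/\delta\tbd_T(g) = 0$ unfolds directly from the fixed-point relation $\bd_T = \tr\SS_T/T$, with all three terms cancelling cleanly. The variation $\delta\mathcal{J}_T/\delta\bd_T(g) = 0$ is more delicate: it requires the algebraic identity $\alpha(f)^{-1}\tr(\SS_T(f)(\bs A_T\bs A_T^*)(f)) = \tilde\alpha(f)^{-1}\tr(\tSS_T(f)(\bs A_T^*\bs A_T)(f))$ (both sides equal $\sum_i \sigma_i^2/(-z\alpha\tilde\alpha + \sigma_i^2)$ when written in terms of the singular values $\sigma_i$ of $\bs A_T(f)$), combined with the fixed-point trace identities obtained from $\SS_T^{-1}\SS_T = I_N$ and $\tSS_T^{-1}\tSS_T = I_T$. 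Together these produce the consistency relation $N - T = zT(\alpha\tbd_T - \tilde\alpha\bd_T)$, after which stationarity follows by short algebra. Granted stationarity, $d\mathcal{J}_T/dz$ equals the partial $z$-derivative at frozen $(\bd_T,\tbd_T)$; a direct computation gives $d\mathcal{J}_T/dz = T/(Nz) - \bs p_T(z)$, matching $\frac{d}{dz}[\int\log(-z+\lambda)\bs\pi_T(d\lambda) + (T/N)\log(-z)]$. The two functions therefore differ by a constant, and their common asymptotic behavior $(1 + T/N)\log(-z) + o(1)$ as $z\to-\infty$, verified using $\bd_T,\tbd_T\to 0$ and $\alpha,\tilde\alpha\to 1$ in that limit, forces this constant to vanish. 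The stationarity verification, together with the consistency relation $N-T = zT(\alpha\tbd_T - \tilde\alpha\bd_T)$, is the principal obstacle.
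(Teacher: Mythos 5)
Your proposal follows the same overall route as the paper's sketch: the Frullani-type identity $\int\log(1+\lambda)\,\nu(d\lambda)=\int_1^\infty(u^{-1}-\bs s_\nu(-u))\,du$, pointwise convergence from Theorem~\ref{approx}, and a dominating function $Cu^{-2}$ obtained from uniform-in-$T$ bounds on the first moments of $\mu_T$ and $\bs\pi_T$, then dominated convergence. The only (minor) difference on the convergence side is how the moment bound for $\mu_T$ is obtained: the paper re-enters the proof of Theorem~\ref{I-ids} and passes to the limit of $\int(\lambda\wedge C)\,\nu_T^n$, whereas you use the direct spectral identity $\int\lambda\,\mu_T(d\lambda)=N^{-1}\EE\tr[(H_TH_T^*)(0,0)]$, which is cleaner and gives the sharper constant $\bs a^2+\bs\sigma^2$.

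Where your proposal genuinely adds value is the explicit formula for $\mathcal{I}_T$, which the paper handles in one sentence by pointing to \cite[\S~C.2]{hachem-loubaton-najim07}. Your argument — introduce the $z$-dependent functional $\mathcal{J}_T(z)$, check that its functional derivatives in $\bd_T$ and $\tbd_T$ vanish at the fixed point (the stationarity step, aided by the trace identity $\alpha^{-1}\tr(\SS_T(\bs A\bs A^*))=\tilde\alpha^{-1}\tr(\tSS_T(\bs A^*\bs A))$ and the consistency relation $N-T=zT(\alpha\tbd_T-\tilde\alpha\bd_T)$ that you correctly derive from $\tr(\SS_T\SS_T^{-1})=N$ and $\tr(\tSS_T\tSS_T^{-1})=T$), compute $d\mathcal{J}_T/dz$ at the frozen fixed point to get $T/(Nz)-\bs p_T(z)$, and pin down the integration constant via the $z\to-\infty$ asymptotics — is exactly the content of the cited appendix in Hachem--Loubaton--Najim, adapted to the continuous-frequency setting of this paper. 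So the two approaches coincide; you have simply unfolded the reference the paper leaves implicit. I verified the consistency relation and the matching of the leading asymptotics $(1+T/N)\log(-z)$; they are correct, so your sketch of the formula part is sound.
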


Before turning to the numerical illustrations of these results, two remarks
are in order. We first observe that the variance profile represented by the
function $\phi_T^2 : \ZZ \to [0,\infty)$ has no influence on ${\mathcal I}_T$ 
except through the total received power $\sigma_T^2$ due to the random part
of the channel. 

We also observe that if the channel is centered, \emph{i.e.}, 
if $A_T = 0$, then $\bs \pi_T$ is the Marchenko-Pastur distribution 
\[
\bs\pi_T(d\lambda) = \frac{1}{2\pi c_T \sigma_T^2 \lambda} 
\sqrt{ (\lambda_+ - \lambda) (\lambda - \lambda_- ) }
\1_{[\lambda_-, \lambda_+]}(\lambda) \, d\lambda  
          + ( (1-c_N^{-1}) \vee 0) \delta_0 
\]
where $c_T = N/T$, $\lambda_+ = \sigma_T^2(1+\sqrt{c_T})^2$, and  
$\lambda_- = \sigma_T^2(1-\sqrt{c_T})^2$. It is indeed well known that 
the Stieltjes Transform of $\bs\pi_T$ is $c_N^{-1} \alpha(z)$, where 
$\alpha(z)$ is the unique solution in $\CC_+$ of the equations
\[
\alpha(z) = c_N ( -z -z \sigma_T^2 \tilde\alpha(z) )^{-1}, \quad
\tilde\alpha(z) = ( -z -z \sigma_T^2 \alpha(z) )^{-1} .
\]
Recalling that $\int {\bs\gamma}_T(f)\, df = 1$, one can immediately check that 
when $A_T = 0$, the couple $(\bs\varphi_T(f,z), \tilde{\bs\varphi}_T(f,z)) = 
(\alpha(z), \tilde\alpha(z))$ is a solution of the system described in the 
statement of Theorem~\ref{det-eq}. The result follows by the uniqueness of
this solution.


\section{Numerics}
\label{sec:Numerics}

In this section we will present the behavior of the mutual information for some
representative cases and will compare with numerically generated
instantiations. To begin with we present the usually accepted model for the
temporal correlation of fast fading, \emph{i.e.}, the so-called Jakes model
\cite{Jakes_book}, which, in the time domain has the following correlation
form:
\[
  \gamma_T(t)=J_0\left(\frac{2\pi v t}{\lambda}\right)
\] 
which in the frequency domain becomes
\[
  \gamma_T(f)=\frac{1}{\pi}\frac{1}{\sqrt{f_d^2-f^2}}
\] 
within the region $|f|<f_d$ and zero elsewhere, where $f_d=v/\lambda\tau$, the
ratio of the velocity of the mobile to the wavelength times the time duration
of the channel usage.  It should be noted that the discontinuity at $f_d$
results to $\gamma_T(t)$ not being absolutely summable, and hence strictly
speaking it cannot be used in this paper. However, demanding the frequency
response to be continuous at $f=f_d$, (by adding a small rounding factor in the
frequency domain), we can make this to be an acceptable model for the system.
For simplicity we will not include this in the simulations.

To move on, we need to also present a model for the deterministic matrix
function ${\bs A}_T(f)$. The typical situation for wireless communications is
that the constant matrices are due to line-of sight rank-one components. Along
these lines we assume that each of the time resolvable paths have the following
matrix elements
\[
A_T(k)_{m,n}= \exp\left[-|k| \xi/L_{tot}\right] \exp\left[2\pi j (m-n) \sin\theta_k\right]/\sqrt{N} 
\]
where $\theta_k = k \pi/L_{tot}$, for $k=-L,\ldots,L$ and $L_{tot}=2L+1$. The
exponential  dependence on the delay spread $\xi$ has been seen experimentally
\cite{Pedersen2000_SpaceTimeChannelModel, Calcev2004_3GPP_SCM}. Then 
${\bs A}_T(f)$ follows from
\[
{\bs A}_T(f)= \sum_{\ell=-L}^L A_T(k) \exp\left[-2\pi j \ell f\right]
\]
Note that the rank of the above matrix is $L_{max}$.

In the next figures, we observe the behavior of the mutual information as a
function of the Signal to Noise Ratio (SNR) parameter 
\[
\rho = \sigma_T^2 + \frac 1N \int_0^1 \tr ({\bs A}_T{\bs A}_T^*)(f) \, df 
\]
in various cases. In Fig.~\ref{fig:MutualInfo_rho_variousK} we plot the mutual
information for different values of the Ricean coefficient $K$ defined as the
ratio between the power of the deterministic part channel  to that of the of
the random part of the channel for the case. We see that for increasing values
of $K$ the mutual information initially increases because the deterministic
channel provides higher throughput, but then decreases because for higher $K$
the rank-deficiency of the deterministic channel becomes apparent.

In Fig. \ref{fig:MutualInfo_rho_variousXi} we see that the mutual information decreases with increasing inverse delay spread $\xi$, which is essentially the frequency correlation bandwidth. This behavior is expected, because the smaller $\xi$ is the smaller the attenuation and thus the higher the strength of the channel at the receiver.

We have also found that the dependence of the mutual information on the value
of $f_d$ is rather benign and not discernable for the above type of
deterministic channel. This can be attributed to the fact that the dependence
of the eigenvalue distribution of $({\bs A}_T {\bs A}_T^*)(f)$ on the frequency
value $f$ is quite small. For other less realistic examples of deterministic
channels, \emph{e.g.}, the case where $A(k)=\exp(-|k|\xi) I_N$, where $I_N$ is 
the identity matrix, does depend strongly on $f_d$.

Finally, in Fig. \ref{fig:MutualInfo_rho_num} we compare the results obtained
from the methodology presented in this paper with numerically generated values.
In this figure we plot the average mutual information generated numerically for
various block sizes (here depicted by $M$). In this case for simplicity, we
have used the exponentially decaying temporal correlation model,\emph{i.e.},
with $\gamma_T(k)=\exp(-|k|f_d)$. This model is easier to implement numerically
and absolutely summable, but is not very realistic because it is very wide-band.
The first important observation here is that this approach gives near exact
results for antenna arrays as small as $N=T=2$ presented in this figure.
Second, we see that depending on the value of $f_d$ the convergence to the
asymptotic result depends on the actual block size. For small $f_d$ (temporally
correlated channels) the $M=5$ set of points (diamonds) is significantly
deviating from the analytic curve, while for large $f_d$  the $M=5$ simulations
are much closer to the asymptotic result. Increasing the block size to $M=40$
makes the numerical values right on the analytical ones.

\begin{figure}[ht]
{\includegraphics*[width=\columnwidth]{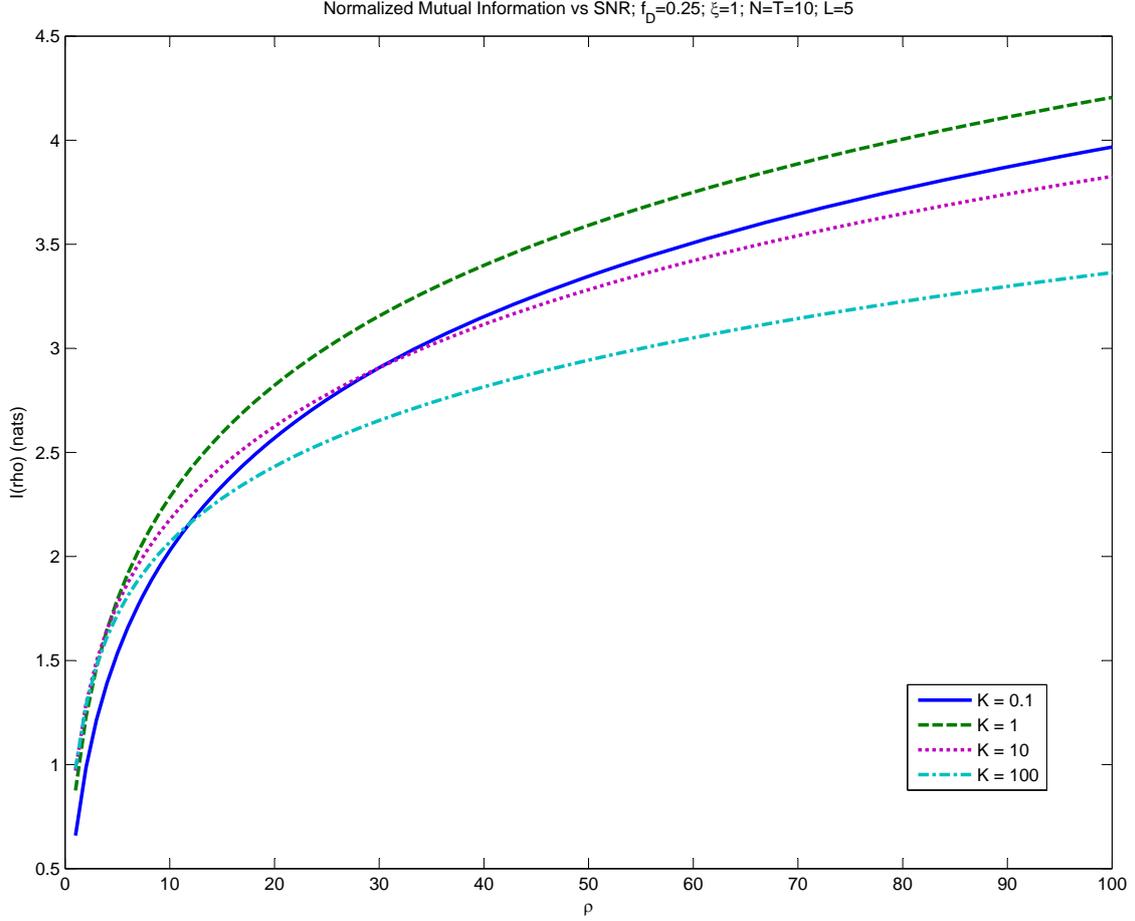}}
\caption{Mutual Information versus $\rho$ for various $K$}
\label{fig:MutualInfo_rho_variousK}
\end{figure}


\begin{figure}[ht]
{\includegraphics*[width=\columnwidth]{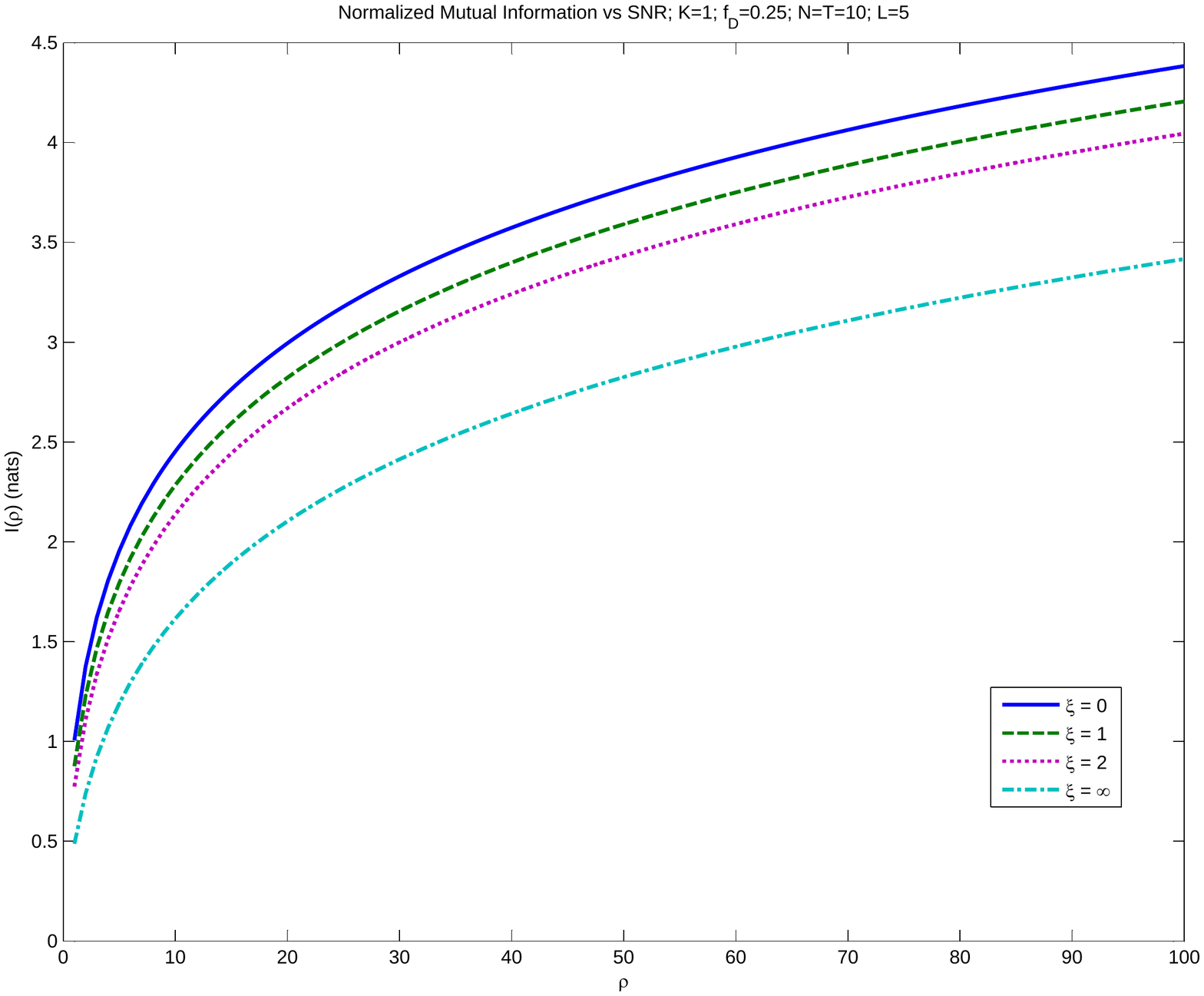}}
\caption{Mutual Information versus $\rho$ for various $\xi$}
\label{fig:MutualInfo_rho_variousXi}
\end{figure}

\begin{figure}[ht]
{\includegraphics*[width=\columnwidth]{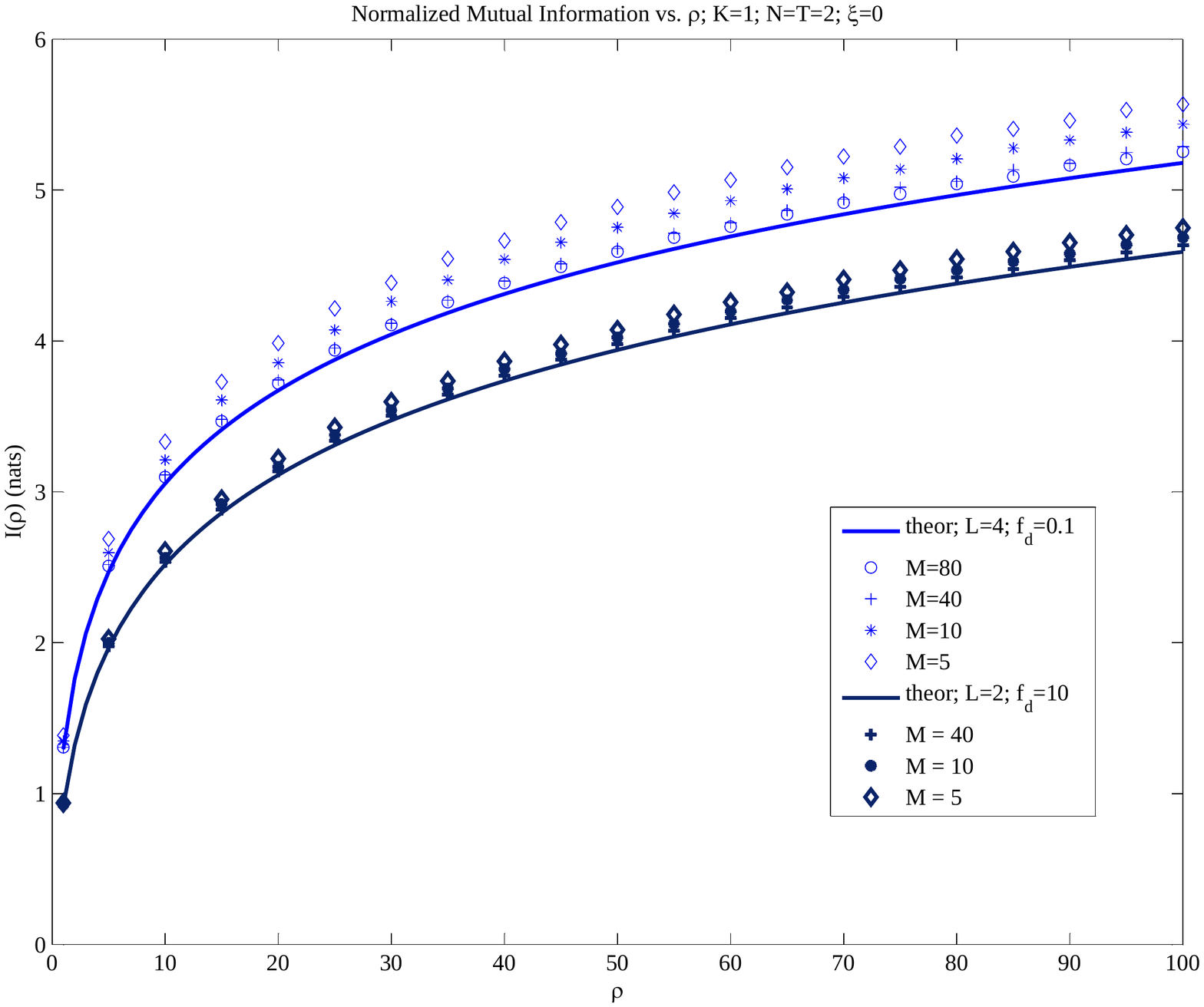}}
\caption{Mutual Information versus $\rho$ for various $L$ and block sizes}
\label{fig:MutualInfo_rho_num}
\end{figure}


\section{An overview of ergodic operators. Proofs of
Proposition~\ref{ids} and Theorem~\ref{I-ids}}
\label{ergo-op}

To make this paper reasonably self-contained, we provide a brief review
of the ergodic operator theory that will be useful to our purpose.
In passing, we shall prove Proposition~\ref{ids}. For a comprehensive
exposition of this theory, the reader may consult the book~\cite{pas-fig}.

Since the operator $H_T$ is defined for every $\omega\in\Omega$
on the dense linear subspace $\cal K$ of $\l^2(\ZZ)$, and since the vector
$H_T a$ is a random vector in $l^2(\ZZ)$ for any $a \in \cal K$, the operator
$H_T$ is random, following the definition of~\cite[Sec.~I.1.B]{pas-fig}.

For the presentation simplicity, let us temporarily assume that $N = T = 1$
and suppress the index $T$. Since $\cal K$ is dense, $H$ has an adjoint
$H^*$ with domain
\begin{align*}
\dom(H^*) &=
\Bigl\{ a \in l^2(\ZZ) \ : \
\sum_k | \langle H e_k, a \rangle |^2 < \infty \Bigr\}  \\
&=
\Bigl\{ a = \sum_{k\in\ZZ} \alpha_k e_k \ : \
\sum_{k\in\ZZ} \Bigl| \sum_{\ell=-L}^L H(k+\ell,k)
\bar\alpha_{k+\ell} \Bigr|^2 < \infty \Bigr\}
\end{align*}
where $e_k$ is the $k^{\text{th}}$ canonical basis vector of $l^2(\ZZ)$.
Clearly ${\cal K} \subset \dom(H^*)$, so this domain is dense in $l^2(\ZZ)$.
Therefore, $H$ is closable, and its closure coincides with $H^{**}$. In
the remainder, we reuse the notation $H$ to designate this closure.
Note that with this extension, the set $\cal K$ becomes a core for $H$
\cite[Chap. VIII]{ree-sim-1}. Since $H$ is closed and densely defined,
the operator $H H^*$ is a positive self-adjoint operator
with $\cal K$ as a core~\cite[Th.~46.2]{akh-glaz-93}. \\

We shall now consider the ergodic properties of $HH^*$. Making explicit the
dependence on the elementary event $\omega$ in the definition of the process
\[
{\bs H}(\omega) = \{{\bs H}(\omega,k) \}_{k\in\ZZ},
\quad
{\bs H}(\omega,k) = [ H(k,k-L), \ldots, H(k, k+L) ],
\]
it is easy to see that the summability of $\gamma$ implies that the shift
$B\,:\,\Omega\to\Omega$ defined by the equation
${\bs H}(B\omega, k) = {\bs H}(\omega, k+1)$ is ergodic. Moreover,
the random operator $H(\omega)$ clearly satisfies the equation
\[
H(B\omega) = U H(\omega) U^{-1}
\]
where $U$ is the (unitary) shift operator $Ua = \sum_k \alpha_{k+1} e_{k}$ for
$a = \sum_k \alpha_k e_k \in l^2(\ZZ)$. An operator on $l^2(\ZZ)$ that
satisfies such an equation where $B$ is ergodic is said ergodic in the sense
of~\cite[page 33]{pas-fig}.  Now, writing $(HH^*)(\omega) = H(\omega)
H^*(\omega)$,  we also have  $(HH^*)(B\omega) = U (HH^*)(\omega) U^{-1}$.  In
addition, it is clear that $U{\cal K} = {\cal K}$. Therefore, the operator
$HH^*$ is also ergodic in the sense of \cite[page 33]{pas-fig}.

Consider now the sequence of finite dimensional matrices
$( H^n H^{n*} )_{n\in\NN}$ obtained by truncating the matrix $HH^*$ and
keeping the elements $(i,j)$ such that $|i|,|j|\leq n$, see Eq.~\eqref{HnHn}.
Let $\lambda_{-n,n},\ldots,\lambda_{n,n}$ be the eigenvalues of $H^nH^{n*}$
and let
\[
\nu^n(\lambda) = \frac{1}{2n+1} \sum_{i=-n}^n \delta_{\lambda_{i,n}}
\]
be the Normalized Counting Measure (NCM) of this matrix. Since the operator
$HH^*$ is represented by a band matrix, we have the following result, stemming
directly from \cite[Th.~II-4.8]{pas-fig}:
\begin{proposition}
\label{pas-fig}
The operator $HH^*$ has an IDS. In other words, there exists a deterministic
probability measure $\mu$ on $[0,\infty)$ such that
\begin{equation}
\label{C-ids}
\int g(\lambda) \nu^n(d\lambda) \toaslong \int g(\lambda) \mu(d\lambda)
\end{equation}
for all continuous and bounded functions $g$. Moreover, the Stieltjes
transform of $\mu$ coincides with $\EE Q(0,0)(z)$.
\end{proposition}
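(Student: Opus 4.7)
The plan is to establish \eqref{C-ids} by passing through Stieltjes transforms. Set $Q^n(z)=(H^nH^{n*}-zI)^{-1}$ and let $\bs s_n(z)=\int(\lambda-z)^{-1}\nu^n(d\lambda)=((2n+1)N)^{-1}\tr Q^n(z)$. The goal is to prove $\bs s_n(z)\toaslong\bs m(z)\eqdef N^{-1}\EE\tr Q(0,0)(z)$ for every $z\in\CC_+$. Since $\bs m$ inherits properties (i)--(iv) of Section~\ref{results} from the operator-theoretic definition of $Q(0,0)(z)$ as the $N\times N$ block of the resolvent of a positive self-adjoint operator, there is a deterministic probability measure $\mu$ on $[0,\infty)$ whose Stieltjes transform is $\bs m$. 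Pointwise Stieltjes convergence on $\CC_+$ then yields vague convergence of $\nu^n$ to $\mu$, and tightness of $\{\nu^n\}$ (which follows from an almost sure bound on $\int\lambda\,\nu^n(d\lambda)$ via Assumptions~\ref{power} and~\ref{bound-A}) upgrades this to the weak convergence asserted in \eqref{C-ids}.

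The first key step is a boundary-comparison lemma. Because $H$ has bandwidth $O(L(T))$, a direct computation shows that $H^nH^{n*}$ equals the principal $(2n+1)N\times(2n+1)N$ block submatrix $P_nHH^*P_n$ of $HH^*$, where $P_n$ is the orthogonal projection of $l^2(\ZZ)$ onto indices $|k|\le n$. The resolvent of a principal submatrix is not the submatrix of the resolvent, however, and the identity
\[
Q^n(z)-P_nQ(z)P_n=Q^n(z)\bigl[P_nHH^*(I-P_n)Q(z)P_n\bigr] ,
\]
together with the uniform bound $\|Q^n(z)\|,\|Q(z)\|\le|\Im z|^{-1}$, shows that the only nontrivial contribution comes from the collar of width $O(L(T))$ where the cross-block $P_nHH^*(I-P_n)$ is supported. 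Controlling this collar through Assumption~\ref{bound-A} for $A_T$ and a Gaussian moment bound for $X_T$ through Assumptions~\ref{power}--\ref{Tcoh}, together with $|\tr(\cdot)|\le\mathrm{rank}\cdot\|\cdot\|$, one obtains
\[
\bs s_n(z)-\frac{1}{(2n+1)N}\sum_{k=-n}^n\tr Q(k,k)(z)\toaslong 0 .
\]

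The second step invokes Birkhoff's ergodic theorem. The shift relation $(HH^*)(B\omega)=U(HH^*)(\omega)U^{-1}$ established just above the statement implies $Q(k,k)(z)(\omega)=Q(0,0)(z)(B^k\omega)$, so $\{\tr Q(k,k)(z)\}_{k\in\ZZ}$ is a stationary ergodic sequence bounded in modulus by $N/|\Im z|$. Birkhoff yields $(2n+1)^{-1}\sum_{|k|\le n}\tr Q(k,k)(z)\toaslong\EE\tr Q(0,0)(z)$, which combined with the boundary-comparison step gives $\bs s_n(z)\toaslong\bs m(z)$ at each fixed $z\in\CC_+$. Choosing a countable dense subset of $\CC_+$ and extracting a common probability-one event produces simultaneous convergence at every such $z$, hence for all continuous bounded $g$ after tightness.

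The main obstacle is the boundary-comparison step: one must show rigorously that the operator discrepancy $Q^n(z)-P_nQ(z)P_n$, which is not small in operator norm, becomes negligible once normalized-traced, because its nontrivial support sits in a collar of $O(L(T))$ rows and columns while the normalization is $2n+1$. This is precisely the content of~\cite[Th.~II-4.8]{pas-fig} for ergodic band operators, and the authors invoke it as a black box. A self-contained proof would combine Combes--Thomas off-diagonal decay for the resolvent with Borel--Cantelli and Gaussian tail estimates to promote the in-probability estimate on the boundary correction to an almost-sure one valid on a single event for a countable dense set of spectral parameters.
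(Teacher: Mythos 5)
Your proposal is correct and follows essentially the same route as the paper: both reduce the claim to a boundary-comparison step (replacing $H^nH^{n*}$ by the compression of $HH^*$) followed by Birkhoff's ergodic theorem applied to the stationary sequence $Q(k,k)(z)(\omega)=Q(0,0)(z)(B^k\omega)$, invoking \cite[Th.~II-4.8]{pas-fig} as the rigorous black box for the first step. The only cosmetic difference is that you work with Stieltjes transforms from the outset and recover weak convergence for general bounded continuous $g$ via tightness, while the paper's outline starts from general $g$ and specializes to $g(\lambda)=(\lambda-z)^{-1}$ at the end.
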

Proposition~\ref{ids} for $N = T = 1$ follows. It will be useful to outline
the proof of Proposition~\ref{pas-fig}. The idea is to
establish the convergence in the statement of this proposition for a large
enough family of continuous functions. Given a continuous function $g$, start
writing the left hand side of Eq.~\eqref{C-ids} as
\[
\int g(\lambda) \nu^n(d\lambda) =
\frac{1}{2n+1} \sum_{i=-n}^n g(H^nH^{n*})(i,i)
\]
where $g(H^nH^{n*})(i,i)$ is the element $(i,i)$ of the matrix
$g(H^nH^{n*})$. Suppose that for large $n$, the operator at the right hand 
side of this equation can be replaced with $HH^*$. 
Denote by $g(HH^*)(i,i)$ the $(i,i)$ element of the matrix representation of 
$g(HH^*)$.
Observe now that since $HH^*$ is ergodic, then by \cite[Theorem 2.7]{pas-fig},
its \emph{resolution of the identity}  $E_\lambda$ is an ergodic projection
for every $\lambda \in\RR$. As a result, the operator
$g(HH^*) = \int g(\lambda) dE_\lambda$ is also ergodic. Thus,
\begin{align*}
\frac{1}{2n+1} \sum_{i=-n}^n g(HH^*)(\omega)(i,i) &=
\frac{1}{2n+1} \sum_{i=-n}^n \langle e_i, g(HH^*)(\omega) e_i \rangle \\
&=
\frac{1}{2n+1} \sum_{i=-n}^n
\langle e_0, U^i g(HH^*)(\omega) U^{-i} e_0 \rangle \\
&=
\frac{1}{2n+1} \sum_{i=-n}^n
\langle e_0, g(HH^*)(B^i \omega) e_0 \rangle \\
&\toaslong \EE g(HH^*)(0,0) =
\EE\int g(\lambda) \langle e_0, E(d\lambda) e_0 \rangle
\end{align*}
by the ergodic theorem, provided the $(0,0)$ element $g(HH^*)(0,0)$
of $g(HH^*)$ is integrable. It follows that $\mu(d\lambda) =
\EE \langle e_0, E(d\lambda) e_0 \rangle$. \\
By taking $g(\lambda) = (\lambda - z)^{-1}$ where $z \in\CC_+$, we get that
the Stieltjes transform of $\mu$ coincides with
$\EE g(HH^*)(0,0) = \EE Q_{HH^*}(0,0)(z)$.

It remains to generalize these results to the case where the $H_T(k,\ell)$
are $N \times T$ matrices. The closedness of $H_T$ and the existence of the
self-adjoint operator $H_TH_T^*$ are direct generalizations of the scalar case.
Writing $\bs H_T(\omega,k) = [ H_T(k,k+\ell)]_{\ell=-L(T)}^{L(T)}$, the
shift $B$ defined by the equation $\bs H_T(B\omega,k) =
\bs  H_T(\omega, k+1)$ is ergodic. Since the operator $H_T H_T^*$ satisfies
the identity
$(H_TH_T^*)(B\omega) = U^N (H_TH_T^*)(\omega) U^{-N}$, it is also ergodic.
The proof of Proposition~\ref{pas-fig} goes along nearly without modification.
Let us just check that the Stieltjes transform of $\mu_T$ is
$N^{-1} \tr \EE Q(0,0)(z)$.
Writing
$g(H_TH_T^*)_{p,q}(k,\ell) =
            \langle e_{kN+p}, g(H_TH_T^*) e_{\ell N+q} \rangle$,
we get by mimicking the derivation above that
\[
\frac{1}{(2n+1)N} \sum_{r=0}^{N-1}\sum_{i=-n}^n g(H_TH_T^*)_{r,r}(i,i)
\toaslong
\frac 1N \tr \EE g(H_TH_T^*)(0,0) .
\]
Taking $g(\lambda) = (\lambda-z)^{-1}$, we get the result.

\subsection*{Proof of Theorem~\ref{I-ids}}
As is well known (see \emph{e.g.} \cite{tel-95}),
\begin{align*}
I(S; (Y,H))
&= \limsup_n \frac{1}{2n+1} \Bigl( I(S^n; H^n) + I(S^n; Y^n \, | \, H^n)\Bigr)
\\
&= \limsup_n \frac{1}{2n+1} I(S^n; Y^n \, | \, H^n)  \\
&= \limsup_n \frac{1}{2n+1} \EE \log\det( H^n {H^n}^* + I_{(2n+1)N} )
\end{align*}
by the independence of $S^n$ and $H^n$.
Let $\nu^n_T$ be the NCM of the random matrix $H^n {H^n}^*$. Then
\[
\frac{1}{(2n+1)N} \log\det( H^n {H^n}^* + I ) = \int\log(1+\lambda) \,
      \nu^n_T(d\lambda) .
\]
The measure $\nu^n_T$ satisfies
\begin{align}
\int \lambda \, \nu^n_T(d\lambda) &= \frac{\tr (H^n {H^n}^*)}{(2n+1)N}
\nonumber \\
&=
\frac{1}{(2n+1)N} \sum_{m=-n}^n \sum_{\ell=-L}^L \tr H(m,m-\ell) H(m,m-\ell)^*
\nonumber \\
&\leq
\frac{2}{(2n+1)N} \sum_{m=-n}^n \sum_{\ell=-L}^L
(\tr A(\ell) A(\ell)^* + \tr X(m,m-\ell) X(m,m-\ell)^* ) \nonumber \\
&=
\frac 2N \sum_{\ell=-L}^L \tr A(\ell) A(\ell)^* \nonumber \\
&\phantom{=}
+  2 \sum_{\ell=-L}^L \frac{\phi_T(\ell)^2}{T}
\frac{1}{(2n+1)N}\sum_{m=-n}^n \tr W(m,m-\ell) W(m,m-\ell)^* .
\label{Enu}
\end{align}
Since the process $\{W(m,m-\ell) W(m,m-\ell)^*\}_{m\in\ZZ}$ is ergodic,
it holds that
\[
\frac{1}{2n+1}\sum_{m=-n}^n W(m,m-\ell) W(m,m-\ell)^*
\toaslong T \, I_N .
\]
We therefore get that
\[
\limsup_n
\int \lambda \, \nu^n_T(d\lambda) \leq
2( \sum_{\ell=-L}^L \| A_T(\ell) \|^2 + \sigma_T^2)
\quad \text{w.p. 1}.
\]
Let us consider an elementary event in the probability one set where
$\nu^n_T$  converges weakly to $\mu_T$ by Prop.~\ref{pas-fig} and where
$\sup_n \int \lambda d\nu^n_T < \infty$. 
By uniform integrability, we get that  
$\int \log(1+\lambda) \nu^n_T(d\lambda) \to
\int \log(1+\lambda) \mu_T(d\lambda) < \infty$ on the elementary event we
just considered. Consequently,
$\int \log(1+\lambda) \nu^n_T(d\lambda) \to
\int \log(1+\lambda) \mu_T(d\lambda) < \infty$ almost surely. \\
Getting back to the inequality~\eqref{Enu}, it is clear that
$\sup_n \EE \int \lambda \nu^n_T(d\lambda) \leq
2( \sum_\ell \| A_T(\ell) \|^2 + \sigma_T^2)$.
Therefore, $\sup_n \EE (\int \log(1+\lambda) \nu^n_T(d\lambda))^2 < \infty$
which shows that
\[
\EE \int \log(1+\lambda) \nu^n_T(d\lambda)
\xrightarrow[n\to\infty]{}
\int \log(1+\lambda) \mu_T(d\lambda)  < \infty
\]
by uniform integrability.

\section{Proof of Theorem~\ref{det-eq}}
\label{prf-det-eq}

We start by showing the uniqueness of the solution
$( \bd_T(\cdot,z), \tbd_T(\cdot,z) )$ satisfying the integrability and the
non-negativity conditions.
We then provide a constructive existence proof, and prove in passing that the
constructed solution satisfies the Stieltjes transform properties provided in
the statement.

\subsection{The uniqueness}
\label{unique}

The general idea of the uniqueness proof can be found in
\emph{e.g.}~\cite{DozSil07a} or~\cite{hac-ihp-bilin13}.
Applying this idea to the case of this article requires some specific work.
We fix $z \in\CC_+$ and we re-denote herein $\bd(f,z)$, $\SS(f,z)$, etc.
as $\bd(f)$, $\SS(f)$, etc.~for simplicity. \\
We start with some lemmas. Assume that $(\bd_T(f), \tbd_T(f) )$ is a solution.
Then
\begin{lemma}
\label{sys>0}
The solution $(\bd_T(f), \tbd_T(f) )$ satisfies the equation
\begin{align*}
\begin{bmatrix} \Im \bd(f) \\ \Im(z\tbd(f)) \end{bmatrix} &=
\int_0^1
\begin{bmatrix}
K_{11}^{\bd,\tbd}(f,u) & K_{12}^{\bd,\tbd}(f,u) \\
K_{21}^{\bd,\tbd}(f,u) & K_{22}^{\bd,\tbd}(f,u) \end{bmatrix}
\begin{bmatrix} \Im \bd(u) \\ \Im(z\tbd(u)) \end{bmatrix}
du \\
&\phantom{=}  + \Im z
\begin{bmatrix} F_{1}^{\bd,\tbd}(f) \\ F_{2}^{\bd,\tbd}(f) \end{bmatrix} ,
\quad f\in[0,1] ,
\end{align*}
where
\begin{align*}
K_{11}^{\bd,\tbd}(f,u) &=
    \frac{\tr \SS(f) ({\bs A}{\bs A}^*)(f) \SS(f)^*}
      {T |1+\bz(f)|^2} \sigma^2 \bs\gamma(u-f), \\
K_{12}^{\bd,\tbd}(f,u) &=
    \frac{\tr \SS(f) \SS(f)^*}{T} \sigma^2 \bs\gamma(f-u) , \\
K_{21}^{\bd,\tbd}(f,u) &=
    \frac{\tr |z|^2 \tSS(f) \tSS(f)^*}{T} \sigma^2 \bs\gamma(u-f), \\
K_{22}^{\bd,\tbd}(f,u) &=
    \frac{\tr \tSS(f) ({\bs A}^*{\bs A})(f) \tSS(f)^*}
      {T |1+\tbz(f)|^2} \sigma^2 \bs\gamma(f-u),
\end{align*}
\begin{align*}
F_{1}^{\bd,\tbd}(f) &= \frac{\tr \SS(f) \SS(f)^*}{T} , \\
F_{2}^{\bd,\tbd}(f) &=
    \frac{\tr \tSS(f) ({\bs A}^*{\bs A})(f) \tSS(f)^*}
      {T |1+\tbz(f)|^2} ,
\end{align*}
\[
\bz(f) = \sigma^2 \bs\gamma(-f)\star \bd(f), \quad\text{and}\quad
\tbz(f) = \sigma^2 \bs\gamma(f)\star \tbd(f) .
\]
\end{lemma}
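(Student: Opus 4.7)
The strategy rests on the matrix identity $\Im(M^{-1}) = -M^{-1}(\Im M)M^{-*}$, valid for any invertible complex matrix $M$ with the convention $\Im M = (M-M^*)/(2\imath)$. Applied to the inverse representations (4.1)--(4.2), this converts the imaginary parts of $\SS(f)$ and of $z\tSS(f)$ into quadratic forms whose ingredients are precisely the quantities appearing in the claimed kernels and forcing terms. A preliminary observation used throughout is that $\bs\gamma_T(\cdot)$ is real-valued: the covariance satisfies $\gamma_T(-k)=\overline{\gamma_T(k)}$, which forces $\overline{\bs\gamma_T(f)} = \bs\gamma_T(f)$. Hence the imaginary part commutes with the convolutions $\bs\gamma_T(\pm f)\star\bd(f)$ and $\bs\gamma_T(\pm f)\star\tbd(f)$.

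For the first row, set $M(f) = \SS(f)^{-1}$. From (4.1), and using $\Im(1/w) = -\Im w/|w|^2$,
\[
\Im M(f) = -\bigl(\Im z + \Im(z\tbz(f))\bigr) I_N \;-\; \frac{\Im \bz(f)}{|1+\bz(f)|^2}\,({\bs A}{\bs A}^*)(f).
\]
Multiplying by $-\SS(f)$ on the left and $\SS(f)^*$ on the right, taking the normalized trace, and recalling $\bd(f)=\tr\SS(f)/T$, one directly reads off
\[
\Im\bd(f) = \Im z\, F_1^{\bd,\tbd}(f) + \frac{\tr(\SS\SS^*)}{T}\,\Im(z\tbz(f)) + \frac{\tr(\SS({\bs A}{\bs A}^*)\SS^*)}{T|1+\bz(f)|^2}\,\Im\bz(f).
\]
Inserting $\Im\bz(f)=\sigma^2\!\int\bs\gamma(u-f)\Im\bd(u)\,du$ and $\Im(z\tbz(f))=\sigma^2\!\int\bs\gamma(f-u)\Im(z\tbd(u))\,du$ reproduces the first row of the system with exactly the stated $K_{11}^{\bd,\tbd}, K_{12}^{\bd,\tbd}$ and $F_1^{\bd,\tbd}$.

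For the second row, the key trick is to work not with $\tSS(f)$ but with $V(f):=z\tSS(f)$, because the target on the left-hand side is $\Im(z\tbd(f))=\tr(\Im V(f))/T$. From (4.2),
\[
V(f)^{-1} = z^{-1}\tSS(f)^{-1} = -(1+\bz(f))I_T + \bigl(z(1+\tbz(f))\bigr)^{-1}({\bs A}^*{\bs A})(f),
\]
so that
\[
\Im V(f)^{-1} = -\Im\bz(f)\,I_T - \frac{\Im z + \Im(z\tbz(f))}{|z|^2\,|1+\tbz(f)|^2}\,({\bs A}^*{\bs A})(f).
\]
Applying again $\Im V = -V(\Im V^{-1})V^*$ and using $VV^* = |z|^2\tSS\tSS^*$ and $V({\bs A}^*{\bs A})V^* = |z|^2\tSS({\bs A}^*{\bs A})\tSS^*$, the factor $|z|^{-2}$ cancels in the second term and survives as $|z|^2$ in the first, giving after $\tr(\cdot)/T$:
\[
\Im(z\tbd(f)) = \Im z\, F_2^{\bd,\tbd}(f) + \frac{\tr(\tSS({\bs A}^*{\bs A})\tSS^*)}{T|1+\tbz(f)|^2}\Im(z\tbz(f)) + \frac{|z|^2\tr(\tSS\tSS^*)}{T}\Im\bz(f).
\]
Substituting the convolution expressions for $\Im\bz(f)$ and $\Im(z\tbz(f))$ as above yields the second row with exactly the stated $K_{21}^{\bd,\tbd}, K_{22}^{\bd,\tbd}$ and $F_2^{\bd,\tbd}$.

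There is no serious mathematical obstacle; the proof is essentially bookkeeping. The only subtle point is to process the two rows asymmetrically: one forms $\Im\SS(f)$ directly for the first equation, but passes through $\Im(z\tSS(f))$ for the second, so that the natural unknowns $(\Im\bd,\Im(z\tbd))$ rather than $(\Im\bd,\Im\tbd)$ emerge. Once this asymmetry is adopted and the reality of $\bs\gamma_T$ is invoked to commute $\Im$ with convolution, the four kernels and the two forcing terms appear immediately in the stated form, and all of them are nonnegative because $\SS\SS^*$, $\tSS\tSS^*$, $\SS({\bs A}{\bs A}^*)\SS^*$, $\tSS({\bs A}^*{\bs A})\tSS^*$ are positive semidefinite, $\bs\gamma_T\ge 0$, and $|1+\bz|^2, |1+\tbz|^2 >0$.
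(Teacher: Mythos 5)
Your proof is correct and is essentially the same argument as the paper's: you use the identity $\Im M = -M\,(\Im M^{-1})\,M^*$ (equivalently, $\SS-\SS^* = \SS(\SS^{-*}-\SS^{-1})\SS^*$) applied to $\SS$ and to $z\tSS$, take normalized traces, and then unfold $\Im\bz$, $\Im(z\tbz)$ via the reality of $\bs\gamma_T$. The only cosmetic difference is that the paper computes $\Im\bd$ and $\Im(z\tbd)$ directly from the expanded difference of resolvents rather than stating the $\Im(M^{-1})$ lemma explicitly, but the algebra is identical.
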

\begin{proof}
We have
\begin{align*}
\Im \bd(f) &= \frac{\bd - \bar\bd}{2\imath} \\
&= \frac{1}{2\imath T} \tr \SS ( \SS^{-*} - \SS^{-1} ) \SS^* \\
&= \frac{1}{2\imath T} \tr \SS \Bigl( (z - \bar z) I  +
\sigma^2 \bs\gamma(f) \star ( z\tbd(f) - \bar z \bar\tbd(f) ) I \\
& \ \ \ \ \ \ \ \
+\frac{({\bs A}{\bs A}^*)(f)}{1+\bar\bz(f)}
-\frac{({\bs A}{\bs A}^*)(f)}{1+\bz(f)}
 \Bigr) \SS^* \\
&= \Im z \frac{\tr \SS(f) \SS^*(f)}{T}
+ (\Im(z\tbz(f)) ) \frac{\tr \SS(f)\SS^*(f)}{T} \\
&\phantom{=} + (\Im\bz(f))
\frac{\tr \SS(f) ({\bs A}{\bs A}^*)(f) \SS^*(f)}{T |1+\bz(f)|^2} .
\end{align*}
By a similar derivation, we also have
\begin{align*}
\Im(z\tbd(f)) &=
\frac{1}{2\imath T}
  \tr |z|^2 \tSS ( \bar z^{-1} \tSS^{-*} - z^{-1} \tSS^{-1} ) \tSS^* \\
&=
(\sigma^2 \bs\gamma(-f) \star \Im\bd(f) ) \frac{|z|^2\tr \tSS(f) \tSS^*(f)}{T}
+ \Im z \frac{\tr \tSS(f) ({\bs A}^*{\bs A})(f) \tSS^*(f)}{T |1+\tbz(f)|^2} \\
&\phantom{=}
+ (\sigma^2 \bs\gamma(f) \star \Im(z\tbd(f)) )
\frac{\tr \tSS(f) ({\bs A}^*{\bs A})(f) \tSS^*(f)}{T |1+\tbz(f)|^2}
\end{align*}
hence the result.
\end{proof}

\begin{lemma}
\label{bornes}
For any $f\in[0,1]$, $\Im\bd(f) > 0$, $F_{1}^{\bd,\tbd}(f) > 0$,
$0 < \Im(z\tbd(f)) < C$, and $F_{2}^{\bd,\tbd}(f) > C'
\tr ({\bs A}^*{\bs A})(f)$ where $C$ and $C'$ are positive numbers
that do not depend on $f$.
\end{lemma}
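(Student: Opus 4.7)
The plan is to exploit the sign structure of the imaginary parts of $\SS^{-1}(f)$ and $\tSS^{-1}(f)$ in order to first establish uniform (in $f$) norm bounds on $\SS(f)$ and $\tSS(f)$, and then read off all four claimed estimates. A key observation I use repeatedly is that $\bs\gamma_T(f) \geq 0$, since it is the power spectral density of the stationary process underlying the $W_T$'s; combined with the standing hypotheses $\Im\bd \geq 0$, $\Im\tbd \geq 0$, $\Im(z\bd) \geq 0$, $\Im(z\tbd) \geq 0$, this propagates through the convolutions to $\Im\bz(f), \Im\tbz(f), \Im(z\bz(f)), \Im(z\tbz(f)) \geq 0$ pointwise in $f$.

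I would first compute
\[
\Im \SS^{-1}(f) = -[\Im z + \Im(z\tbz(f))] I_N - \frac{\Im\bz(f)}{|1+\bz(f)|^2} ({\bs A}{\bs A}^*)(f) \leq -\Im z \cdot I_N,
\]
which forces $\SS^{-1}(f)$ to be invertible with $\|\SS(f)\| \leq 1/\Im z$; a symmetric argument gives $\|\tSS(f)\| \leq 1/\Im z$. Consequently $|\bd(f)| \leq N/(T\Im z)$, $|\tbd(f)| \leq 1/\Im z$, and since $\int_0^1 \bs\gamma_T(u)\,du = \gamma_T(0) = 1$, I obtain uniform upper bounds on $|\bz(f)|$ and $|\tbz(f)|$. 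A further crucial bound is
\[
\Im[z(1+\tbz(f))] = \Im z + \Im(z\tbz(f)) \geq \Im z,
\]
which yields $|1+\tbz(f)| \geq \Im z/|z|$, and analogously $|1+\bz(f)| \geq \Im z/|z|$.

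With these preparations, (i) $F_{1}^{\bd,\tbd}(f) > 0$ is immediate from $\SS(f)\SS^*(f) > 0$; (ii) Lemma~\ref{sys>0} then gives $\Im\bd(f) \geq \Im z \cdot F_{1}^{\bd,\tbd}(f) > 0$ because its integral kernels are nonnegative; (iii) $\Im(z\tbd(f)) \leq |z\tbd(f)| \leq |z|/\Im z$, so one may take $C = |z|/\Im z + 1$, and the strict positivity $\Im(z\tbd(f)) > 0$ is read off from the second line of Lemma~\ref{sys>0}, using that $K_{21}^{\bd,\tbd}(f,u)\,\Im\bd(u) \geq 0$ is strictly positive on a set of positive $u$-measure (since $\bs\gamma_T \geq 0$ integrates to $1$ and $\Im\bd > 0$ by (ii)); (iv) for $F_{2}^{\bd,\tbd}$, I use $\tr \tSS(f) M \tSS(f)^* \geq \sigma_{\min}(\tSS(f))^2 \tr M$ valid for $M \geq 0$, and control $\sigma_{\min}(\tSS(f)) = 1/\|\tSS^{-1}(f)\|$ via
\[
\|\tSS^{-1}(f)\| \leq |z|\,|1+\bz(f)| + \|({\bs A}^*{\bs A})(f)\|/|1+\tbz(f)|,
\]
which is uniformly bounded above in $f$ by the previous $|\bz|$-upper bound, Assumption~\ref{bound-A} giving $\|({\bs A}^*{\bs A})(f)\| \leq \bs a^2$, and the $|1+\tbz|$-lower bound just established. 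Together with the uniform upper bound $|1+\tbz(f)| \leq 1 + \sigma_T^2/\Im z$ in the denominator of $F_{2}^{\bd,\tbd}$, this yields $F_{2}^{\bd,\tbd}(f) \geq C'\, \tr({\bs A}^*{\bs A})(f)$ for some $C' > 0$ independent of $f$.

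The main obstacle will be the uniform lower bounds $|1+\bz(f)|, |1+\tbz(f)| \geq \Im z/|z|$: without them, $\SS^{-1}(f)$ and $\tSS^{-1}(f)$ could fail to be uniformly invertible in $f$, and the proof of (iv) would collapse. These bounds rely critically on combining the nonnegativity of the spectral density $\bs\gamma_T$ with the two ``$\Im(z\,\cdot)\geq 0$'' hypotheses on $\bd$ and $\tbd$, which is precisely the place where the admissibility conditions in the statement of Theorem~\ref{det-eq} pay off.
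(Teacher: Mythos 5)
Your proof is correct and follows essentially the same strategy as the paper's. The core ingredients — the nonnegativity of $\bs\gamma_T$, the lower bound $|1+\bz(f)|,\,|1+\tbz(f)| \geq \Im z/|z|$ deduced from $\Im(z\bd),\,\Im(z\tbd)\geq 0$, and the resulting uniform two-sided control of $\SS(f)$ and $\tSS(f)$ — match the paper; the only cosmetic variation is that you route the strict positivity of $\Im\bd$ and $\Im(z\tbd)$ through Lemma~\ref{sys>0}, whereas the paper reads it off the eigenvalue decomposition of $({\bs A}{\bs A}^*)(f)$ directly.
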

\begin{proof}
Let $\lambda_0,\ldots,\lambda_{N-1}$ be the eigenvalues of
$({\bs A}{\bs A}^*)(f)$. We start by establishing an
upper bound on $|-z(1+ \tbz) + \lambda_i/(1+\bz)|$.
Observing that $\max_f \bs\gamma(f) \leq \sum |\gamma_T(\ell)| \leq \bs g$
and writing $\tilde{\bs b}(z) = \int_0^1 | \tbd(f,z) | df < \infty$, we have
$| z(1+ \tbz) | \leq |z|( 1 + \bs\sigma^2 \bs g \tilde{\bs b}(z) )$. Moreover,
$|1+\zeta| = |z + z\zeta|/|z| \geq \Im z / |z|$ since $\Im(z\bd) \geq 0$ by
assumption, and $\bs\gamma(f) \geq 0$. Therefore,
$
| \lambda_i/(1+\bz) | \leq |z| \bs a^2 / \Im z
$
and $|-z(1+ \tbz) + \lambda_i/(1+\bz)| \leq
|z|( 1 + \bs\sigma^2 \bs g \tilde{\bs b}(z) + \bs a^2/\Im z )$.
The inequality $F_{1}^{\bd,\tbd}(f) > 0$ follows readily. Moreover,
\begin{equation}
\label{Imphi}
\Im\bd(f) =
\frac 1T \sum_{i=0}^{N-1}
\frac{\Im(-\bar z(1+ \bar\tbz) + \frac{\lambda_i}{1+\bar\bz})}
{|-z(1+ \tbz) + \frac{\lambda_i}{1+\bz}|^2}
\geq
\frac 1T \sum_{i=0}^{N-1}
\frac{\Im z}{|-z(1+ \tbz) + \frac{\lambda_i}{1+\bz}|^2} > 0 .
\end{equation}
The inequalities on $\Im(z\tbd(f))$ and $F_{2}^{\bd,\tbd}(f)$ are proven
by similar means.
\end{proof}
\begin{lemma}
\label{cont}
The functions $\bd$ and $\tbd$ are continuous on $[0,1]$.
\end{lemma}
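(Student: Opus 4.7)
The plan is to deduce continuity of $\bd$ and $\tbd$ from the identities $\bd(f)=T^{-1}\tr\SS(f)$ and $\tbd(f)=T^{-1}\tr\tSS(f)$, by showing that the matrix pencils $f\mapsto\SS(f)^{-1}$ and $f\mapsto\tSS(f)^{-1}$ are continuous in operator norm and that their inverses are uniformly bounded on $[0,1]$.

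First I would record that the ingredients entering these pencils are themselves continuous in $f$. By Assumption~\ref{Tcoh}, the Fourier series $\bs\gamma(f)=\sum_k e^{2\imath\pi kf}\gamma_T(k)$ converges absolutely and uniformly on $[0,1]$, so $\bs\gamma$ is continuous; moreover $\bs\gamma\geq 0$ since $\gamma_T$ is a positive definite sequence. Because $A_T(k)$ has finite support in $k$, ${\bs A}(f)$ is a matrix trigonometric polynomial, and hence continuous. Using the integrability of $\bd$ and $\tbd$ assumed in Theorem~\ref{det-eq}, dominated convergence (with the envelopes $|\bs\gamma(f-u)\tbd(u)|\leq\|\bs\gamma\|_\infty|\tbd(u)|$ and analogously for $\bd$) shows that $(\bs\gamma\star\tbd)(f)$ and $(\bs\gamma(-\cdot)\star\bd)(f)$, and therefore $\bz(f)$ and $\tbz(f)$, are continuous. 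Combined with the lower bounds $|1+\bz(f)|,|1+\tbz(f)|\geq\Im z/|z|$ already established in the proof of Lemma~\ref{bornes}, the scalar factors $1/(1+\bz(f))$ and $1/(1+\tbz(f))$ are continuous, and so the pencils $\SS(f)^{-1}$ and $\tSS(f)^{-1}$ are continuous in operator norm.

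The decisive step is a uniform operator norm bound on $\SS(f)$ and $\tSS(f)$. The key observation is the Hermitian inequality
\[
\Im\SS(f)^{-1} \;=\; -\bigl(\Im z + \Im(z\tbz(f))\bigr)I_N - \frac{\Im\bz(f)}{|1+\bz(f)|^2}({\bs A}{\bs A}^*)(f) \;\preceq\; -\Im z \cdot I_N,
\]
which follows from $\Im\bz(f)\geq 0$, $\Im(z\tbz(f))\geq 0$, and $({\bs A}{\bs A}^*)(f)\succeq 0$; the analogous bound for $\tSS(f)^{-1}$ uses in addition $\bs\gamma\geq 0$ to transport non-negativity through the convolution. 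The elementary inequality $\|M^{-1}\|\leq 1/\alpha$ for any complex matrix with $\Im M\preceq -\alpha I$ (a two-line consequence of $|x^*Mx|\geq|x^*(\Im M)x|\geq\alpha\|x\|^2$ and Cauchy--Schwarz) then yields $\|\SS(f)\|,\|\tSS(f)\|\leq(\Im z)^{-1}$ uniformly in $f\in[0,1]$. With $\SS(f)^{-1}$ continuous in operator norm and its inverse uniformly bounded, matrix inversion on uniformly bounded invertible families is continuous, giving continuity of $\SS(f)$ and $\tSS(f)$ on $[0,1]$; taking traces completes the proof. I expect this uniform resolvent bound to be the only non-routine step, all other ingredients being elementary continuity arguments or already contained in Lemma~\ref{bornes}.
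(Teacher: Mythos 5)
Your proof is correct and follows the same route as the paper: continuity of $\bz$ and $\tbz$ via absolute summability of $\gamma_T$ plus dominated convergence with the integrable envelopes $|\bd|,|\tbd|$, continuity of $({\bs A}{\bs A}^*)(f)$ as a trigonometric polynomial, and then continuity of $\SS$, $\tSS$ followed by taking traces. The one place you do more work than the paper is the uniform resolvent bound $\|\SS(f)\|\leq(\Im z)^{-1}$; the paper simply asserts that $\SS$ and $\tSS$ are continuous once $\SS^{-1}$ and $\tSS^{-1}$ are, which is already justified because matrix inversion is continuous at every invertible point (e.g.\ via $M^{-1}=\adjugate(M)/\det M$), so no uniformity is needed for a pointwise continuity conclusion. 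Your quantitative bound is nonetheless correct and is exactly the estimate used elsewhere in the paper (inequality~\eqref{den>Imz} in the existence proof), so there is no gap.
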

\begin{proof}
Since $\bd$ is integrable and $\bs\gamma$ is continuous, $\bz$
is continuous by the dominated convergence theorem, and
similarly for $\tbz$. Moreover, $({\bs A}{\bs A}^*)$ is continuous as
a trigonometric polynomial. It follows that the matrix functions
$\SS$ and $\tSS$ are continuous, hence the result.
\end{proof}
To establish the uniqueness of the solution, we observe that by setting
\[
J_1^{\bd,\tbd}(f,u) =
\frac{K_{11}^{\bd,\tbd}(f,u)}{\Im \bd(f)} \Im\bd(u)
+ \frac{K_{12}^{\bd,\tbd}(f,u)}{\Im \bd(f)} \Im(z\tbd(u))
\]
and
\[
J_2^{\bd,\tbd}(f,u) =
\frac{K_{21}^{\bd,\tbd}(f,u)}{\Im(z\tbd(f))} \Im\bd(u)
+ \frac{K_{22}^{\bd,\tbd}(f,u)}{\Im(z\tbd(f))} \Im(z\tbd(u))
\]
on $[0,1]^2$, Lemmas~\ref{sys>0} and~\ref{bornes} show that
\begin{equation}
\label{J1}
0 < \int_0^1  J_1^{\bd,\tbd}(f,u,z) \, du
\leq 1 - \Im z \frac{F_{1}^{\bd,\tbd}(f,z)}{\Im \bd(f,z)} < 1
\end{equation}
and
\begin{equation}
\label{J2}
0 < \int_0^1  J_2^{\bd,\tbd}(f,u,z) \, du
\leq 1 - \Im z \frac{F_{2}^{\bd,\tbd}(f,z)}{\Im(z\tbd(f,z))}
\leq 1 - C \tr ({\bs A}^*{\bs A})(f)
\end{equation}
where $C > 0$.
Assume now that $(\bd, \tbd)$ and $(\bd', \tbd')$ are two solutions.
Denote respectively by $(\SS, \tSS)$ and $(\SS', \tSS')$ the matrix
functions associated to these solutions as in the statement of the proposition.
Write $\Delta\bd(f) = \bd(f) - \bd'(f)$ and
$z\Delta \tbd(f) = z\tbd(f) - z \tbd'(f)$.
Mimicking the proof of Lemma~\ref{sys>0}, we get
\begin{align*}
\Delta\bd(f)
&= \frac{1}{T} \tr \SS ( \SS'^{-1} - \SS^{-1} ) \SS' \\
&= \int_0^1 ( {\cal K}_{11}(f,u) \Delta\bd(u)
        + {\cal K}_{12}(f,u) z\Delta\tbd(u) ) \, du
\end{align*}
and
\begin{align*}
z\Delta\tbd(f)
&= \frac{z}{T} \tr \tSS ( \tSS'^{-1} - \tSS^{-1} ) \tSS' \\
&= \int_0^1 ( {\cal K}_{21}(f,u) \Delta\bd(u)
  + {\cal K}_{22}(f,u) z\Delta\tbd(u) ) \, du
\end{align*}
where
\begin{equation}
\label{Kij}
\begin{split}
\cal K_{11}(f,u) &=
    \frac{\tr \SS(f) ({\bs A}{\bs A}^*)(f) \SS'(f)}
      {T (1+\sigma^2\bs\gamma(-f)\star\bd(f))
          (1+\sigma^2\bs\gamma(-f)\star\bd'(f))} \sigma^2 \bs\gamma(u-f), \\
\cal K_{12}(f,u) &=
    \frac{\tr \SS(f) \SS'(f)}{T} \sigma^2 \bs\gamma(f-u) , \\
\cal K_{21}(f,u) &=
    \frac{\tr z^2 \tSS(f) \tSS'(f)}{T} \sigma^2 \bs\gamma(u-f), \\
\cal K_{22}(f,u) &=
    \frac{\tr \tSS(f) ({\bs A}^*{\bs A})(f) \tSS'(f)}
      {T (1+\sigma^2\bs\gamma(f)\star\tbd(f))
          (1+\sigma^2\bs\gamma(f)\star\tbd'(f))} \sigma^2 \bs\gamma(f-u) .
\end{split}
\end{equation}
Let
\[
\varepsilon(f) = \frac{\Delta\bd(f)}{\sqrt{ \Im\bd(f) \Im\bd'(f)}}
\quad \text{and} \quad
\tilde \varepsilon(f) = \frac{z\Delta\tbd(f)}
       {\sqrt{ \Im(z\tbd(f)) \Im(z\tbd'(f))}} .
\]
Using the inequality $|\tr(M_1M_2)| \leq \sqrt{\tr(M_1M_1^*)}
\sqrt{\tr(M_2M_2^*)}$ for any two matrices $M_1$ and $M_2$ with compatible
dimensions, we get
\begin{multline*}
|\varepsilon(f)| \leq
\int_0^1
\Bigl(\frac{\sqrt{K_{11}^{\bd,\tbd}(f,u) \Im\bd(u)}}{\sqrt{\Im\bd(f)}}
\frac{\sqrt{K_{11}^{\bd',\tbd'}(f,u) \Im\bd'(u)}}{\sqrt{\Im\bd'(f)}}
|\varepsilon(u)| \\
+
\frac{\sqrt{K_{12}^{\bd,\tbd}(f,u) \Im(z\tbd(u))}}{\sqrt{\Im\bd(f)}}
\frac{\sqrt{K_{12}^{\bd',\tbd'}(f,u) \Im(z\tbd'(u))}}{\sqrt{\Im\bd'(f)}}
|\tilde \varepsilon(u)| \Bigr) \, du
\end{multline*}
Let $\bs \varepsilon = \sup_{f\in[0,1]} (|\varepsilon(f)| \vee |\tilde \varepsilon(f)|)$. Using the
inequality $\sqrt{ac} + \sqrt{bd} \leq \sqrt{a+b} \sqrt{c+d}$ where
$a,b,c,d \geq 0$ along with the Cauchy-Schwarz inequality, we get
\[
|\varepsilon(f)| \leq \Bigl(\int_0^1 J_1^{\bd,\tbd}(f,u)\, du\Bigr)^{1/2}
\Bigl(\int_0^1 J_1^{\bd',\tbd'}(f,u)\, du\Bigr)^{1/2} \bs \varepsilon
\]
We shall assume that $\bs \varepsilon > 0$ and obtain a contradiction.
By the last inequality and~\eqref{J1}, we have $|\varepsilon(f)| < \bs \varepsilon$.
Lemmas~\ref{bornes} and \ref{cont} show that $\varepsilon(f)$ is continuous on $[0,1]$,
hence $\sup_{f\in[0,1]} |\varepsilon(f)| < \bs \varepsilon$.
Turning to $\tilde \varepsilon(f)$, we get by a similar argument that
\begin{multline*}
|\tilde \varepsilon(f)| \leq
\int_0^1
\Bigl(\frac{\sqrt{K_{21}^{\bd,\tbd}(f,u) \Im\bd(u)}}{\sqrt{\Im(z\tbd(f))}}
\frac{\sqrt{K_{21}^{\bd',\tbd'}(f,u) \Im\bd'(u)}}{\sqrt{\Im(z\tbd'(f))}}
|\varepsilon(u)| \\
+
\frac{\sqrt{K_{22}^{\bd,\tbd}(f,u) \Im(z\tbd(u))}}{\sqrt{\Im(z\tbd(f))}}
\frac{\sqrt{K_{22}^{\bd',\tbd'}(f,u) \Im(z\tbd'(u))}}{\sqrt{\Im(z\tbd'(f))}}
|\tilde \varepsilon(u)| \Bigr) \, du .
\end{multline*}
If  $({\bs A}^*{\bs A})(f) = 0$, then
$K_{22}^{\bd,\tbd}(f,u) = K_{22}^{\bd',\tbd'}(f,u) = 0$ for any $u\in[0,1]$.
But then, Inequalities~\eqref{J2} show that
$|\tilde \varepsilon(f)| \leq \sup_{u\in[0,1]} |\varepsilon(u)| < \bs \varepsilon$. On the other hand,
if $\tr ({\bs A}^*{\bs A})(f) >0$, then
\[
|\tilde \varepsilon(f)| \leq \Bigl(\int_0^1 J_2^{\bd,\tbd}(f,u)\, du\Bigr)^{1/2}
\Bigl(\int_0^1 J_2^{\bd',\tbd'}(f,u)\, du\Bigr)^{1/2} \bs \varepsilon
< \bs \varepsilon
\]
by~\ref{J2} again. Therefore, we also have
$\sup_{f\in[0,1]} |\tilde \varepsilon(f)| < \bs \varepsilon$ by the continuity
of $\tilde \varepsilon(f)$, which leads to a contradiction. Uniqueness is
established.

\subsection{The existence}
\label{exist}
Starting with the functions $\bd^{(0)}(f,z) = -(N/T) z^{-1}$ and
$\tbd^{(0)}(f,z) = -z^{-1}$ on $[0,1] \times \CC_+$,
define the recursion
\[
(\bd^{(k+1)}(f,z) , \tbd^{(k+1)}(f,z)) = h_z(\bd^{(k)}(f,z),\tbd^{(k)}(f,z))
\]
with
\begin{gather*}
 h_z(\bd^{(k)}(f,z), \tbd^{(k)}(f,z)) =
\Bigl(T^{-1} \tr \SS^{(k)}(f,z), T^{-1} \tr \tSS^{(k)}(f,z) \Bigr), \\
\SS^{(k)}(f,z) = \left[
-z ( 1 + \tbz^{(k)}(f,z)) I +
    \frac{({\bs A}{\bs A}^*)(f)}{1+\bz^{(k)}(f,z)}\right]^{-1} , \\
\tSS^{(k)}(f,z) = \left[
-z ( 1 + \bz^{(k)}(f,z)) I +
    \frac{({\bs A}^*{\bs A})(f)}{1+\tbz^{(k)}(f,z)}\right]^{-1}, \\
\bz^{(k)}(f,z) = \sigma^2 \bs\gamma(-f)\star \bd^{(k)}(f,z),
\quad \text{and} \quad
\tbz^{(k)}(f,z) = \sigma^2 \bs\gamma(f)\star \tbd^{(k)}(f,z).
\end{gather*}
We shall show that the sequence $(\bd^{(k)}(f,z) , \tbd^{(k)}(f,z))$
converges on $[0,1]\times\CC_+$, and that the limit $(\bd(f,z) , \tbd(f,z))$
is the solution of the system described in the statement of
Theorem~\ref{det-eq}.
\begin{lemma}
\label{recursion}
The following facts hold true:
\begin{itemize}
\item[i)] For any $k\in\NN$ and any $f\in[0,1]$, the functions
$\bd^{(k)}(f,\cdot)$ and $\tbd^{(k)}(f,\cdot)$ are holomorphic on $\CC_+$ and
satisfy $|\bd^{(k)}(f,z)| \leq \bs c/\Im z$ and
$|\tbd^{(k)}(f,z)| \leq 1/\Im z$. Furthermore,
$\Im \bd^{(k)}(f,z)$, $\Im (z\bd^{(k)}(f,z))$,
$\Im \tbd^{(k)}(f,z)$, and $\Im(z\tbd^{(k)}(f,z))$ are all positive for
$z\in\CC_+$.
\item[ii)] In the region
\[
{\cal R} = \Bigl\{ z \in\CC_+ \ : \
(\bs c\vee 1) \bs\sigma^2 \Bigl( \frac{\bs a^2 |z|^2}{(\Im z)^4}
+ \frac{|z|}{(\Im z)^2} \Bigr) < \frac 12
\Bigr\},
\]
it holds that
\begin{multline*}
| \bd^{(k+1)}(f,z) - \bd^{(k)}(f,z) | \vee
              | \tbd^{(k+1)}(f,z) - \tbd^{(k)}(f,z) |  \\
  < \frac 12
      ( | \bd^{(k)}(f,z) - \bd^{(k-1)}(f,z) | \vee
            | \tbd^{(k)}(f,z) - \tbd^{(k-1)}(f,z) | )  .
\end{multline*}

\item[iii)] For any $f, f' \in [0,1]$ and any $z\in\CC_+$,
\begin{multline*}
| \bd^{(k)}(f,z) - \bd^{(k)}(f',z) | \vee
              | \tbd^{(k)}(f,z) - \tbd^{(k)}(f',z) |  \\
\leq \frac{(\bs c\vee 1) |z|}{(\Im z)^3} \Bigl(
\Bigl( \bs\sigma^2 + \frac{|z| \bs\sigma^2\bs a^2}{(\Im z)^2} \Bigr)
\int_0^1 | \bs\gamma(f - f' + u) - \bs\gamma(u) | \, du
+ 4\pi L \bs a^2 | f - f' | \Bigr) .
\end{multline*}
\end{itemize}
\end{lemma}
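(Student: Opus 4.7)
The plan is to establish (i)--(iii) by induction on $k$, propagating the bounds $\|\SS^{(k)}\|,\|\tSS^{(k)}\|\le 1/\Im z$ along the way and exploiting the resolvent-like identity
\[
\SS^{(k)} - \SS^{(k-1)} = \SS^{(k)}\bigl(\SS^{(k-1)-1} - \SS^{(k)-1}\bigr)\SS^{(k-1)}
\]
as the main engine for the differential estimates in (ii) and (iii). The base case $k=0$ in (i) is an explicit calculation on $\bd^{(0)}=-(N/T)/z$, $\tbd^{(0)}=-1/z$. For the inductive step, I would use $\bs\gamma\ge 0$ and $\int_0^1\bs\gamma(f)\,df=\gamma_T(0)=1$ to transfer positivity of $\Im\bd^{(k)}$ and $\Im(z\tbd^{(k)})$ to $\Im\bz^{(k)}$ and $\Im(z\tbz^{(k)})$, and then note that
\[
\Im\bigl(\SS^{(k)-1}(f,z)\bigr) = -\bigl(\Im z + \Im(z\tbz^{(k)})\bigr) I - \frac{\Im\bz^{(k)}}{|1+\bz^{(k)}|^2}({\bs A}{\bs A}^*)(f) \le -\Im z\cdot I
\]
is negative-definite. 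The identity $\Im(M^{-1}) = -M^{-1}(\Im M)M^{-*}$ then yields $\Im\SS^{(k)}>0$ and $\|\SS^{(k)}\|\le 1/\Im z$, from which $\Im\bd^{(k+1)}>0$ and $|\bd^{(k+1)}|\le\bs c/\Im z$ follow by taking traces. Positivity of $\Im(z\bd^{(k+1)})$ comes from the same trick applied to $z^{-1}\SS^{(k)-1}$, whose imaginary part is $-\Im\tbz^{(k)} I - \bigl(\Im z+\Im(z\bz^{(k)})\bigr)/|z(1+\bz^{(k)})|^2\,({\bs A}{\bs A}^*)(f)$, again negative-definite. Holomorphy in $z$ is preserved at each step because invertibility has been established; the calculation for $\tbd^{(k+1)}$ is the mirror image.

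For (ii), I would expand
\[
\SS^{(k-1)-1} - \SS^{(k)-1} = -z(\tbz^{(k-1)}-\tbz^{(k)})\,I - \frac{({\bs A}{\bs A}^*)(f)}{(1+\bz^{(k-1)})(1+\bz^{(k)})}(\bz^{(k-1)}-\bz^{(k)}),
\]
take $T^{-1}\tr$ of both sides sandwiched by $\SS^{(k)}$ and $\SS^{(k-1)}$, and combine $\|\SS^{(k)}\|,\|\SS^{(k-1)}\|\le 1/\Im z$ from (i), the bound $|1+\bz^{(k)}|^{-1}\le |z|/\Im z$ from the proof of Lemma~\ref{bornes}, $\|({\bs A}{\bs A}^*)(f)\|\le\bs a^2$, $|T^{-1}\tr M|\le\bs c\|M\|$ for $N\times N$ matrices, and $\|\bs\gamma\|_{L^1[0,1]}=1$ (turning convolution by $\bs\gamma$ into an $L^\infty$-contraction, so that $\|\tbz^{(k-1)}-\tbz^{(k)}\|_\infty\le\bs\sigma^2\|\tbd^{(k-1)}-\tbd^{(k)}\|_\infty$, and analogously for $\bz$). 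Assembling yields
\[
|\bd^{(k+1)}-\bd^{(k)}|(f,z) \le \bs c\,\bs\sigma^2\Bigl(\frac{|z|}{(\Im z)^2}+\frac{|z|^2\bs a^2}{(\Im z)^4}\Bigr)\sup_{u}\bigl(|\bd^{(k)}-\bd^{(k-1)}|(u,z) \vee |\tbd^{(k)}-\tbd^{(k-1)}|(u,z)\bigr),
\]
with the mirror bound for $\tbd^{(k+1)}-\tbd^{(k)}$ replacing $\bs c$ by $1$. Absorbing both prefactors into $\bs c\vee 1$, the defining inequality of $\cal R$ is exactly the requirement that this coefficient be below $1/2$.

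For (iii), the same identity is applied with differences in the frequency $f$ rather than in $k$. The inverse difference $\SS^{(k-1)-1}(f)-\SS^{(k-1)-1}(f')$ decomposes into three pieces: one from $\tbz^{(k-1)}$, one from $({\bs A}{\bs A}^*)$, and one from $\bz^{(k-1)}$. The two $\bs\zeta$-differences are each bounded by $(\bs\sigma^2/\Im z)\int_0^1|\bs\gamma(f-f'+u)-\bs\gamma(u)|\,du$ after a translation change of variables in the convolution and invoking (i) to control $|\bd^{(k-1)}|,|\tbd^{(k-1)}|$; meanwhile $\|({\bs A}{\bs A}^*)(f)-({\bs A}{\bs A}^*)(f')\|\le 4\pi L\bs a^2\,|f-f'|$ follows from differentiating the trigonometric polynomial $\sum_k A_T(k) e^{-2\pi\imath k f}$ term-by-term and using Assumption~\ref{bound-A}. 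Combining the three contributions with the resolvent norm bounds from (i) and the factor $|z|/\Im z$ coming from the $(1+\bz)$-denominator produces the claimed inequality.

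The main obstacle will be the algebraic bookkeeping in (ii): the contraction coefficient must match $\cal R$ tightly, forcing me to use the sharp bound $|1+\bz^{(k)}|^{-1}\le|z|/\Im z$ rather than a cruder estimate and to exploit the normalisation $\|\bs\gamma\|_{L^1}=1$ exactly so that no extraneous factors appear in the convolution estimate. A parallel delicate point is keeping the symmetric $(\SS,\tSS)$-calculation aligned so that the prefactor is $\bs c\vee 1$ (rather than $\bs c$ alone or $1$ alone) uniformly in all three statements.
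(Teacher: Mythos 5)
Your proposal is correct and follows essentially the same route as the paper: induction on $k$, the resolvent identity $\SS^{(k)}-\SS^{(k-1)}=\SS^{(k)}\bigl(\SS^{(k-1)-1}-\SS^{(k)-1}\bigr)\SS^{(k-1)}$ as the engine for parts (ii)–(iii), the convolution $L^1$-contraction and the trigonometric-polynomial Lipschitz bound $\|(\bs A\bs A^*)(f)-(\bs A\bs A^*)(f')\|\le 4\pi L\bs a^2|f-f'|$, and the same kernel bounds producing the contraction coefficient that matches $\cal R$. The only presentational difference is in (i): the paper diagonalizes $({\bs A}{\bs A}^*)(f)$ and bounds each scalar denominator $|-z(1+\tbz^{(k)})+\lambda_i/(1+\bz^{(k)})|\ge\Im z$, whereas you argue at the operator level via negative-definiteness of $\Im\SS^{(k)-1}$ and the identity $\Im(M^{-1})=-M^{-1}(\Im M)M^{-*}$; these are equivalent.
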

\begin{proof}
The analyticity as well as the inequalities stated in \emph{i)} are trivially
true for $\bd^{(0)}$ and for $\tbd^{(0)}$. Assume they are for $\bd^{(k)}$
and $\tbd^{(k)}$. Denoting again by $\lambda_0,\ldots,\lambda_{N-1}$ the
eigenvalues of $({\bs A}{\bs A}^*)(f)$, we have
\begin{equation}
\label{den>Imz}
|-z(1+ \tbz^{(k)}) + \lambda_i/(1+\bz^{(k)})| \geq
\Im z + \Im (z\tbz^{(k)}) + \lambda_i \Im\bz^{(k)} /|1+\bz^{(k)}| \\
\geq \Im z .
\end{equation}
Consequently, $\bd^{(k+1)}$ and $\tbd^{(k+1)}$ are holomorphic on $z\in\CC_+$
for any $f\in[0,1]$ and they satisfy $|\bd^{(k+1)}(f,z)| \leq \bs c/\Im z$ and
$|\tbd^{(k+1)}(f,z)| \leq 1/\Im z$. By reproducing the
inequalities~\eqref{Imphi}, we also show that $\Im\bd^{(k+1)} > 0$. The other
inequalities are proven similarly, which establishes Item~\emph{i)}. \\

We now show \emph{ii)}. Writing
$\Delta^{(k)}(f,z) = \bd^{(k)}(f,z) - \bd^{(k-1)}(f,z)$ and
$\tilde\Delta^{(k)}(f,z) = \tbd^{(k)}(f,z) - \tbd^{(k-1)}(f,z)$, we get by a
derivation similar to the one made in Section~\ref{unique} that
\begin{align*}
\Delta^{k+1}(f,z) &= \int_0^1 ( {\cal K}_{11}^{(k)}(f,u,z) \Delta^{(k)}(u,z)
        + z {\cal K}_{12}^{(k)}(f,u,z) \tilde\Delta^k(u,z) ) \, du  \\
\tilde\Delta^{k+1}(f,z) &=
\int_0^1 ( z^{-1}{\cal K}_{21}^{(k)}(f,u,z) \Delta^{(k)}(u,z)
  + {\cal K}_{22}^{(k)}(f,u,z) \tilde\Delta^{(k)}(u,z) ) \, du
\end{align*}
where the $\cal K_{ij}^{(k)}$ have the same expressions as the $\cal K_{ij}$
in~\eqref{Kij} except that the $\SS, \ldots$ there are replaced with
$\SS^{(k)}, \ldots$ and the $\SS', \ldots$ are replaced with
$\SS^{(k-1)}, \ldots$. Using the bounds established in~\emph{i)}, we readily
obtain
\[
\begin{array}{ccc}
|\cal K_{11}^{(k)}(f,u,z)| &\leq&
      \bs c \bs\sigma^2 \bs a^2 |z|^2 \bs\gamma(u-f) / (\Im z)^4 , \\
|z \cal K_{12}^{(k)}(f,u,z) | &\leq&
                    \bs c \bs\sigma^2 |z| \bs\gamma(f-u) / (\Im z)^2 , \\
|z^{-1} \cal K_{21}^{(k)}(f,u,z) | &\leq&
                          \bs\sigma^2 |z| \bs\gamma(u-f) / (\Im z)^2 , \\
|\cal K_{22}^{(k)}(f,u,z)| &\leq&
          \bs\sigma^2 \bs a^2 |z|^2 \bs\gamma(f-u) / (\Im z)^4 .
\end{array}
\]
Item~\emph{ii)} follows from these inequalities after integrating over $u$. \\
To show \emph{iii)}, we start by writing
$\bd^{(k)}(f) - \bd^{(k)}(f') = T^{-1} \tr \SS^{(k)}(f)
( \SS^{(k)}(f')^{-1} - \SS^{(k)}(f)^{-1})  \SS^{(k)}(f')$ where we omit the
parameter $z$. Developing the right hand side, we get
\begin{align*}
| \tbz^{(k)}(f) - \tbz^{(k)}(f') |
&= \sigma^2 \Bigl|
   \int_0^1 (\bs\gamma(f-u) - \bs\gamma(f'-u)) \tbd^{(k)}(u) \, du \Bigr| \\
&\leq \frac{\sigma^2}{\Im z} \int_0^1 | \bs\gamma(f-f'+u) - \bs\gamma(u) | \, du
\end{align*}
and similarly for $| \bz^{(k)}(f) - \bz^{(k)}(f') |$. Furthermore,
\begin{multline*}
\frac{({\bs A}{\bs A}^*)(f)}{1+\bz^{(k)}(f)} -
    \frac{({\bs A}{\bs A}^*)(f')}{1+\bz^{(k)}(f')} \\
= \frac{({\bs A}{\bs A}^*)(f) - ({\bs A}{\bs A}^*)(f')}{1+\bz^{(k)}(f)} -
    \frac{(\bz^{(k)}(f) - \bz^{(k)}(f'))({\bs A}{\bs A}^*)(f')}
    {(1+\bz^{(k)}(f)) (1+\bz^{(k)}(f'))} .
\end{multline*}
Recalling that ${\bs A}(f)$ is a trigonometric matrix polynomial, it can be
checked that $\| ({\bs A}{\bs A}^*)(f) - ({\bs A}{\bs A}^*)(f')
\| \leq 4\pi L \bs a^2 |f - f'|$. Item~\emph{iii)} is then obtained by a small
calculation.
\end{proof}

For any $z\in\CC_+$, the map $h_z$ defined before the statement of
Lemma~\ref{recursion} is a map on the Banach space of the continuous
$[0,1] \to\CC_+^2$ functions endowed with the maximum of the supremum
norms of the two components.
Showing the existence of the solution amounts to showing the existence of
a fixed point of $h_z$ that satisfies the inequalities provided in the statement
of Theorem~\ref{det-eq}. By Lemma~\ref{recursion}-\emph{ii)}, the map $h_z$
is a contraction for any $z\in\cal R$. Therefore,
$(\bd^{(k)}(\cdot, z), \tbd^{(k)}(\cdot, z))$ converges for any $z\in\cal R$,
and the limit that we denote $(\bd(\cdot, z), \tbd(\cdot, z))$ is a fixed
point of $h_z$. \\

We now show that from every sequence of integers, one can extract a
subsequence $v(k)$
such that $\bd^{(v(k))}(f,z)$ and $\tbd^{(v(k))}(f,z)$ converge uniformly
on the compact subsets of $[0,1] \times \CC_+$. \\
To that end, we start by showing that on every compact set $K \subset \CC_+$,
the families $\bd^{(k)}(f,z)$ and $\tbd^{(k)}(f,z)$ are equicontinuous on
$[0,1] \times K$. Let $2d > 0$ be the distance from $K$ to $\RR$,
and let $z,z'\in K$ be such that $|z - z'| \leq d/2$. Denote by $\cal C$
the positively oriented circle with center $z$ and radius $d$. Since
$\bd^{(k)}(f,z)$ is holomorphic, we have by Cauchy's formula
\[
\bd^{(k)}(f,z) - \bd^{(k)}(f,z') = \frac{z-z'}{2\imath\pi}
\oint_{\cal C} \frac{\bd^{(k)}(f,w)}{(w-z)(w-z')} \, dw
\]
which shows that
\begin{equation}
\label{equi-z}
| \bd^{(k)}(f,z) - \bd^{(k)}(f,z') | \leq \frac{2\bs c}{d^2} | z - z'| .
\end{equation}
Given two couples $(f,z), (f',z') \in [0,1]\times K$, we have
$|\bd^{(k)}(f,z) - \bd^{(k)}(f',z') | \leq |\bd^{(k)}(f,z) - \bd^{(k)}(f',z) |
+ |\bd^{(k)}(f',z) - \bd^{(k)}(f',z') |$.
In conjunction with Inequality~\eqref{equi-z},
Lemma~\ref{recursion}-\emph{iii)} shows then that the families
$\{ \bd^{(k)}(f,z) \}_{k\in\NN}$ and $\{\tbd^{(k)}(f,z) \}_{k\in\NN}$ are
equicontinuous on the compact set $[0,1] \times K$ (note that the
integral at the right hand side of the inequality in
Lemma~\ref{recursion}-\emph{iii)} converges to zero as $f-f' \to 0$ by the
dominated convergence theorem). These families are moreover bounded on
$[0,1] \times K$. Therefore,
by the Arzel\`a-Ascoli theorem, one can extract from every sequence of
integers a subsequence $p(k)$ such that $\bd^{(p(k))}(f,z)$ and
$\tbd^{(p(k))}(f,z)$ converge uniformly on $[0,1] \times K$. Considering a
sequence of compact subsets of $\CC_+$ who is increasing with respect to
the inclusion and whose union is $\CC_+$, one can establish the existence of
the sequence $v(k)$ by the diagonal process.

Note that the respective limit functions $\bd_v(f,z)$ and $\tbd_v(f,z)$
of $\bd^{(v(k))}(f,z)$ and $\tbd^{(v(k))}(f,z)$ are continuous in the
variable $f$ and holomorphic in the variable $z$. They also satisfy
$|\bd_v(f,z)|\leq \bs c/\Im z$, $|\tbd_v(f,z)|\leq 1/\Im z$, and
$\Im \bd_v$, $\Im \tbd_v$, $\Im (z \bd_v)$, and $\Im (z \tbd_v)$ are
nonnegative by Lemma~\ref{recursion}-\emph{i)} and a passage to the limit.
Furthermore, $h_z(\bd_v(f,z),\tbd_v(f,z))$ exists and is holomorphic on
$\CC_+$. \\
For $z\in\cal R$, $(\bd_v(f,z),\tbd_v(f,z))$ clearly coincides with
$(\bd(f,z), \tbd(f,z))$. Thus
$(\bd_v(f,z),\tbd_v(f,z)) - h_z(\bd_v(f,z),\tbd_v(f,z)) = 0$
on $\cal R$, hence on all $\CC_+$ by analyticity. \\

We just showed that $(\bd^{(k)},\tbd^{(k)})$ converges to the unique solution
$(\bd, \tbd)$ of the system in the statement of Theorem~\ref{det-eq}.
The functions $\bd$ and $\tbd$ are analytical in the variable $z$ and they
satisfy $|\bd(f,z)|\leq \bs c/\Im z$ and $|\tbd(f,z)|\leq 1/\Im z$ along
with the non-negativity conditions.
Hence, for any $f\in[0,1]$, $(T/N) \bd(f,z)$ and $\tbd(f,z)$ are the
Stieltjes transforms of finite positive measures carried by $[0,\infty)$.
By Lemma~\ref{bornes}, $\Im \bd$, $\Im \tbd$, $\Im (z \bd)$, and
$\Im (z \tbd)$ are positive. It remains to show that these measures are
probability measures. We have
\[
- \frac TN \imath y\bd(f,\imath y) =
\frac 1N\sum_{i=1}^N
\frac{-\imath y}
{-\imath y( 1 + \bs\sigma^2\bs\gamma(f)\star\tbd(f,\imath y))
+ \frac{\lambda_i}{1 + \bs\sigma^2\bs\gamma(-f)\star\bd(f,\imath y)}}  .
\]
For any $i$, the denominator behaves as $-\imath y$ when $y\to\infty$.
Hence $-(T/N) \imath y\bd(f,\imath y) \to 1$ as $y\to\infty$, which
establishes the result for $(T/N) \bd$. The result for $\tbd$ is proven
similarly.

\section{Proofs of Theorems~\ref{approx} and~\ref{capa}}
\label{prf-approx}

We start with some notations. Recall that $Q_T(z) = (H_TH_T^* - z I)$.
Denote by $\tQ(z) = (H_T^* H_T - z I)^{-1}$ the resolvent of the self-adjoint
operator $H_T^* H_T$.
The complex number $z\in\CC_+$ will be always written as $z = x+\imath y$.
When there is no ambiguity, the parameter $z$ or the index $T$ will be omitted
for notational simplicity.
We shall denote as $B_{m,n}(k,\ell)$ or $[B]_{m,n}(k,\ell)$ the $(m,n)$
element of the $(k,l)$ block of the matrix representation of $B$, the size of
the blocks being clear from the context. The block itself will be denoted
as $B(k,\ell)$ or $[B](k,\ell)$.

\subsection{Identities, inequalities and basic tools}
\label{basic}

We start with a some technical results.


\begin{lemma}
\label{HQ}
The operator $Q(z) H$ can be continuously extended to a bounded operator
on $l^2(\ZZ)$ (since the domain of $H$ is only dense in $l^2(\ZZ)$) with
adjoint $H^* Q(\bar z)$. The norm of this operator satisfies
$\| Q(z) H \|^2 \leq (y+|z|)/y^2$.
\end{lemma}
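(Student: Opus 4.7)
The plan is to establish the bound first for the closely related operator $H^*Q(\bar z)$, which is actually easier to handle because it is defined on all of $l^2(\ZZ)$, and then to transfer the result to $Q(z)H$ by a density/duality argument.

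First I would note that since $Q(\bar z) = (HH^* - \bar z I)^{-1}$ maps $l^2(\ZZ)$ into $\dom(HH^*) \subset \dom(H^*)$, the composition $H^*Q(\bar z)$ is everywhere defined. The key identity is the resolvent relation $HH^*Q(\bar z) = I + \bar z Q(\bar z)$, which is immediate from the definition of $Q(\bar z)$. For $b \in l^2$, the vector $u := Q(\bar z)b$ lies in $\dom(HH^*)$, so $H^*u \in \dom(H^{**}) = \dom(H)$ and the computation
\begin{align*}
\|H^*Q(\bar z)b\|^2
&= \langle u, HH^* u\rangle
= \langle Q(\bar z)b, b + \bar z Q(\bar z)b\rangle \\
&= \langle Q(\bar z)b, b\rangle + \bar z \|Q(\bar z)b\|^2
\end{align*}
is legitimate. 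Using Cauchy--Schwarz and the self-adjoint resolvent bound $\|Q(\bar z)\| \leq 1/y$, this is controlled by $\|b\|^2/y + |z|\|b\|^2/y^2 = (y+|z|)\|b\|^2/y^2$, which proves the norm bound for $H^*Q(\bar z)$.

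To transfer the bound to $Q(z)H$, for $a \in \cal K \subset \dom(H)$ and any $b \in l^2$, the fact $Q(\bar z)b \in \dom(H^*)$ yields the duality identity $\langle Q(z)Ha, b\rangle = \langle Ha, Q(\bar z)b\rangle = \langle a, H^*Q(\bar z)b\rangle$. Combining this with the previous step and Cauchy--Schwarz gives $\|Q(z)Ha\| \leq \sqrt{(y+|z|)/y^2}\,\|a\|$, so $Q(z)H|_{\cal K}$ is bounded and extends by density to a bounded operator on $l^2(\ZZ)$ satisfying the required estimate; the same identity, extended by density in $a$, identifies the adjoint of this extension as $H^*Q(\bar z)$.

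The only genuinely delicate point is the bookkeeping of domains for the unbounded operator $H$: one must verify that $Q(\bar z)b$ actually lies in $\dom(HH^*)$ (so that both $H^*u$ and $HH^*u$ make sense) and that $Q(\bar z)b \in \dom(H^*)$ (so the duality identity holds). Both facts follow from the self-adjointness of $HH^*$ and the identity $(HH^* - \bar z I)Q(\bar z) = I$ on $l^2(\ZZ)$. Everything else is routine resolvent manipulation.
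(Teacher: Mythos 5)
Your proof is correct and follows essentially the same route as the paper: both identify $H^*Q(\bar z)$ as the everywhere-defined adjoint of $Q(z)H|_{\cal K}$ via the pairing $\langle Q(z)Ha,b\rangle = \langle a, H^*Q(\bar z)b\rangle$, and both derive the norm bound from the resolvent identity $HH^*Q(\bar z) = I + \bar z\,Q(\bar z)$. The only cosmetic difference is that you bound $\|H^*Q(\bar z)b\|^2 = \langle Q(\bar z)b,\, HH^*Q(\bar z)b\rangle$ as a quadratic form, whereas the paper writes the equivalent operator-norm estimate $\|Q(z)H\,H^*Q(\bar z)\| = \|Q(z)(I+\bar z Q(\bar z))\| \le (y+|z|)/y^2$.
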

\begin{proof}
Given any $a \in \dom(H)$ and any $b \in l^2(\ZZ)$, we have
$\langle Q(z)Ha, b \rangle =  \langle Ha, Q(\bar z) b \rangle =
\langle a, H^* Q(\bar z) b \rangle$, the second inequality being due to the
fact that $Q(\bar z) b \in \dom(HH^*) \subset \dom(H^*)$. Therefore,
$H^* Q(\bar z)$ is the adjoint of $Q(z)H$, and since it is defined on all
$l^2(\ZZ)$, the operator $Q(z)H$ can be extended to a bounded operator.
We furthermore have
\[
\| Q(z)H H^* Q(\bar z) \| = \| Q(z)(I + \bar z Q(\bar z)) \|
\leq \frac{y+|z|}{y^2}
\]
hence the last result.
\end{proof}

The following bounds will also be often used in the proof without mention:
\begin{gather*}
| Q_{n,n'}(k,\ell)(z) | \leq \| Q(k,\ell)(z) \| \leq
  \| Q(z) \| \leq \frac{1}{y}, \\
\sum_{k\in\ZZ} \sum_{n=0}^{N-1} | Q_{n,p}(k,\ell) |^2 =
[Q^* Q]_{p,p}(\ell,\ell) \leq \frac{1}{y^2}  \\
\sum_{\ell\in\ZZ} \sum_{p=0}^{N-1} | [QH]_{n,p}(m,\ell) |^2
= [ Q HH^* Q^*]_{p,p}(m,m) \leq \frac{y+|z|}{y^2}
\end{gather*}

\begin{lemma}
\label{HQH}
$H^* Q(z) H = I + z \tQ(z)$ and $H \tQ H^* = I + z Q(z)$ by continuous
extensions of the left hand members.
\end{lemma}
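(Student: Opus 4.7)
The plan is to base both identities on the intermediate operator identity $H^* Q(z) = \tQ(z) H^*$ on $\dom(H^*)$; once this is in hand, both statements of the lemma follow from a one-line computation on the core $\cal K$ and an extension by continuity. I would first prove this intermediate identity. Pick $b \in \dom(H^*)$; since $\ran Q(z) \subset \dom(HH^*) \subset \dom(H^*)$, the vector $H^* Q(z) b$ is well defined. Next I would check that it actually belongs to $\dom(H^*H)$: the relation $H H^* Q(z) b = (HH^* - zI)Q(z)b + z Q(z) b = b + z Q(z) b$ shows that $H^* Q(z) b \in \dom(H)$, and its $H$-image lies in $\dom(H^*)$ because both $b$ (by hypothesis) and $Q(z) b$ belong to $\dom(H^*)$. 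A direct expansion then gives
\[
(H^* H - zI)\, H^* Q(z) b \;=\; H^* b + z H^* Q(z) b - z H^* Q(z) b \;=\; H^* b,
\]
and applying $\tQ(z)$, which inverts $H^*H - zI$ on $\dom(H^*H)$, yields $H^* Q(z) b = \tQ(z) H^* b$ as claimed.

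To derive the first identity of the lemma I would specialize to $b = Ha$ with $a \in \cal K$. Since $a$ has finite support, so does $Ha$, which places $Ha$ in $\dom(H^*)$, and the same finite-support property shows $a \in \dom(H^*H)$. The intermediate identity then gives
\[
H^* Q(z) H a \;=\; \tQ(z) H^* H a \;=\; \tQ(z)(H^* H - zI) a + z \tQ(z) a \;=\; a + z \tQ(z) a .
\]
By Lemma~\ref{HQ}, $Q(z) H$ extends continuously to a bounded operator on $l^2(\ZZ)$, and $H^* Q(z)$ is bounded on $l^2(\ZZ)$ as the adjoint (Lemma~\ref{HQ} applied to $\bar z$) of $Q(\bar z) H$; their product, the continuous extension of $H^* Q(z) H$, is therefore bounded. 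Since $I + z \tQ(z)$ is also bounded and the two agree on the dense core $\cal K$, they must coincide on all of $l^2(\ZZ)$.

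The companion identity $H \tQ(z) H^* = I + z Q(z)$ is obtained by a completely symmetric argument with the roles of $H$ and $H^*$ (and correspondingly of $Q$ and $\tQ$) interchanged, exploiting that $H^{**} = H$ by closedness. The only extra ingredient is the analogue of Lemma~\ref{HQ} giving the bounded continuous extension of $\tQ(z) H^*$, which is proven by the same Hilbert-space computation. The one point requiring genuine care throughout the proof is domain bookkeeping: at each step one must verify that the vector being acted on by an unbounded operator lies in its domain, using repeatedly $\ran Q \subset \dom(HH^*)$, $\ran \tQ \subset \dom(H^*H)$, the fact that finite-support vectors always lie in $\dom(H^*)$, and that $\cal K$ is a core for both $HH^*$ and $H^*H$. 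No new estimates are needed beyond those already established.
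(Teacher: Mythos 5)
Your proof is correct, and it takes a genuinely different route from the paper. The paper bases the argument on the polar decomposition $H = U|H|$ of the closed operator $H$, tracking the vector $b = Q(z)Ha$ through the relation $b = Uh$, using $U^*Uh = h$ on $\overline{\ran(|H|)}$, and then identifying $w = H^*b$ with $|H|^2(|H|^2-zI)^{-1}a$. You instead prove the intertwining identity $H^*Q(z) = \tQ(z)H^*$ on $\dom(H^*)$ directly, by checking that $H^*Q(z)b \in \dom(H^*H)$ and computing $(H^*H - zI)H^*Q(z)b = H^*b$; the lemma then follows by specializing $b = Ha$ for $a$ in the core $\cal K$ and extending by continuity. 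Your domain bookkeeping is careful and correct throughout: $\ran Q(z) = \dom(HH^*) \subset \dom(H^*)$ gives the well-definedness of $H^*Q(z)b$; the identity $HH^*Q(z)b = b + zQ(z)b$ together with $b, Q(z)b \in \dom(H^*)$ places $H^*Q(z)b$ in $\dom(H^*H)$; and $Ha \in \cal K \subset \dom(H^*)$ and $a \in \dom(H^*H)$ for $a \in \cal K$ since the band structure preserves finite support. The two approaches buy roughly the same thing: the polar-decomposition route is shorter once one accepts the standard facts about partial isometries for unbounded closed operators, while your intertwining argument is more elementary and self-contained, at the price of several explicit domain verifications — which you carry out correctly. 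Your remark on the symmetric case and on obtaining the bounded extension of $\tQ(z)H^*$ by the analogue of Lemma~\ref{HQ} is also sound.
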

\begin{proof}
The proof is based on the polar decomposition $H = U | H|$ of
$H$~\cite[\S VI.2.7]{kat-livre80}.
We recall that $|H| = (H^*H)^{1/2}$, that $U$ is a partial isometry from
$\overline{\ran(|H|)}$ to $\overline{\ran(H)}$ (here $\ran(\cdot)$ is the
range of an operator and $\overline{\ran(\cdot)}$ the closure of this range),
and that $H^* = |H| U^*$. Letting $a\in\dom(H)$ and $b = Q(z) H a$, we have
$Ha = HH^* b - z b$, or equivalently $b = z^{-1} H (H^*b - a)$ that we rewrite
as $b = U h$ with $h = z^{-1} |H| ( |H| U^* b - a )$. Since
$h \in \ran(|H|)$, $U^* U h = h$, therefore $h = z^{-1} |H| ( |H| h - a )$
or equivalently $h = (|H|^2 - zI)^{-1} |H| a = |H| (|H|^2 - zI)^{-1} a$.
Finally, the vector $w = H^* Q(z) H a = H^* b$ satisfies
$w = |H| U^* U h = |H|^2 (|H|^2 - zI)^{-1} a =
H^* H \tQ(z) a = (H^* H - z)\tQ(z)a +z\tQ(z) a = a + z\tQ(z)a$.
\end{proof}

The following lemma can be shown by direct calculation.

\begin{lemma}[Differentiation formulas]
\label{deriv}
Given an integer $M\in\NN$, let $H^M$ be the random operator represented by
the matrix $H^M = [ \1_{|k|\leq M} \1_{|\ell|\leq M} H(k,\ell) ]$
and let $Q^M(z) = (H^M H^{M*} - zI)^{-1} =
[ [Q^M_{p,q}(k,\ell)(z)]_{p,q=0}^{N-1} ]_{k,\ell\in\ZZ}$ be the resolvent
of the self adjoint operator $H^M H^{M*}$.
For any integers $i,j,k,l, n,t,p,q$ satisfying
$-M \leq i, j, k,\ell \leq M$, $|i-j|\leq L$,
$0\leq n,t,p\leq N-1$, and $0\leq q\leq T-1$, it holds that
\begin{align*}
\frac{\partial Q^M_{n,t}(k,\ell)}{\partial X_{p,q}(i,j)} &=
- Q_{n,p}^M(k,i) [ H^{M*} Q^M]_{q,t}(j,\ell) \quad \text{and} \\
\frac{\partial Q_{n,t}^M(k,\ell)}{\partial \bar X_{p,q}(i,j)} &=
- [ Q^M H^M ]_{n,q}(k,j) Q^M_{p,t}(i,\ell) .
\end{align*}
\end{lemma}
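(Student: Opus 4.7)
The plan is to apply the standard resolvent derivative formula $\partial(A^{-1})=-A^{-1}(\partial A)A^{-1}$ to $A=H^M H^{M*}-zI$. Because $H^M$ is the truncation of $H$ to the window $|k|,|\ell|\leq M$, it is a bounded operator on the appropriate copies of $l^2(\ZZ)$, so $A$ is boundedly invertible for $z\in\CC_+$ and $Q^M$ depends smoothly on the finitely many entries $X_{p,q}(i,j)$ that survive the truncation. The quantities $X_{p,q}(i,j)$ and $\bar X_{p,q}(i,j)$ are treated in the Wirtinger sense as independent variables.

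First I would compute the derivative of $H^M$ itself. Since $H^M$ has blocks $H^M(k,\ell)=\1_{|k|,|\ell|\leq M}(A(k-\ell)+X(k,\ell))$, the hypotheses $|i|,|j|\leq M$ and $|i-j|\leq L$ make $X_{p,q}(i,j)$ an effective parameter, and $\partial H^M/\partial X_{p,q}(i,j)=E$, where $E$ is the rank-one operator with entries $E_{a,c}(k,m)=\delta_{a,p}\delta_{c,q}\delta_{k,i}\delta_{m,j}$; in parallel $\partial H^M/\partial \bar X_{p,q}(i,j)=0$. Since $A$ is deterministic, this yields
\[
\frac{\partial (H^M H^{M*})}{\partial X_{p,q}(i,j)}=E H^{M*},\qquad
\frac{\partial (H^M H^{M*})}{\partial \bar X_{p,q}(i,j)}=H^M E^*.
\]

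Substituting into the resolvent identity produces
\[
\frac{\partial Q^M}{\partial X_{p,q}(i,j)}=-Q^M E H^{M*}Q^M,\qquad
\frac{\partial Q^M}{\partial \bar X_{p,q}(i,j)}=-Q^M H^M E^* Q^M.
\]
Reading off the $(n,t)$ entry of the $(k,\ell)$ block, the rank-one insertion $E$ collapses the middle product $[Q^M(E H^{M*}Q^M)]_{n,t}(k,\ell)$ to the single evaluation $Q^M_{n,p}(k,i)\,[H^{M*}Q^M]_{q,t}(j,\ell)$, and symmetrically $[(Q^M H^M)E^*Q^M]_{n,t}(k,\ell)$ collapses to $[Q^M H^M]_{n,q}(k,j)\,Q^M_{p,t}(i,\ell)$. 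Together with the minus signs these are exactly the two stated identities.

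No genuine obstacle arises; the calculation is elementary once the block-index bookkeeping is correctly set up. The only minor subtlety is confirming that $Q^M H^M$ and $H^{M*}Q^M$ are well-defined bounded operators on $l^2(\ZZ)$, which is immediate for finite $M$ from the boundedness of the truncated $H^M$ (and, should one wish to push $M\to\infty$ later in applications of this formula, is guaranteed uniformly by Lemma~\ref{HQ}).
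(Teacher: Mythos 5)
Your proof is correct and is exactly the ``direct calculation'' that the paper alludes to without spelling out (the paper gives no proof, only the remark that the lemma follows by direct calculation). You correctly identify $\partial H^M/\partial X_{p,q}(i,j)$ as the rank-one matrix $E$ with a single $1$ in entry $(p,q)$ of block $(i,j)$, and $\partial H^M/\partial\bar X_{p,q}(i,j)=0$ in the Wirtinger sense, then apply $\partial Q^M=-Q^M(\partial(H^MH^{M*}))Q^M$ and read off the block entries; the collapse of the rank-one insertion gives precisely the two stated formulas, and the sign and index bookkeeping all check out.
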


The following basic property of convolution operators is well
known~\cite{bot-sil-toep}.
\begin{proposition}
\label{laurent}
A convolution operator $B = [ B_{k-\ell} ]_{k,\ell\in\ZZ}$ where the blocks
are $N\times N$ matrices is a bounded operator on $l^2(\ZZ)$ if and only if
there exists a bounded $N\times N$ matrix function $\bs B(f)$ on $[0,1]$ such
that
\[
B_k = \int_0^1 \exp(-2\imath\pi k f) \bs B(f) \, df \quad k \in \ZZ ,
\]
\emph{i.e.}, the $B_k$ are the Fourier coefficients of the $1$-periodic
function equal to $\bs B(f)$ on $[0,1]$.
\end{proposition}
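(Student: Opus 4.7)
The plan is to diagonalize $B$ via the block Fourier transform and reduce the statement to the classical characterization of bounded multiplication operators on $L^2$. Concretely, I would introduce the unitary isomorphism
\[
\mathcal{F} : l^2(\ZZ, \CC^N) \to L^2([0,1], \CC^N), \qquad
(\mathcal{F}a)(f) = \sum_{k\in\ZZ} e^{-2\imath\pi k f} a_k,
\]
defined first on compactly supported sequences and extended by Plancherel. Under $\mathcal{F}$, the shift $a \mapsto (a_{k+1})$ becomes multiplication by $e^{-2\imath\pi f} I_N$, so any formal finite convolution $B^{(M)} = [\1_{|k-\ell|\le M} B_{k-\ell}]$ is conjugated by $\mathcal{F}$ into multiplication by the trigonometric polynomial $\bs B^{(M)}(f) = \sum_{|k|\le M} e^{-2\imath\pi k f} B_k$.

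For the reverse implication, suppose a bounded measurable $\bs B : [0,1] \to \CC^{N\times N}$ is given with Fourier coefficients $B_k$. Multiplication by $\bs B$ is clearly a bounded operator on $L^2([0,1], \CC^N)$ with norm at most $\mathrm{ess\,sup}_f \|\bs B(f)\|$, and one checks by testing on the canonical basis vectors $e_{\ell,q}$ of $l^2(\ZZ,\CC^N)$ that $\mathcal{F}^* M_{\bs B} \mathcal{F}$ has block matrix representation $[B_{k-\ell}]_{k,\ell}$, so it agrees with $B$ and yields the required boundedness.

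For the forward implication, suppose $B$ is bounded on $l^2(\ZZ, \CC^N)$. Then $T := \mathcal{F} B \mathcal{F}^*$ is bounded on $L^2([0,1], \CC^N)$ and commutes with multiplication by $e^{-2\imath\pi f} I_N$ (this commutation follows from the fact that $B$ itself commutes with the block shift, since its matrix entries depend only on $k-\ell$). By a standard approximation, $T$ then commutes with multiplication by every trigonometric polynomial, hence, by density and continuity, with multiplication by every function in $L^\infty([0,1])$ acting as a scalar. A classical argument (the ``multiplier theorem'' for $L^\infty \otimes M_N$; cf.\ Böttcher-Silbermann) shows that any such operator is itself multiplication by some $\bs B \in L^\infty([0,1], \CC^{N\times N})$, and by computing $T e_{0,q}$ for $q = 0, \ldots, N-1$ and comparing with the image of the corresponding canonical sequence under $B$, one identifies the Fourier coefficients of $\bs B$ with $B_k$.

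The only real bookkeeping obstacle is that the series defining $\bs B(f)$ need not converge pointwise or absolutely, only in an $L^2$-distributional sense, so one must carry out the identification of $\bs B$ from $T$ through truncation and Plancherel rather than by manipulating the series $\sum_k e^{-2\imath\pi kf} B_k$ directly. This is routine: one writes $B_k = \int_0^1 e^{2\imath\pi k f} \bs B(f)\, df$ for the object produced by the multiplier theorem, confirms this coincides with the given $B_k$ by pairing against $e_{k,p}$ and $e_{0,q}$, and concludes that $\bs B$ is the symbol sought.
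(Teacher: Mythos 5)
The paper does not actually prove this proposition---it is stated as ``well known'' with a citation to B\"ottcher--Silbermann \cite{bot-sil-toep}, and the paper's Section~\ref{t->f} simply uses the isometry ${\cal F}_N$ and the identity $B = {\cal F}_N \bs B {\cal F}_N^*$ as given facts. Your argument is precisely the standard proof that underlies that citation (block Fourier transform plus the $L^\infty\otimes M_N$ multiplier theorem), and it is correct in substance. The reverse implication (bounded symbol $\Rightarrow$ bounded operator) is straightforward as you say; for the forward direction, your commutant argument is sound: $T={\cal F}B{\cal F}^*$ commutes with multiplication by $e^{\pm2\imath\pi f}$ (using that $B$ commutes with the unitary shift and hence with its inverse), therefore with all trigonometric polynomials, therefore---by uniform approximation of continuous functions and then SOT approximation of $L^\infty$ functions by continuous ones---with all scalar $L^\infty$ multipliers, and the commutant of $L^\infty([0,1])\otimes I_N$ in $B(L^2([0,1],\CC^N))$ is exactly $L^\infty([0,1],M_N(\CC))$.

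Two small remarks, neither a genuine gap. First, your sign conventions are not the paper's: with your $\cal F$, the shift $a\mapsto(a_{k+1})$ is conjugated to multiplication by $e^{+2\imath\pi f}$, not $e^{-2\imath\pi f}$, and the symbol you extract satisfies $B_k=\int_0^1 e^{2\imath\pi k f}\bs B(f)\,df$ rather than the stated $B_k=\int_0^1 e^{-2\imath\pi k f}\bs B(f)\,df$; this is harmless because replacing $\bs B(f)$ by $\bs B(1-f)$ converts one convention into the other without affecting boundedness, but it would be worth making the conventions match (the paper's own ${\cal F}_N$ sends $\bs g$ to $(\int_0^1 e^{-2\imath\pi kf}\bs g(f)\,df)_k$ and one has $B={\cal F}_N\bs B{\cal F}_N^*$). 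Second, the step ``by density and continuity, with multiplication by every $L^\infty$ function'' deserves one more sentence: uniform density of trigonometric polynomials only gives commutation with continuous multipliers; to pass to $L^\infty$ one should approximate $g\in L^\infty$ by continuous $g_n$ with $\|g_n\|_\infty\le\|g\|_\infty$ and $g_n\to g$ a.e., and use that $M_{g_n}\to M_g$ strongly. Alternatively, you can bypass the commutant machinery entirely: set $\beta_q = T(\mathbf 1\otimes e_q)\in L^2([0,1],\CC^N)$, observe that commutation with scalar $L^\infty$ multipliers forces $T(h\otimes e_q)=h\,\beta_q$, so $T$ is multiplication by the matrix $\bs B$ with columns $\beta_q$, and then essential boundedness of $\bs B$ follows from $\|T\|<\infty$ by testing against indicator-supported unit vector fields. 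That elementary route avoids invoking the von~Neumann commutant theorem and is closer to what one finds spelled out in the cited reference.
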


Let us identify the function $\bs B(f)$ on $[0,1]$ with the multiplication
operator $\bs B : {\cal L}^2([0,1]\to\CC^N) \to {\cal L}^2([0,1]\to\CC^N)$
who sends $\bs g(f)$ to $(\bs B \bs g)(f) = \bs B(f) \bs g(f)$. Clearly,
the spectral norm of this operator coincides with
$\sup_{f\in[0,1]} \| \bs B(f) \|$.
Let
\[
{\cal F}_N : {\cal L}^2([0,1]\to\CC^N) \to l^2(\ZZ), \quad
{\cal F}_N(\bs g) =
  \Bigl( \int_0^1 \exp(-2\imath\pi k f) \bs g(f) \, df\Bigr)_{k\in\ZZ}
 = (g_k)_{k\in\ZZ}
\]
be the isometric operator that sends $\bs g$ to the sequence of its
Fourier coefficients. Then it holds that $B = {\cal F}_N \bs B {\cal F}_N^*$.
\\

We now introduce the two basic tools used in the proof of Theorem~\ref{approx}.

\begin{proposition}
\label{ip}
Let $\xi \in \CC^M$ be a complex Gaussian centered vector such that
$\EE \xi\xi^T = 0$ and $\EE \xi\xi^* = \Xi$. Let $\Gamma\, : \CC^M \to \RR$
be a $C^1$ function polynomially bounded together with its derivatives
(when seeing $\Gamma$ as a function from $\RR^{2M}$ to $\RR$). Then
\[
\EE[ \xi_i \Gamma(\xi) ] = \sum_{m=1}^M [\Xi]_{im}
\EE\left[ \frac{\partial \Gamma(\xi)}{\partial \bar\xi_m} \right]
\]
\end{proposition}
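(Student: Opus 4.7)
The plan is to prove this standard complex Gaussian integration-by-parts formula by a direct computation on the density, reducing first to the non-degenerate case and then using Wirtinger calculus and integration by parts on $\RR^{2M}$.

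First I would reduce to the case where $\Xi$ is invertible. If $\Xi$ is singular, replace it with $\Xi_\epsilon = \Xi + \epsilon I_M$ and let $\xi_\epsilon$ be the corresponding complex Gaussian vector; prove the identity for $\xi_\epsilon$ and then pass to the limit $\epsilon \downarrow 0$. The passage to the limit is justified by the polynomial growth of $\Gamma$ and its derivatives together with the uniform integrability of polynomials against Gaussian densities with covariance $\Xi_\epsilon$, $\epsilon \in (0,1]$.

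Next, assume $\Xi$ is invertible. The circular symmetry hypothesis $\EE \xi\xi^T = 0$ implies that the density of $\xi$ with respect to Lebesgue measure on $\CC^M \cong \RR^{2M}$ is
\[
p(\xi) = \frac{1}{\pi^M \det(\Xi)} \exp(-\xi^* \Xi^{-1} \xi).
\]
Writing $\xi_m = u_m + \imath v_m$ and using the Wirtinger derivative $\partial/\partial \bar\xi_m = \tfrac{1}{2}(\partial/\partial u_m + \imath \partial/\partial v_m)$, a direct calculation gives
\[
\frac{\partial p(\xi)}{\partial \bar\xi_m} = -[\Xi^{-1} \xi]_m \, p(\xi).
\]
Since $\xi_i = \sum_m [\Xi]_{im} [\Xi^{-1}\xi]_m$, this yields the key identity
\[
\xi_i \, p(\xi) = -\sum_{m=1}^M [\Xi]_{im} \frac{\partial p(\xi)}{\partial \bar\xi_m}.
\]

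Then I would integrate against $\Gamma$ and apply integration by parts in each of the real coordinates $u_m$ and $v_m$. The boundary terms vanish because $p$ has Gaussian decay while $\Gamma$ and its first derivatives grow only polynomially. Combining the two real integration-by-parts yields
\[
\int_{\RR^{2M}} \Gamma(\xi) \frac{\partial p(\xi)}{\partial \bar\xi_m} \, d\xi = -\int_{\RR^{2M}} \frac{\partial \Gamma(\xi)}{\partial \bar\xi_m} \, p(\xi) \, d\xi.
\]
Inserting this into the previous identity gives the claim.

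The only mildly subtle points are the degenerate-covariance reduction and the careful Wirtinger bookkeeping to ensure that the sign and the conjugation in $\partial/\partial \bar\xi_m$ are consistent with the fact that $\xi_i$ (and not $\bar\xi_i$) appears on the left-hand side; everything else is routine. No genuine obstacle is expected.
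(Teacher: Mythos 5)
Your proposal is correct and follows exactly the route the paper indicates: the paper simply remarks that ``this proposition can be proven by an integration by parts'' without supplying details, and your argument supplies those details in the standard way (regularize $\Xi$, use the explicit circular-Gaussian density, differentiate it to get $\xi_i\,p(\xi) = -\sum_m[\Xi]_{im}\,\partial p/\partial\bar\xi_m$, then integrate by parts in the real coordinates, with Gaussian decay killing the boundary terms and polynomial growth of $\Gamma$ and its derivatives guaranteeing integrability). The Wirtinger computations and the use of $\EE\xi\xi^T=0$ to justify the form of the density are handled correctly.
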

This proposition can be proven by an integration by parts.
We shall call it for this reason the IP formula.

\begin{proposition}
\label{np}
Let $\xi \in \CC^M$ and $\Gamma\, : \CC^M \to \RR$ be as in the statement
of Proposition~\ref{ip}, and let
$\nabla_z\Gamma = [ \partial\Gamma / \partial z_1, \ldots,
\partial\Gamma / \partial z_M ]^T$ and
$\nabla_{\overline{z}}\Gamma = [ \partial\Gamma / \partial \overline{z_1},
\ldots, \partial\Gamma / \partial \overline{z_M} ]^T$.
Then the following inequality holds true:
\[
\var {\Gamma}({\xi}) \leq
 \EE \bigl[ \nabla_z \Gamma({\xi})^T \ {\Xi} \
 \overline{\nabla_z \Gamma({\xi})} \bigr]
 +
 \EE \left[ \left( \nabla_{\overline{z}}  \Gamma({\xi}) \right)^*
 \ {\Xi} \
 \nabla_{\overline{z}} \Gamma({\xi})
 \right]
 \ .
\]
\end{proposition}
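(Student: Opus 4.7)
The plan is to reduce the complex-valued Poincar\'e--Nash inequality to the classical real Gaussian Poincar\'e inequality through two successive changes of variable: first normalize the covariance, then decompose into real and imaginary parts.

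For the first step, let $\Xi^{1/2}$ denote the Hermitian positive square root of $\Xi$, and let $\eta \in \CC^M$ be a standard circularly symmetric Gaussian vector, i.e.\ $\EE[\eta \eta^*] = I_M$ and $\EE[\eta\eta^T] = 0$. Then $\xi$ has the same distribution as $\Xi^{1/2}\eta$, so it suffices to study $G(\eta) \eqdef \Gamma(\Xi^{1/2}\eta)$, whose variance equals $\var\Gamma(\xi)$. The Wirtinger chain rule applied to the coordinate expressions $\xi_j = \sum_k (\Xi^{1/2})_{jk} \eta_k$ gives $\nabla_\eta G = (\Xi^{1/2})^T \nabla_z \Gamma$ and $\nabla_{\bar\eta} G = (\Xi^{1/2})^* \nabla_{\bar z}\Gamma$. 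Using $\Xi^{1/2} (\Xi^{1/2})^* = \Xi$ (and its transpose counterpart), a direct computation yields
\[
(\nabla_\eta G)^T \overline{\nabla_\eta G} = (\nabla_z \Gamma)^T \Xi \, \overline{\nabla_z \Gamma}, \qquad
(\nabla_{\bar\eta} G)^* \nabla_{\bar\eta} G = (\nabla_{\bar z}\Gamma)^* \Xi \, \nabla_{\bar z}\Gamma.
\]
So the inequality for $G$ relative to the identity covariance implies the inequality for $\Gamma$ relative to $\Xi$; we have reduced to $\Xi = I$.

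For the second step, write $\eta = (u + \imath v)/\sqrt{2}$ with $u, v \in \RR^M$. The circular symmetry and standard covariance of $\eta$ are equivalent to $(u^T, v^T)^T$ being a standard real Gaussian vector on $\RR^{2M}$. Set $F(u,v) = G((u + \imath v)/\sqrt{2})$. The classical Gaussian Poincar\'e inequality (provable by Hermite expansion in the one-dimensional standard Gaussian basis and tensorization) gives
\[
\var F(u,v) \leq \EE \Bigl[ \sum_{k=1}^M \bigl( |\partial_{u_k} F|^2 + |\partial_{v_k} F|^2 \bigr) \Bigr].
\]

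The final step is a pointwise identity: using $\partial_{u_k} = (\partial_{\eta_k} + \partial_{\bar\eta_k})/\sqrt{2}$ and $\partial_{v_k} = \imath(\partial_{\eta_k} - \partial_{\bar\eta_k})/\sqrt{2}$, the cross terms cancel and
\[
|\partial_{u_k} F|^2 + |\partial_{v_k} F|^2 = |\partial_{\eta_k} G|^2 + |\partial_{\bar\eta_k} G|^2.
\]
Summing over $k$ and combining with the two previous steps gives the claim, since $\var G(\eta) = \var \Gamma(\xi)$. The only non-routine input here is the real scalar Poincar\'e inequality for the standard Gaussian; this is where the actual probabilistic content sits, and it is typically invoked from the literature (see, e.g., \cite{pas-livre}) rather than re-derived. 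Everything else is bookkeeping with the Wirtinger calculus, which is where one must be careful to avoid spurious factors of $2$.
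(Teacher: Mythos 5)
Your proof is correct, but note that the paper itself does not prove Proposition~\ref{np}: it simply points to~\cite[Prop.~2.1.6]{pas-livre}, so there is no in-paper argument to compare against. Your two-step reduction (whitening $\xi=\Xi^{1/2}\eta$, then splitting $\eta=(u+\imath v)/\sqrt{2}$ to invoke the real Gaussian Poincar\'e inequality on $\RR^{2M}$) is the standard and presumably the intended route. The Wirtinger bookkeeping checks out: $\nabla_\eta G=(\Xi^{1/2})^T\nabla_z\Gamma$ and $\nabla_{\bar\eta}G=(\Xi^{1/2})\nabla_{\bar z}\Gamma$ (using that $\Xi^{1/2}$ is Hermitian, so $(\Xi^{1/2})^*=\Xi^{1/2}$), and in the final step the cancellation of cross terms is the parallelogram identity $\tfrac12(|a+b|^2+|a-b|^2)=|a|^2+|b|^2$ with $a=\partial_{\eta_k}G$, $b=\partial_{\bar\eta_k}G$, which is precisely what rules out a stray factor of $2$. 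One small remark worth adding: since $\Gamma$ is real-valued, $\nabla_{\bar z}\Gamma=\overline{\nabla_z\Gamma}$, so the two terms on the right-hand side of the Proposition are actually equal; the symmetric form is kept for convenience of use. The only external ingredient is the one-dimensional Gaussian Poincar\'e inequality with tensorization, which you correctly flag as the genuine probabilistic input.
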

This inequality is known as the Poincar\'e-Nash (PN) inequality. For a
proof, see~\cite[Prop.~2.1.6]{pas-livre}.

\subsection{System of equations controlling $\EE Q(k, k+m)$ and
$\EE \tQ(k, k+m)$.}
\label{core-prf}

We now enter the core of the proof, which consists in showing that
the $\EE Q(k,\ell)$ and the $\EE\tQ(k, \ell)$ satisfy a perturbed infinite
system of equations. Our derivations will be greatly simplified by the fact
that due to the ergodicity of $H_T$, the operators $\EE Q(z)$ and
$\EE\tQ(z)$ are bounded convolution operators for any $z\in\CC_+$.

We start with some variance controls.
The following lemma is proven in Appendix~\ref{anx-varQ}. The proof is based
on the use of Proposition~\ref{np} along with an operator truncation argument.

\begin{lemma}
\label{varQ}
For any $k,\ell \in\ZZ$ and any $n \in \{0,\ldots,N-1\}$ and
$t \in \{0,\ldots, T-1 \}$, we have
\begin{gather*}
\var Q_{p,q}(k,\ell) \leq
\frac{2}{T}\frac{\bs\sigma^2 \bs g (|z|+y)}{y^4},
\quad
\var \tr Q(k,\ell) \leq \frac{2 \bs c \bs\sigma^2 \bs g (y+|z|)}{y^4} , \\
\text{and} \
\var [H^* Q]_{p,q}(k,\ell) \leq
\frac{2}{T}\frac{\bs\sigma^2 \bs g (|z|+y)^2}{y^4} .
\end{gather*}
\end{lemma}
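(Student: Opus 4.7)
The plan is to apply the Poincar\'e-Nash inequality (Proposition~\ref{np}) to each of the three quantities, viewed as functions of the Gaussian entries $W_{a,b}(i,j)$. Since $H$ acts on $l^2(\ZZ)$ rather than on a finite-dimensional space, I would first truncate: for fixed $M \in \NN$, work with the finite matrix $H^M$ and its resolvent $Q^M$ from Lemma~\ref{deriv}, and apply PN to $Q^M_{p,q}(k,\ell)$, $\tr Q^M(k,\ell)$, and $[H^{M*} Q^M]_{p,q}(k,\ell)$. Once the stated bounds are established uniformly in $M$, the strong resolvent convergence $Q^M \to Q$ (which follows from $H^M a \to H a$ on the core $\cal K$) combined with Fatou's lemma will transfer them to the untruncated operator.

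To compute the PN derivatives, I would use the chain rule $\partial/\partial W_{a,b}(i,j) = (\phi(i-j)/\sqrt{T})\,\partial/\partial X_{a,b}(i,j)$ and plug in Lemma~\ref{deriv}. For the third quantity $[H^*Q]_{p,q}(k,\ell)$, the product rule produces a $\partial H^*/\partial\bar X$ contribution which, when combined with $\partial Q/\partial\bar X$, simplifies via the identity $H^*QH = I + z\tQ$ of Lemma~\ref{HQH} to the compact form
\[
\frac{\partial [H^*Q]_{p,q}(k,\ell)}{\partial\bar X_{a,b}(i,j)} = -z\,\tQ_{p,b}(k,j)\, Q_{a,q}(i,\ell).
\]
The cancellation of the term $\delta_{p,b}\delta_{k,j} Q_{a,q}(i,\ell)$ between the two contributions is essential for getting a clean estimate.

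The covariance structure of $W$ collapses the double sum in PN: the Kronecker deltas identify inner indices, the constraint $i_1-j_1 = i_2-j_2 =: u$ is imposed, and a weight $\gamma(i_1-i_2)$ remains. Each of the three quantities thus reduces to a quadratic form $\sum_{i_1,i_2} v_u(i_1)\overline{v_u(i_2)}\, \gamma(i_1-i_2)$ controlled by the convolution estimate $|\sum_{i_1,i_2} a(i_1)\overline{a(i_2)}\gamma(i_1-i_2)| \leq \bs g \sum_i |a(i)|^2$. The residual sums $\sum_i F(i) G(i-u)$, where $F$ and $G$ are sums of squares of $Q$- or $H^*Q$-entries, are bounded via $\ell^1$ and $\ell^\infty$ estimates coming from the bullet points after Lemma~\ref{HQ}; summing over $u$ extracts the factor $\sigma^2/T \leq \bs\sigma^2/T$.

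The main obstacle is the trace estimate, where a naive entrywise treatment would produce $N^2/T$ instead of the claimed $\bs c = N/T$. The key observation is that the $W$-derivative of $\tr Q(k,\ell) = \sum_r Q_{r,r}(k,\ell)$ contains the sum $\sum_r Q_{r,a}(k,i) [H^*Q]_{b,r}(j,\ell)$, which is the $(b,a)$-entry of the matrix product $[H^*Q](j,\ell)\, Q(k,i)$. Using $\|AB\|_{\text{HS}} \leq \|A\|\|B\|_{\text{HS}}$ to bound $\sum_{a,b}|\cdots|^2 \leq \|[H^*Q](j,\ell)\|^2 \|Q(k,i)\|_{\text{HS}}^2$ and then summing the HS norms over $i$ through $\sum_i \|Q(k,i)\|_{\text{HS}}^2 = \tr[QQ^*]_{kk} \leq N/y^2$ supplies precisely the missing factor of $N$. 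For $[H^*Q]$, an additional $|z|^2$ from the $\bar W$-derivatives is absorbed into the stated bound via $|z|^2 \leq (y+|z|)^2$.
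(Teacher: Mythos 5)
Your plan is correct and is essentially the paper's own proof: truncate to $Q^M$, apply Poincar\'e--Nash with the differentiation formulas of Lemma~\ref{deriv}, collapse the $W$-covariance to a $\gamma$-weighted convolution controlled by $\bs g$ and $\bs\sigma^2$, and, for the trace, pass from entrywise bounds to Hilbert--Schmidt norms of the products $[H^{M*}Q^M](j,\ell)\,Q^M(k,i)$ --- the paper packages that last step as a rank-$\le N$ operator ${\cal O}$ with $\tr{\cal O}\le N\|{\cal O}\|$, which yields exactly the same $N(y+|z|)/y^4$ estimate. Your explicit derivative $\partial[H^*Q]_{p,q}(k,\ell)/\partial\bar X_{a,b}(i,j)=-z\,\tQ_{p,b}(k,j)\,Q_{a,q}(i,\ell)$, obtained by combining the product rule with $H^*QH=I+z\tQ$ from Lemma~\ref{HQH}, is a welcome filling-in of the step the paper dismisses with ``similar''; together with the $\partial/\partial X$-derivative $-[H^*Q]_{p,a}(k,i)[H^*Q]_{b,q}(j,\ell)$ it produces the two contributions $(y+|z|)^2$ and $|z|^2\le(y+|z|)^2$ whose sum gives the stated factor~$2(y+|z|)^2$.
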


Our purpose is now to find an equation satisfied by $\EE Q_{p,q}(k,\ell)$. To
that end, we start by writing $\EE[ [HH^* Q]_{p,q}(k,\ell) ] =
\chi_1 + \chi_2 + \chi_3 + \chi_4$ where $\chi_1$, $\chi_2$ and $\chi_3$
are given by the finite sums
\begin{align*}
\chi_1 &= \sum_{i,j} \sum_{n,t = 0}^{N-1,T-1}
\EE[ X_{p,t}(k,i) \bar X_{n,t}(j,i) Q_{nq}(j,\ell) ] , \\
\chi_2 &=
\sum_{i,j} \sum_{n,t = 0}^{N-1,T-1}
\EE[ X_{p,t}(k,i) \bar A_{n,t}(j,i) Q_{nq}(j,\ell) ] , \\
\chi_3 &=
\sum_{i,j} \sum_{n,t = 0}^{N-1,T-1}
\EE[ A_{p,t}(k,i) \bar X_{n,t}(j,i) Q_{nq}(j,\ell) ] ,
\end{align*}
and where
\[
\chi_4 = \EE[ [AA^* Q]_{p,q}(k,\ell) ] .
\]
Starting with $\chi_1$, the idea is to develop
$\EE[ X_{p,t}(k,i) \bar X_{n,t}(j,i) Q_{nq}(j,\ell) ]$ with the help of the
IP formula. Thanks to this formula, we expect that
\begin{align}
& \EE[ X_{p,t}(k,i) \bar X_{n,t}(j,i) Q_{nq}(j,\ell) ] \nonumber \\
& = \sum_{u,v}
\EE[ X_{p,t}(k,i) \bar X_{p,t}(u,v) ]
\EE\Bigl[ \frac{\partial (\bar X_{n,t}(j,i) Q_{nq}(j,\ell))}
{\partial \bar X_{p,t}(u,v)} \Bigr]  \nonumber \\
&=  \sum_{u,v} \frac{\gamma_T(k-u) \phi_T(k-i)^2}{T} \delta_{k-i , u-v}
\Bigl\{
\delta_{p,n}\delta_{u,j}\delta_{v,i} \EE[ Q_{nq}(j,\ell) ] \nonumber \\
&
\ \ \ \ \ \ \ \ \ \ \ \ \ \ \ \ \ \ \ \
\ \ \ \ \ \ \ \ \ \ \ \ \ \ \ \ \ \ \ \ \ \ \ \ \ \ \ \
- \EE[ \bar X_{n,t}(j,i) [QH]_{n,t}(j,v) Q_{p,q}(u,\ell) ]
\Bigr\} \nonumber \\
&= \sum_r \frac{\gamma_T(r) \phi_T(k-i)^2}{T} \Bigl\{
\delta_{p,n}\delta_{k,j}\delta_{r,0} \EE[ Q_{nq}(j,\ell) ] \nonumber \\
&
\ \ \ \ \ \ \ \ \ \ \ \ \ \ \ \ \ \ \ \
\ \ \ \ \ \ \ \ \ \ \ \ \ \ \ \ \ \ \ \
- \EE[ \bar X_{n,t}(j,i) [QH]_{n,t}(j,i-r) Q_{p,q}(k-r,\ell) ]
\Bigr\} .
\label{ip-chi1}
\end{align}

Note that the vector $\xi$ in the statement of Proposition~\ref{ip} is finite
dimensional. Therefore, Equation~\eqref{ip-chi1} needs to be justified.
Given an integer $M\in\NN$, let $H^M$ and $Q^M$ be as in the statement of
Lemma~\ref{deriv}.
When $M \geq |k| \vee |i| \vee |j|$, we get
by applying Proposition~\ref{ip} and repeating the derivation above that
\begin{align}
& \EE[ X_{p,t}(k,i) \bar X_{n,t}(j,i) Q_{nq}^M(j,\ell) ] \nonumber \\
&= \sum_{r=k\vee i - M}^{k\wedge i + M}
\frac{\gamma_T(r) \phi_T(k-i)^2}{T} \Bigl\{
\delta_{p,n}\delta_{k,j}\delta_{r,0} \EE[ Q_{nq}^M(j,\ell) ] \nonumber \\
&
\ \ \ \ \ \ \ \ \ \ \ \ \ \ \ \ \ \ \ \
\ \ \ \ \ \ \ \ \ \ \ \ \
- \EE[ \bar X_{n,t}(j,i) [Q^MH^M]_{n,t}(j,i-r) Q_{p,q}^M(k-r,\ell) ]
\Bigr\} .
\label{rhs-M}
\end{align}
The argument provided in the proof of Lemma~\ref{varQ} shows that
$Q_{nq}^M(j,\ell) \xrightarrow{\text{a.s.}} Q_{nq}(j,\ell)$ as $M\to\infty$.
Since $| X_{p,t}(k,i) \bar X_{n,t}(j,i) Q_{nq}^M(j,\ell) |
\leq | X_{p,t}(k,i) \bar X_{n,t}(j,i) | / y$, it holds by the dominated
convergence theorem that
$\EE[ X_{p,t}(k,i) \bar X_{n,t}(j,i) Q_{nq}^M(j,\ell) ]
\to \EE[ X_{p,t}(k,i) \bar X_{n,t}(j,i) Q_{nq}(j,\ell) ]$ as $M\to\infty$.
Using Lemma~\ref{HQ} and the summability of $\gamma$, we can show by a
similar argument that that the right hand side of~\eqref{rhs-M} converges to
the right hand side of~\eqref{ip-chi1}. Equation~\eqref{ip-chi1} is shown. \\
Turning to $\chi_2$, we obtain by a similar derivation
\[
\chi_2 =
- \sum_{i,j,r} \sum_{n,t=0}^{N-1,T-1}
\frac{\gamma_T(r) \phi_T(k-i)^2}{T}
\EE[ \bar A_{n,t}(j,i) [QH]_{n,t}(j,i-r) Q_{p,q}(k-r,\ell) ]  .
\]
Considering $\chi_1 + \chi_2$, and taking the sum over $j$ and $n$, we get
\begin{align*}
\chi_1 + \chi_2 &= \sum_{i,r} \sum_{t=0}^{T-1}
\frac{\gamma_T(r) \phi_T(k-i)^2}{T}
\Bigl\{ \delta_{r,0} \EE[Q_{p,q}(k,\ell)] - \\
&
\ \ \ \ \ \ \ \ \ \ \ \ \ \ \ \ \ \ \ \
\ \ \ \ \ \ \ \ \ \ \ \ \ \ \ \ \ \ \ \
\EE[ [H^*QH]_{t,t}(i,i-r) Q_{p,q}(k-r,\ell) ] \Bigr\} \\
&= \sum_{r,s} \sum_{t=0}^{T-1} \frac{\gamma_T(r) \phi_T(s)^2}{T}
\EE\Bigl[ Q_{p,q}(k-r,\ell) \times \\
&
\ \ \ \ \ \ \ \ \ \ \ \ \ \ \ \ \ \ \ \
( I_{t,t}(k-s,k-s-r) - [H^*QH]_{t,t}(k-s,k-s-r) ) \Bigr] \\
&= - z \sum_{r,s} \gamma_T(r) \phi_T(s)^2
   \EE\Bigl[ Q_{pq}(k-r,\ell) \frac{\tr \tQ(k-s,k-s-r)}{T} \Bigr]
\end{align*}
where the third equality is due to Lemma~\ref{HQH}. In the second equation,
we considered that $\delta_{r,0} = I_{t,t}(k-s,k-s-r)$, the $(t,t)$ element of the
$(k-s, k-s-r)$ block of the ``$N\times N$ block-matrix representation'' of the
identity operator. Turning to $\chi_3$, we have
\begin{align*}
\chi_3 &= \sum_{i,j,u,v} \sum_{n,t} A_{p,t}(k,i)
\EE[ X_{n,t}(u,v) \bar X_{n,t}(j,i) ]
\EE\Bigl[ \frac{\partial Q_{n,q}(j,\ell)}{\partial X_{n,t}(u,v)}\Bigr] \\
&= -\sum_{i,j,u,v} \sum_{n,t} A_{p,t}(k,i)
 \frac{\gamma_T(u-j) \phi_T(j-i)^2}{T} \delta_{j-i , u-v}
\EE[ Q_{n,n}(j,u) [ H^* Q ]_{t,q}(v,\ell) ] \\
&= -\sum_{i,r,s} \sum_{t} \gamma_T(r) \phi_T(s)^2 A_{p,t}(k,i)
\EE\Bigl[ \frac{\tr Q(s+i,s+i+r)}{T} [ H^* Q ]_{t,q}(i+r,\ell) \Bigr] .
\end{align*}
We now ``decouple'' the terms within the expectations in the expressions
of $\chi_1+\chi_2$ and $\chi_3$ by using the inequality
$| \EE XY - \EE X \EE Y | = | \EE[(X-\EE X)(Y-\EE Y)] | \leq
(\var X)^{1/2} (\var Y)^{1/2}$ along with Lemma~\ref{varQ}.
By the ergodicity of $H$, terms such as $\EE[Q_{p,q}(k,\ell)]$,
$\EE[[H^*Q]_{p,q}(k,\ell)]$, or $\EE[[AH^*Q]_{p,q}(k,\ell)]$ depend on
$k-\ell$ only. With a small notation abuse, we shall henceforth denote
the first of these terms as $\EE[Q_{p,q}(k,\ell)]$ or $\EE[Q_{p,q}(k-\ell)]$
interchangeably, and similarly for the other terms. With these notations, we
get
\[
\chi_1 + \chi_2 = - z \sigma_T^2 \sum_{r} \gamma_T(r)
\EE[ Q_{pq}(k-r-\ell) ] \Bigl[ \frac{\tr \EE\tQ(r)}{T} \Bigr]
+ \varepsilon_{p,q}(k-\ell)
\]
where
\[
| \varepsilon_{p,q}(k-\ell)  | \leq
\frac{2\sqrt{\bs c}\bs\sigma^4\bs g^2}{T^{3/2}} \frac{|z|(|z|+y)}{y^4} ,
\]
and
\[
\chi_3 = - \sigma_T^2
\sum_{i,r} \sum_{t} \gamma_T(r) A_{p,t}(k,i)
\Bigl[ \frac{\tr \EE Q(-r)}{T} \Bigr]
\EE\bigl[[ H^* Q ]_{t,q}(i,\ell-r) \bigr] + \varepsilon'_{p,q}(k-\ell)
\]
where $\varepsilon'$ satisfies by Lemma~\ref{varQ}
\[
|\varepsilon_{p,q}(k-\ell)|' \leq
\frac{2\sqrt{\bs c} \bs\sigma^4 \bs g^2 (y+|z|)^{3/2}}{y^4}
\frac{1}{T^{3/2}} \sum_{i,t} | A_{p,t}(k,i) |
\leq
\frac{2\sqrt{\bs c} \bs\sigma^4 \bs g^2 \bs a (y+|z|)^{3/2}}{y^4}
\frac{1}{T}
\]
since $\sum_{i,t} |A_{p,t}(k,i)| \leq \sqrt{T} \sum_i \| A(k,i) \|
\leq \sqrt{T} \bs a$. Recalling that $A$ is a convolution operator, we have
\[
\chi_3 = - \sigma_T^2
\sum_{r} \gamma_T(r) \Bigl[ \frac{\tr \EE Q(-r)}{T} \Bigr]
\EE\bigl[[ A H^* Q ]_{p,q}(k+r-\ell) \bigr] + \varepsilon'_{p,q}(k-\ell) .
\]
By the identity $HH^*Q(z) = (HH^* - zI) Q(z) + zQ = I + z Q(z)$, we get
\begin{align*}
\EE Q_{pq}(k-\ell) &= -z^{-1} I_{p,q}^N(k-\ell) + z^{-1}(\chi_1+\chi_2)
                      + z^{-1} \EE[ AH^* Q]_{p,q}(k-\ell) , \\
\EE[ AH^* Q]_{p,q}(k-\ell) &= \chi_3 + \chi_4 .
\end{align*}
Specifically,
\begin{align*}
\EE Q_{p,q}(k) &= -z^{-1} I_{p,q}^N(k)
- \sigma_T^2 \sum_{r} \gamma_T(r) \Bigl[ \frac{\tr \EE\tQ(r)}{T} \Bigr]
\EE Q_{pq}(k-r) \\
&\phantom{=} + z^{-1} \EE[ AH^* Q]_{p,q}(k) + \varepsilon_{p,q}(k) , \\
\EE[ A H^* Q ]_{p,q}(k) &=
- \sigma_T^2 \sum_{r} \gamma_T(-r) \Bigl[ \frac{\tr \EE Q(r)}{T} \Bigr]
   \EE[ A H^* Q ]_{p,q}(k-r) \\
&\phantom{=}
+ \EE [AA^* Q]_{p,q}(k)  + \varepsilon'_{p,q}(k) .
\end{align*}
Similar derivations lead to the identities
\begin{align*}
\EE \tQ_{p,q}(k) &= -z^{-1} I_{p,q}^T(k) -\sigma_T^2 \sum_r
\gamma_T(-r) \Bigl[ \frac{\tr \EE Q(r)}{T} \Bigr] \EE\tQ_{p,q}(k-r) \\
&\phantom{=}
+ z^{-1} \EE[A^*H\tQ]_{p,q}(k) + \tilde\varepsilon_{p,q}(k) , \\
\EE[ A^* H \tQ ]_{p,q}(k) &=
-\sigma_T^2 \sum_r \gamma_T(r) \Bigl[ \frac{\tr \EE\tQ(r)}{T} \Bigr]
\EE[ A^* H \tQ ]_{p,q}(k-r) \\
&\phantom{=} + \EE[ A^* A \tQ ]_{p,q}(k) + \tilde\varepsilon'_{p,q}(k)
\end{align*}
where
\[
| \tilde\varepsilon_{p,q}(k)  | \leq
\frac{2\bs\sigma^4\bs g^2|z|(|z|+y)}{y^4} \frac{1}{T^{3/2}}
\quad \text{and} \quad
|\tilde\varepsilon_{p,q}(k)|' \leq
\frac{2\bs\sigma^4 \bs g^2 \bs a (y+|z|)^{3/2}}{y^4}
\frac{1}{T} .
\]

\subsection{Passage to the frequency domain; End of the proof}
\label{t->f}

For any $z\in\CC_+$, we can identify the matrix functions $\SS_T(\cdot,z)$ and
$\tSS_T(\cdot,z)$ defined in the statement of Theorem~\ref{det-eq} with the
multiplication operators
$(\SS_T(\cdot,z) \bs g)(f) = \SS_T(f,z) \bs g(f)$ and
$(\tSS_T(\cdot,z) \tilde{\bs g})(f) = \tSS_T(f,z) \tilde{\bs g}(f)$ on
${\mathcal L}^2([0,1]\to\CC^N)$ and ${\mathcal L}^2([0,1]\to\CC^T)$
respectively.
Given a function $\bs g \in {\mathcal L}^2([0,1]\to\CC^N)$ such that
$\| \bs g \| > 0$, one can show by derivations similar to those of
Section~\ref{prf-det-eq} that
\[
\langle (\Im \SS(\cdot,z)) \bs g, \bs g \rangle =
\int_0^1
\frac{{\bs g}(f)^* \SS(f) ( \SS(f)^{-*} - \SS(f)^{-1} )
\SS(f)^* \bs g(f)}{2\imath} \, df  > 0
\]
for any $z\in\CC_+$, where $\Im \SS = (\SS - \SS^*)/(2\imath)$. The
self-adjoint operator $\Im \SS(\cdot,z)$ is therefore positive for any
$z\in \CC_+$. Similarly, one
can show that $\Im (z\SS(\cdot,z))$ is positive on $\CC_+$, and so is the
case for $\Im \tSS(\cdot,z)$ and $\Im (z\tSS(\cdot,z))$.

Let $S_T(z) = [ S_T(k-\ell)(z) ]_{k,\ell\in\ZZ}$ and
$\tS_T(z) = [ \tS_T(k-\ell)(z) ]_{k,\ell\in\ZZ}$ the convolution operators
obtained through the isometries
\[
S_T(z) = {\mathcal F}_N \SS_T(\cdot,z) {\mathcal F}_N^{^*}
\quad \text{and} \quad
\widetilde S_T(z) = {\mathcal F}_T \tSS_T(\cdot,z) {\mathcal F}_T^{^*} .
\]
Then we have the following result:

\begin{lemma}
For any $z\in\CC_+$, the matrix blocks of the operators $S_T$ and $\tS_T$
satisfy the equations (where the parameter $z$ is omitted)
\begin{align*}
S_T(k) &= - z^{-1} \delta_{k,0} I_N
- \sigma_T^2 \sum_r \gamma_T(r) \td_T(r) S_T(k-r) + z^{-1} P_T(k) , \\
P_T(k) &= - \sigma_T^2 \sum_r \gamma_T(-r) \varphi_T(r) P_T(k-r)
   + [A_TA_T^* S_T](k) , \\
\tS_T(k) &= - z^{-1} \delta_{k,0} I_T
-\sigma_T^2 \sum_r \gamma_T(-r) \varphi_T(r) \tS_T(k-r)
       + z^{-1} \widetilde P_T(k) , \\
\widetilde P_T(k) &=
- \sigma_T^2 \sum_r \gamma_T(r) \td_T(r) \widetilde P_T(k-r)
       + [A^*_TA_T \tS_T](k)
\end{align*}
where
\[
\varphi_T(r) = \frac{\tr S_T(r)}{T}
\quad \text{and} \quad
\td_T(r) = \frac{\tr \tS_T(r)}{T}  .
\]
The convolution operators $P_T = [ P_T(k-\ell)]_{k,\ell\in\ZZ}$ and
$\widetilde P_T = [ \widetilde P_T(k-\ell)]_{k,\ell\in\ZZ}$ satisfy the 
inequality 
$\| P(z) \| \vee \| \widetilde P(z)\| \leq \bs a^2 | z | / (\Im z)^2$. 
In addition, $\| S_T \| \vee \| \tS \| \leq (\Im z)^{-1}$ and furthermore,
$\Im S_T$, $\Im \tS_T$, $\Im(z S_T)$ and $\Im (z \tS_T)$ are positive. \\
There exist four unique bounded convolution operators
$S_T = [ S_T(k-\ell) ]_{k,\ell\in\ZZ}$,
$\tS_T = [ \tS_T(k-\ell) ]_{k,\ell\in\ZZ}$,
$P_T = [ P_T(k-\ell) ]_{k,\ell\in\ZZ}$ and
$\widetilde P_T = [ \widetilde P_T(k-\ell) ]_{k,\ell\in\ZZ}$
satisfying the equations above in association with the positivity constraints
on $\Im S_T$, $\Im \tS_T$, $\Im(z S_T)$ and $\Im (z \tS_T)$.
\end{lemma}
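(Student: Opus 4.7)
The plan is to transcribe the frequency-domain system of Theorem~\ref{det-eq} to the time domain via the isometries $\mathcal F_N, \mathcal F_T$ introduced before the statement of the lemma. Under these isometries a bounded multiplication operator with matrix-valued symbol on $[0,1]$ becomes a bounded convolution operator on $l^2(\ZZ)$ whose blocks are the Fourier coefficients of the symbol; pointwise multiplication of symbols in $f$ corresponds to block-convolution in the time index $k$. In this sense the lemma is essentially the Fourier image of Theorem~\ref{det-eq} together with a straightforward norm estimate, so I would organise the proof in three blocks: algebraic derivation of the time-domain identities, norm bounds and positivity, then uniqueness.

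For the algebra, I would rewrite \eqref{eq-S} as $\SS_T(f)^{-1}\SS_T(f)=I_N$ and isolate $\SS_T(f)$, which yields
\[
\SS_T(f) = -z^{-1} I_N - \sigma_T^2\bigl(\bs\gamma_T(f)\star\tbd_T(f)\bigr)\SS_T(f) + z^{-1}\bs P_T(f),
\]
with the auxiliary matrix symbol
\[
\bs P_T(f,z) \;=\; \bigl(1+\sigma_T^2\bs\gamma_T(-f)\star\bd_T(f)\bigr)^{-1}({\bs A}_T{\bs A}_T^*)(f)\,\SS_T(f) .
\]
The definition of $\bs P_T$ rearranges to
\[
\bs P_T(f) = -\sigma_T^2\bigl(\bs\gamma_T(-f)\star\bd_T(f)\bigr)\bs P_T(f) + ({\bs A}_T{\bs A}_T^*)(f)\SS_T(f),
\]
and symmetric manipulations on \eqref{eq-tS} give the companion equations for $\tSS_T$ and for the analogous $\widetilde{\bs P}_T(f)$. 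A quick computation shows that the $k$-th Fourier coefficient of $\bs\gamma_T(f)\star\tbd_T(f)$ is $\gamma_T(k)\td_T(k)$ with $\td_T(k)=\tr\tS_T(k)/T$, and that of $\bs\gamma_T(-f)\star\bd_T(f)$ is $\gamma_T(-k)\varphi_T(k)$ with $\varphi_T(k)=\tr S_T(k)/T$; pointwise products of symbols become discrete convolutions of blocks. Identifying Fourier coefficients in the four identities above then yields the four equations stated in the lemma, with $P_T, \widetilde P_T$ the convolution operators whose symbols are $\bs P_T, \widetilde{\bs P}_T$.

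For the bounds, the non-negativity of $\bs\gamma_T$ (Bochner's theorem, since $\gamma_T$ is a covariance) combined with the positivity of $\Im\bd_T, \Im\tbd_T, \Im(z\bd_T), \Im(z\tbd_T)$ from Theorem~\ref{det-eq} gives $\Im(-\SS_T(f)^{-1})\geq \Im z\,I_N$ (as in the estimate \eqref{Imphi}), whence $\|\SS_T(f)\|\leq 1/\Im z$, and likewise $\|\tSS_T(f)\|\leq 1/\Im z$. Since $\mathcal F_N,\mathcal F_T$ are isometries, $\|S_T\|\vee\|\tS_T\|\leq 1/\Im z$. The same positivity gives the estimate $|1+\sigma_T^2\bs\gamma_T(-f)\star\bd_T(f)|\geq \Im z/|z|$ already used in Lemma~\ref{bornes}; combined with $\|({\bs A}_T{\bs A}_T^*)(f)\|\leq \bs a^2$ (Assumption~\ref{bound-A}) and $\|\SS_T(f)\|\leq 1/\Im z$, this yields $\|\bs P_T(f)\|\leq \bs a^2|z|/(\Im z)^2$, hence the corresponding bound on $\|P_T\|$, and similarly on $\|\widetilde P_T\|$. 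The positivity of $\Im S_T, \Im\tS_T, \Im(zS_T), \Im(z\tS_T)$ as operators is exactly the a.e.\ positivity of the matrix symbols $\Im\SS_T(f)$, $\Im\tSS_T(f)$, $\Im(z\SS_T(f))$, $\Im(z\tSS_T(f))$, which has already been established in the paragraph immediately preceding the lemma.

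For uniqueness, given any bounded-convolution solution $(S_T, \tS_T, P_T, \widetilde P_T)$ satisfying the positivity constraints, I would transport it back to the frequency domain via $\mathcal F_N^*(\cdot)\mathcal F_N$ and $\mathcal F_T^*(\cdot)\mathcal F_T$. The time-domain identities transcribe to \eqref{eq-S}--\eqref{eq-tS} after eliminating $\bs P_T, \widetilde{\bs P}_T$, and taking normalised traces gives a pair $(\bd_T,\tbd_T)$ with the integrability and positivity required for the pointwise uniqueness statement of Theorem~\ref{det-eq}. This determines $\bd_T, \tbd_T$ uniquely, hence $\SS_T, \tSS_T$ via \eqref{eq-S}--\eqref{eq-tS}, hence $\bs P_T, \widetilde{\bs P}_T$, and by the Fourier isometry the original operators. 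The only mildly delicate point I anticipate is the bookkeeping needed to check that a general bounded convolution solution produces symbols falling within the hypotheses of Theorem~\ref{det-eq}, in particular the integrability of the traces $\bd_T, \tbd_T$; but this follows from Proposition~\ref{laurent} together with the norm bound $\|S_T\|\leq 1/\Im z$ established above, so the overall argument remains routine.
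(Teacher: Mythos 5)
Your proposal is correct and follows essentially the same path as the paper's proof: rewrite \eqref{eq-S}--\eqref{eq-tS} in the affine form isolating $\SS_T,\bs P_T,\tSS_T,\widetilde{\bs P}_T$, transcribe to the time domain via the isometries $\mathcal F_N,\mathcal F_T$ (products of symbols become block convolutions, with the Fourier coefficients of $\bs\gamma_T(\pm f)\star\cdot$ being $\gamma_T(\pm k)$ times the trace coefficients), and then deduce the norm bounds, positivity, and uniqueness from the corresponding properties of the multiplication operators already established for Theorem~\ref{det-eq}. The only difference is that you spell out the Fourier-coefficient bookkeeping and the uniqueness step (checking integrability of the traces so Theorem~\ref{det-eq}'s hypotheses apply) in slightly more detail than the paper, which leaves these as straightforward.
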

\begin{proof}
Equation~\eqref{eq-S} can be rewritten as
\begin{align*}
-z \SS(f) - z \sigma^2 (\bs\gamma(f)\star \tbd(f)) \SS(f) + \bs P(f) &= I, \\
(1 + \sigma^2 \bs\gamma(-f)\star \bd(f)) \bs P(f)
                                    &= (\bs A\bs A^*)(f) \SS(f) .
\end{align*}
Taking the Fourier transforms of $\SS(f)$ and $\bs P(f)$, we recover the
first two equations in the statement. The other two equations are obtained
similarly. We know that the multiplication operator $\SS$ satisfies
$\| \SS \| \leq 1/ \Im z$, $\Im \SS > 0$ and $\Im (z\SS) > 0$.
From the expression of $\bs P(f)$ we see that the associated multiplication
operator has its norm bounded by $\bs a^2 |z| / (\Im z)^2$.
Since $\cal F_N$ are $\cal F_T$ are isometries, the norms and positivity
constraints are deduced at once. \\
Finally, since a bounded convolution operator is uniquely determined by
its associated multiplication operator, and since the operators
$\SS$ and $\tSS$ are uniquely determined by the conditions of
Theorem~\ref{det-eq}, we obtain the uniqueness alluded to in the last part
of the statement.
\end{proof}

We are now in position to establish the first statement of
Theorem~\ref{approx}. Since
\[
\frac{\tr S_T(0)(z)}{N} = \frac 1N \int_0^1 \tr \bs S_T(f) \, df
= \frac TN \int_0^1 \bs\varphi_T(f,z) \, df ,
\]
what we need to show is that $N^{-1} \tr(\EE Q_T(0)(z) - S_T(0)(z)) \to 0$
for any $z\in\CC_+$. To that end, we start by showing that
\[
\Delta_T = \sup_{k\in\ZZ} \frac{\lf \EE Q(k) - S(k) \rf}{\sqrt{T}}
\quad \text{and} \quad
\widetilde\Delta_T = \sup_{k\in\ZZ}
        \frac{\lf \EE \tQ(k) - \tS(k) \rf}{\sqrt{T}}
\]
converge to zero as $T\to\infty$ in a certain region of $\CC_+$, where
$\lf\cdot\rf$ is the Frobenius norm. Indeed,
we have
\begin{align*}
\EE Q(k) - S(k) &=
- \sigma_T^2 \sum_r \gamma_T(r) \frac{\tr\EE\tQ(r)}{T}
( \EE Q(k-r) - S(k-r) ) \\
&\phantom{=}
- \sigma_T^2 \sum_r \gamma_T(r) \frac{\tr\EE\tQ(r) - \tr\tS(r)}{T} S(k-r) \\
&\phantom{=}
+ z^{-1} ( \EE[AH^*Q](k) - P(k)) + E(k) , \\
\EE[AH^*Q](k) - P(k) &=
- \sigma_T^2 \sum_r \gamma_T(-r) \frac{\tr\EE Q(r)}{T}
                        (\EE[AH^*Q](k-r) - P(k-r) ) \nonumber \\
&\phantom{=}
- \sigma_T^2 \sum_r \gamma_T(-r) \frac{\tr\EE Q(r)-\tr S(r)}{T} P(k-r)
                  \nonumber \\
&\phantom{=}
+ [AA^*(\EE Q - S)](k) + E'(k) \nonumber
\end{align*}
where $\lf E(k) \rf = \Bigl(\sum_{p,q} | \varepsilon_{p,q}(k) |^2\Bigr)^{1/2}
\leq C(z) / \sqrt{T}$ and $\lf E'(k) \rf \leq C'(z)$, with $C(z)$ and $C'(z)$
being bounded on the compact subsets of $\CC_+$. Notice that
\begin{multline}
\Bigl| \frac{\tr[\EE Q - S](r)}{T} \Bigr| \leq
\frac{\sum_{p} | [\EE Q - S]_{p,p}(r) |}{T} \\
\leq \frac{(N\sum_{p} | [\EE Q - S]_{p,p}(r) |^2)^{1/2}}{T} \leq
\sqrt{\bs c} \frac{\lf [\EE Q - S](r) \rf}{\sqrt{T}}
\label{trQ-S}
\end{multline}
and similarly,
$| \tr[\EE \tQ - \tS](r) /T | \leq \lf \EE \tQ - \tS](r) \rf / \sqrt{T}$.
We also have $\lf S(k) \rf^2 / T \leq \bs c \| S(k) \|^2 \leq \bs c / y^2$
and $\lf P(k) \rf^2 / T \leq \bs c \| P(k) \|^2 \leq
\bs c \bs a^4 |z|^2 / y^4$. \\
Writing
\[
\Delta'_T = \sup_k \frac{\lf \EE[AH^*Q](k) - P(k) \rf}{\sqrt{T}} ,
\]
we easily get from the expression of $\EE Q(k) - S(k)$ above
\[
\Delta_T \leq \frac{\bs\sigma^2 \bs g}{y} \Delta_T +
 \frac{\bs\sigma^2 \bs g \sqrt{\bs c}}{y} \widetilde\Delta_T +
 \frac{\Delta'_T}{|z|} + \frac{C(z)}{T} .
\]
Using the norm inequality
$\lf UV \rf \leq \| U \|\,\lf V \rf$ where $U$ and $V$ are two matrices, we
observe that
\begin{align*}
\lf [AA^*(\EE Q - S)](k) \rf &= \lf \sum_{i,j} A(i) A(j)^*
                                        [\EE Q - S](j-i+k) \rf  \\
&\leq \sum_{i,j} \| A(i) \| \, \| A(j) \| \, \lf [\EE Q - S](j-i+k) \rf
\end{align*}
and we get from the expression of $\EE[AH^*Q](k) - P(k)$ that
\[
\Delta'_T \leq
\frac{\bs\sigma^2 \bs g \bs c}{y} \Delta'_T +
\frac{\bs\sigma^2 \bs g \sqrt{\bs c} \bs a^2 |z|}{y^2} \Delta_T +
\bs a^2 \Delta_T + \frac{C'(z)}{\sqrt{T}} ,
\]
and we have two similar inequalities for $\widetilde\Delta_T$ and
$\widetilde\Delta'_T = \sup_k \lf \EE[A^*H\tQ](k) - \widetilde P(k) \rf
/ \sqrt{T}$. These four inequalities can be written in a compact form
as
\[
\bs\Delta_T(z) \leq  \bs M(z) \bs\Delta_T(z) + \frac{1}{\sqrt{T}} \bs E(z)
\]
where $\bs\Delta_T(z) = [\Delta_T, \Delta'_T, \widetilde\Delta_T,
\widetilde\Delta'_T ]^T$, the inequality is element wise, $\bs M(z)$ is a
$4\times 4$ matrix independent on $T$ whose norm converges to zero as
$y\to\infty$, and $\bs E(z)$ is vector whose norm is bounded on any compact
set of $\CC_+$. \\

This result coupled with the inequalities~\eqref{trQ-S} shows that for $y$
large enough, it holds that $N^{-1} \tr(\EE Q_T(0)(z) - S_T(0)(z)) \to 0$. 
Recall that
$N^{-1} \tr\EE Q_T(0)(z)$ and $N^{-1} \tr S_T(0)(z)$ are holomorphic and
bounded on the compact subsets of $\CC-\RR_+$. By the normal family theorem,
from any sequence of integers depending on $T$, we can extract a subsequence
$v(T)$ such that
$N(v(T))^{-1} \tr(\EE Q_{v(T)}(0)(z) - S_{v(T)}(0)(z)) \to 0$ on the
compact subsets of $\CC-\RR_+$. But since the limit is zero for large enough
$y$, it is identically zero on $\CC-\RR_+$. The convergence~\eqref{cvg-st} in
the statement of Theorem~\ref{approx} is proven. To establish the second part
of the statement, we need the tightness of $\bs\pi_T$.

\begin{lemma}
\label{tight}
The sequence $\bs\pi_T$ is tight.
\end{lemma}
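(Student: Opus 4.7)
My plan is to bound $\sup_T \int_0^\infty \lambda\,d\bs\pi_T(\lambda)$ uniformly, so that tightness follows from Markov's inequality. By Theorem~\ref{det-eq}, for each $f\in[0,1]$ the function $(T/N)\bd_T(f,\cdot) = N^{-1}\tr\SS_T(f,\cdot)$ is the Stieltjes transform of a probability measure $\mu_{T,f}$ on $[0,\infty)$. Since $\bs p_T(z) = \int_0^1 (T/N)\bd_T(f,z)\,df$ is the Stieltjes transform of $\bs\pi_T$, Fubini gives the mixture representation $\bs\pi_T = \int_0^1 \mu_{T,f}\,df$, so it is enough to bound $m_1(\mu_{T,f}) := \int \lambda\,d\mu_{T,f}(\lambda)$ uniformly in $T$ and $f$.

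To access the first moment through the Stieltjes transform, I will use the monotone identity, valid for any probability measure $\nu$ on $[0,\infty)$:
\[
m_1(\nu) = \sup_{y>0}\int \frac{y^2\lambda}{\lambda^2+y^2}\,d\nu(\lambda) = \sup_{y>0}\bigl(-y\,\Im\bigl[(-\imath y)\bs s_\nu(\imath y)\bigr]\bigr).
\]
Thus it suffices to bound $-y\,\Im[(-\imath y)(T/N)\bd_T(f,\imath y)]$ uniformly in $T,f$ for large $y$.

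The main step is an asymptotic expansion of $\SS_T(f,\imath y)$ as $y\to\infty$. Using the a~priori bounds $|\bd_T(f,\imath y)|\leq \bs c/y$ and $|\tbd_T(f,\imath y)|\leq 1/y$ from Theorem~\ref{det-eq}, combined with $\bs\gamma_T\geq 0$, $\int_0^1 \bs\gamma_T(f-u)\,du = \gamma_T(0) = 1$, and $\|(\bs A\bs A^*)(f)\|\leq \bs a^2$, the formula for $\SS_T^{-1}(f,\imath y)$ in~\eqref{eq-S} gives $\SS_T^{-1}(f,\imath y) = -\imath y\,I + \sigma_T^2 I + (\bs A\bs A^*)(f) + O(1/y)$ uniformly in $T$ and $f$, for $y$ larger than a threshold depending only on the universal constants $\bs\sigma^2,\bs g,\bs c,\bs a$. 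A Neumann series expansion then yields, uniformly in $T$ and $f$,
\[
-\imath y\,(T/N)\bd_T(f,\imath y) = 1 + \bigl(\sigma_T^2 + N^{-1}\tr(\bs A\bs A^*)(f)\bigr)/(\imath y) + O(1/y^2),
\]
so that $-y\,\Im[(-\imath y)(T/N)\bd_T(f,\imath y)] = \sigma_T^2 + N^{-1}\tr(\bs A\bs A^*)(f) + O(1/y) \leq \bs\sigma^2 + \bs a^2 + O(1/y)$, where I used $\tr(\bs A\bs A^*)(f)\leq N\|(\bs A\bs A^*)(f)\|$. Passing to the supremum in $y$ gives $m_1(\mu_{T,f}) \leq \bs\sigma^2 + \bs a^2$ uniformly, and integrating over $f$ concludes via Markov.

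The main obstacle is purely bookkeeping: every remainder in the expansion of $\SS_T(f,\imath y)$ and of its trace must be controlled using only the universal constants $\bs c, \bs\sigma^2, \bs g, \bs a$, with no hidden dependence on $T$ or on the continuous variable $f$. Since Theorem~\ref{det-eq} already supplies the crucial $T,f$-uniform bounds on $\bd_T$ and $\tbd_T$, this bookkeeping is mechanical, and there is no deeper conceptual difficulty.
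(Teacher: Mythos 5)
Your argument is correct but takes a genuinely different route from the paper. The paper does not attempt a moment bound in Lemma~\ref{tight}; instead it shows that $|\imath y\,\bs p_T(\imath y)+1|<\varepsilon$ uniformly in $T$ once $y>C_\varepsilon$, extracts locally uniform subsequential limits of $\bs p_T$ by the normal family theorem, identifies each limit as the Stieltjes transform of a (full) probability measure thanks to the uniform estimate, and concludes tightness from sequential weak compactness (Prokhorov). You instead prove the stronger quantitative statement $\sup_T\int\lambda\,d\bs\pi_T\le\bs\sigma^2+\bs a^2$ and invoke Markov. This is cleaner, more elementary (no normal families or abstract compactness), and it also absorbs work the paper postpones: in the proof of Theorem~\ref{capa} the authors separately establish $\int\lambda\,d\bs\pi_T<\bs a^2+\bs\sigma^2$ by mimicking~\cite[Lemma~C.1]{hachem-loubaton-najim07}, so your route effectively merges Lemma~\ref{tight} with that moment estimate. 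The mixture decomposition $\bs\pi_T=\int_0^1\mu_{T,f}\,df$ and the monotone identity $m_1(\nu)=\sup_{y>0}\bigl(-y\,\Im[(-\imath y)\bs s_\nu(\imath y)]\bigr)$ are both fine.

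One small expositional point: the a~priori bounds $|\bd_T|\le\bs c/y$, $|\tbd_T|\le 1/y$ alone do \emph{not} immediately give $\SS_T^{-1}(f,\imath y)=-\imath y I+\sigma_T^2 I+(\bs A\bs A^*)(f)+O(1/y)$, because they only yield $-\imath y\,\tbz_T(f,\imath y)=O(1)$, not $\sigma_T^2+O(1/y)$. You need one iteration through the $\tSS_T$ equation first: plug the a~priori bounds into~\eqref{eq-tS} to get $-\imath y\,\tbd_T(u,\imath y)=1+O(1/y)$ uniformly in $T,u$, then convolve with $\bs\gamma_T$ to get $-\imath y\,\tbz_T(f,\imath y)=\sigma_T^2+O(1/y)$, and only then expand $\SS_T^{-1}$. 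Your closing remark about ``mechanical bookkeeping'' evidently has this in mind, but for a self-contained writeup that intermediate bootstrap step should be stated explicitly, since it is where the $T,f$-uniformity is actually earned.
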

\begin{proof}
For any small $\varepsilon > 0$, any $T \in\NN$, any $f\in[0,1]$ and any
eigenvalue $\lambda_{i,T}(f)$ of $(\bs A_T\bs A_T^*)(f)$,
writing $\bz_T(f,\imath y) = \sigma_T^2 \bs\gamma_T(f)\star\tbd_T(f,\imath y)$
and $\tbz_T(f,\imath y) = \sigma_T^2 \bs\gamma_T(-f)\star\bd_T(f,\imath y)$,
we have
\[
\Bigl| \frac{-\imath y}
{-\imath y - \imath y \tbz_T(f,\imath y) +
\frac{\lambda_{i,T}(f)}{1 + \bz_T(f,\imath y)} }
- 1 \Bigr|
=
\Bigl| \frac{\tbz_T(f,\imath y)
- \frac{\lambda_{i,T}(f)}{\imath y + \imath y \bz_T(f,\imath y)}}
{ 1 + \tbz_T(f,\imath y)
- \frac{\lambda_{i,T}(f)}
{\imath y + \imath y \tbz_T(f,\imath y)} } \Bigr|
< \varepsilon
\]
when $y > C_\varepsilon = {2(\bs\sigma^2 \bs g + \bs a^2)}/{\varepsilon}$,
thanks to  the inequalities $|\bz_T(f,\imath y)| \leq \bs\sigma^2\bs g / y$
and $\Im (\imath y \tbz_T(f,\imath y)) > 0$ as $y > 0$.
This implies that the Stieltjes transform
$\bs p_T(z) = N^{-1} \int_0^1 \tr \SS_T(f,z) \, df$ of $\bs\pi_T$ satisfies
$| \imath y \bs p_T(-\imath y) + 1 | < \varepsilon$ when $y > C_\varepsilon$.
Since the family $(m_T(z))_{T\in\NN}$ is bounded on
the compact subsets of $\CC_+$, from any sequence of integers, one can
extract a subsequence $v(T)$ such that $\bs p_{v(T)}(z)$ converges to a
function $\bs p^v(z)$ uniformly on the compacts of $\CC_+$. The function
$\bs p^v(z)$ is moreover the Stieltjes transform of a positive measure
$\bs\mu^{v}$.
By a passage to the limit, it holds that $| \imath y \bs p^v(-\imath y) + 1 |
\leq \varepsilon$ when $y > C_\varepsilon$.
This implies that $-\imath y \bs p^v(-\imath y) \to 1$ as $y\to\infty$,
in other words, $\bs \mu^{v}$ is a probability measure. We have shown that
the sequence of probability measures $\bs\mu_T$ is weakly compact. Therefore
it is tight.
\end{proof}

From the tightness of the sequence $\bs\pi_T$ and the
convergence~\eqref{cvg-st}, we get that any converging subsequence of
Stieltjes transforms of the $\mu_T$ has the Stieltjes transform of a
probability measure as a limit. Hence the tightness of the $\mu_T$ and the
second part of the statement of Theorem~\ref{approx}.

\subsection{Sketch of the proof of Theorem~\ref{capa}}
\label{prf-capa}
We remarked above that Equations~\eqref{eq-S}--\eqref{eq-tS} are similar to
their counterparts in~\cite[Th.~2.4]{hachem-loubaton-najim07}
(the latter are ``discrete frequency'' analogues of
Equations~\eqref{eq-S}--\eqref{eq-tS}). For this
reason, the proof of Theorem~\ref{capa} follows closely the proof
of~\cite[Th.~4.1]{hachem-loubaton-najim07}. We just provide here the main
steps of this proof.

The first observation is that the mutual information per receive antenna
$\int\log(1+\lambda) \mu_T(d\lambda)$ is related with $\bs m_T$ by the equation
\[
\int \log(1+\lambda) \, \mu_T(d\lambda) =
\int_1^\infty \Bigl( \frac 1t - \bs m_T(-t) \Bigr) \, dt
\]
which is obtained from the relation $d[\log(1+\lambda/t)]/dt =
-t^{-1} + (\lambda+t)^{-1}$. It also holds that
$\int \lambda \, \mu_T(d\lambda) < 4(\bs a^2 + \bs\sigma^2)$.
Indeed, getting back to the proof of Theorem~\ref{I-ids}, consider an event
where $\nu_T^n$ converges weakly to $\mu_T$ as $n\to\infty$, and where
$\limsup_n \int \lambda \nu_T^n (d\lambda) \leq 2(\bs a^2 + \bs\sigma^2)$.
Let $C > 0$ be a continuity point of $\mu_T$. Then for $n$ large enough,
\[
4(\bs a^2 + \bs\sigma^2) > \int (\lambda\wedge C) \,  \nu_T^n (d\lambda)
 \xrightarrow[n\to\infty]{}  \int (\lambda\wedge C) \,  \mu_T (d\lambda) .
\]
Taking a sequence of such continuity points that converges to infinity, we
obtain the result by the monotone convergence theorem.

By mimicking the proof of \cite[Lemma~C.1]{hachem-loubaton-najim07}, we can
also show that
$\int \lambda \, \bs \pi_T(d\lambda) < \bs a^2 + \bs\sigma^2$.
This shows that
\begin{align*}
& | t^{-1} - \bs p_T(-t) - (t^{-1} - \bs m_T(-t)) | \\
&\leq | t^{-1} - \bs p_T(-t) | + | (t^{-1} - \bs m_T(-t)) | \\
&=
\bigl| \int_0^\infty \Bigl( \frac 1t - \frac{1}{t+\lambda} \Bigr)
                                                 \bs\pi_T(d\lambda) \Bigr|
+
\bigl| \int_0^\infty \Bigl( \frac 1t - \frac{1}{t+\lambda} \Bigr)
                                                 \bs\mu_T(d\lambda) \Bigr| \\
&\leq t^{-2} \Bigl| \int \lambda \bs\pi_T(d\lambda) \Bigr|
 + t^{-2} \Bigl| \int \lambda \bs\mu_T(d\lambda) \Bigr| \leq
3(\bs a^2 + \bs\sigma^2) / t^2.
\end{align*}
Therefore,
\[
{\cal I}_T =
\int \log(1+\lambda) \, \mu_T(d\lambda) =
\int_1^\infty \Bigl( \frac 1t - \bs p_T(-t) \Bigr) \, dt
\]
is finite, and by the dominated convergence theorem,
$N^{-1} I_T(S; (Y,H)) - {\cal I}_T \to 0$ as $T\to\infty$. \\
In order to obtain the expression of ${\cal I}_T$ provided in the statement,
we need to find an antiderivative for $1/t - \bs p_T(-t)$.
This antiderivative can be obtained by a lengthy but straightforward
adaptation of the derivations of \cite[\S~C.2]{hachem-loubaton-najim07}.

\appendix

\section{Proof of Lemma~\ref{varQ}}
\label{anx-varQ}

Given an integer $M\in\NN$, let $H^M$ and $Q^M(z)$ be as in the statement of
Lemma~\ref{deriv}.
The variance of $Q_{p,q}^M(k,\ell)$ can be bounded by the PN inequality.
Specifically, we have by this inequality
$\var Q_{p,q}^M(k,\ell) \leq \chi_1 + \chi_2$ where
\begin{align*}
\chi_1 &= \sum_{i,j,u,v=-M}^M \sum_{n,t=0}^{N-1,T-1}
\EE[ X_{n,t}(i,j) \bar X_{n,t}(u,v)]
\EE\Bigl[ \frac{\partial Q_{p,q}^M(k,\ell)}{\partial X_{n,t}(i,j)}
\overline{\Bigl(\frac{\partial Q_{p,q}^M(k,\ell)}{\partial X_{n,t}(u,v)}\Bigr)}
\Bigr] \\
\chi_2 &= \sum_{i,j,u,v=-M}^M \sum_{n,t=0}^{N-1,T-1}
\EE[ X_{n,t}(i,j) \bar X_{n,t}(u,v)]
\EE\Bigl[
\overline{\Bigl(\frac{\partial Q_{p,q}^M(k,\ell)}{\partial \bar X_{n,t}(i,j)}
\Bigr)}
\frac{\partial Q_{p,q}^M(k,\ell)}{\partial \bar X_{n,t}(u,v)} \Bigr] .
\end{align*}
Let us deal with $\chi_1$. Since
\[
\frac{\partial Q_{p,q}^M(k,\ell)}{\partial X_{n,t}(i,j)} =
- \1_{|i|\leq M} \1_{|j|\leq M} \,
Q_{p,n}^M(k,i) [ H^{M*} Q^M]_{t,q}(j,\ell),
\]
we get
\begin{align*}
\chi_1 &= \frac 1T \sum_{i,j,u,v} \sum_{n,t}
\gamma_T(i-u) \phi_T(i-j)^2 \delta_{i-j , u-v}  \\
&
\ \ \ \ \ \ \ \
\EE\Bigl[ Q_{p,n}^M(k,i) [ H^{M*} Q^M]_{t,q}(j,\ell)
\overline{ Q_{p,n}^M(k,u) [ H^{M*} Q^M]_{t,q}(v,\ell)} \Bigr]  \\
&= \frac 1T \sum_{r, s, u} \sum_{n,t} \gamma_T(r) \phi_T(s)^2   \\
&
\ \ \ \ \ \ \ \
\EE\Bigl[ Q_{p,n}^M(k,r+u) [ H^{M*} Q^M]_{t,q}(r+u-s,\ell)
\overline{ Q_{p,n}^M(k,u) [ H^{M*} Q^M]_{t,q}(u-s,\ell)} \Bigr] .
\end{align*}
Using the inequality
$|\EE \sum_u X_u Y_u | \leq (\sum_u \EE|X_u|^2)^{1/2}
(\sum_u \EE|Y_u|^2)^{1/2}$, we get
\begin{align*}
\chi_1 &\leq \frac 1T \sum_{r, s} |\gamma_T(r)| \, \phi_T(s)^2  \sum_{u}
\sum_{n,t} \EE | Q_{p,n}^M(k,u) [ H^{M*} Q^M]_{t,q}(u-s,\ell) |^2 \\
&= \frac 1T \sum_{r, s} |\gamma_T(r)| \, \phi_T(s)^2  \sum_{u}
\EE \Bigl[ [ Q^M(k,u) Q^M(k,u)^* ]_{p,p}
[ [ H^{M*} Q^M](u-s,\ell)^* [ H^{M*} Q^M](u-s,\ell) ]_{q,q} \Bigr]  \\
&\leq \frac 1T \frac{\sigma_T^2}{y^2} \sum_r |\gamma_T(r)|
\sum_u \EE \Bigl[
[ [ H^{M*} Q^M](u,\ell)^* [ H^{M*} Q^M](u,\ell) ]_{q,q} \Bigr]  \\
&\leq \frac 1T \frac{\bs\sigma^2\bs g}{y^2}
\EE \Bigl[ [ Q^{M*} H^M H^{M*} Q^M]_{q,q}(\ell,\ell) \Bigr] \\
&\leq \frac{1}{T}\frac{\bs\sigma^2 \bs g (|z|+y)}{y^4} .
\end{align*}
The term $\chi_2$ can be treated similarly, using the second identity in the
statement of Lemma~\ref{deriv}. This leads to the inequality
\[
\var Q_{p,q}^M(k,\ell) \leq \frac{2}{T}\frac{\bs\sigma^2 \bs g (|z|+y)}{y^4} .
\]
Now, given any vector $a \in {\cal K}$, it is clear that
$H^M H^{M*} a \to HH^* a$ strongly as $M\to\infty$.
Since $\cal K$ is a core for $HH^*$, the operator $H^M H^{M*}$ converges to
$HH^*$ in the strong resolvent sense~\cite[Th.~VIII.25]{ree-sim-1}.
It results that $Q_{p,q}^M(k,\ell) \to Q_{p,q}(k,\ell)$ almost surely as
$M\to\infty$.
Since both these random variables are bounded by $1/y$, we get that
$\var Q_{p,q}^M(k,\ell) \to_{M\to\infty} \var Q_{p,q}(k,\ell)$, and the bound
on $\var Q_{p,q}(k,\ell)$ is established. \\

We now establish the bound over $\var \tr Q(k,\ell)$. Using the PN inequality
again, we get $\var\tr Q^M(k,\ell) \leq \chi_1 + \chi_2$ where
\[
\chi_1 = \sum_{i,j,u,v = -M}^M \sum_{n,t=0}^{N-1,T-1}
\EE[ X_{n,t}(i,j) \bar X_{n,t}(u,v)]
\EE\Bigl[ \frac{\partial \tr Q^M(k,\ell)}{\partial X_{n,t}(i,j)}
\overline{\Bigl(\frac{\partial \tr Q^M(k,\ell)}{\partial X_{n,t}(u,v)}\Bigr)}
\Bigr]
\]
and a similar formula for $\chi_2$. By the differentiation formula, we have
\[
\frac{\partial \tr Q^M(k,\ell)}{\partial X_{n,t}(i,j)} =
- \1_{|i|\leq M} \1_{|j|\leq M} \,
\Bigl[ [ H^{M*} Q^M](j,\ell) \ Q^M(k,i) \Bigr]_{t,n} .
\]
Therefore,
\begin{align*}
\chi_1 &= \frac 1T \sum_{i,j,u,v} \sum_{n,t}
\gamma_T(i-u) \phi_T(i-j)^2 \delta_{i-j,u-v} \\
&
\ \ \ \ \ \ \ \ \ \ \ \ \ \ \ \
\EE\Bigl[ [ [H^{M*}Q^M](j,\ell) Q^M(k,i) ]_{t,n}
\overline{ [ [H^{M*}Q^M](v,\ell) Q^M(k,u) ]_{t,n} } \Bigr]  \\
&= \frac 1T \sum_{r,s,u} \sum_{n,t} \gamma_T(r) \phi_T(s)^2 \\
&
\ \ \ \ \ \ \ \ \ \ \ \ \ \ \ \
\EE\Bigl[ [ [H^{M*}Q^M](u+r-s,\ell) Q^M(k,u+r) ]_{t,n}
\overline{ [ [H^{M*}Q^M](u-s,\ell) Q^M(k,u) ]_{t,n} } \Bigr]  \\
&= \frac 1T \sum_{r,s,u} \sum_{n,t} \gamma_T(r) \phi_T(s)^2 \\
&
\ \ \ \ \ \ \ \ \ \ \ \ \ \ \ \
\EE\Bigl[ [ H^{M*}Q^M D_\ell J_{k-\ell} Q^M]_{t,n}(u+r-s,u+r)
\overline{ [ H^{M*}Q^M D_\ell J_{k-\ell} Q^M ]_{t,n}(u-s,u) } \Bigr]
\end{align*}
where $D_\ell$ and $J_\ell$ are the operators on $l^2(\ZZ)$
\[
[J_\ell a](k) = a(k+\ell) \quad \text{and} \quad
[D_\ell a](k) = \delta_{k,\ell} a(k)
\]
where $a(k)$ (resp.~$[J_\ell a](k)$, $[D_\ell a](k)$) is the $k^{\text{th}}$
block with size $N$ of the vector $a$ (resp.~$J_\ell a$, $D_\ell a$) in the
canonical basis. Pursuing, we have
\begin{align*}
\chi_1 &\leq \frac 1T \sum_{r,s} |\gamma_T(r)| \, \phi_T(s)^2
\sum_u \sum_{n,t}
\EE\Bigl| [ H^{M*}Q^M D_\ell J_{k-\ell} Q^M]_{t,n}(u-s,u) \Bigr|^2 .
\end{align*}
Observe now that the operator
${\cal O} = H^{M*}Q^M D_\ell J_{k-\ell} Q^M Q^{M*} J_{k-\ell}^*
D_\ell Q^{M*} H^M$ is a finite rank operator such as
\[
\rank({\cal O}) \leq N \quad \text{and} \quad
\| {\cal O} \| \leq \frac{y+|z|}{y^4} .
\]
Therefore,
\[
\sum_u \sum_{n,t}
\Bigl| [ H^{M*}Q^M D_\ell J_{k-\ell} Q^M]_{t,n}(u-s,u) \Bigr|^2
\leq \tr {\cal O} \leq N \frac{y+|z|}{y^4}
\]
which shows that
\[
\chi_1 \leq \frac{\bs c \bs\sigma^2 \bs g (y+|z|)}{y^4}
\]
and we get the same bound for $\chi_2$. It remains to use the same argument
as in the first part of the proof to obtain the bound on $\var \tr Q(k,\ell)$.
\\

The derivations leading to the bound on $\var [H^* Q]_{p,q}(k,\ell)$ are
omitted, since they are similar to above.

\bibliographystyle{plain}


\def\cprime{$'$} \def\cdprime{$''$} \def\cprime{$'$} \def\cprime{$'$}
  \def\cprime{$'$} \def\cprime{$'$}

\end{document}